\newtheorem{Thm}{Theorem}
\newtheorem{lem}[Thm]{Lemma}
\newtheorem{cor}[Thm]{Corollary}
\newcommand{\ceil}[1]{\left\lceil #1 \right\rceil}
\newcommand{\drop}[1]{} 
\newcommand{\RM}{{\sc rm}}
\newcommand{\equals}{\stackrel{\operatorname{def}}{=}}
\newcommand{\sig}[2]{\sum_{#1}^{#2}}
\newcommand{\cefr}[2]{\left\lceil\frac{#1}{#2}\right\rceil}
\newcommand{\lef}{\left\{}
\newcommand{\ri}{\right\}}
\newcommand{\infer}{\Longrightarrow}
\newcommand{\beginProof}[1]{\noindent{\textbf Proof:}}
\newcommand{\finishProof}[1]%
{~\newline \noindent{\bf \qed\   End  proof} (of {\bf #1}).\newline}
\newtheorem{Def}{Definition}
\newtheorem{Th}{Theorem}
\newtheorem{Le}{Lemma}
\newtheorem{Coro}{Corollary}
\newtheorem{Rmk}{Remark}
\def\sssec{\subsubsection}
\def\begdef{\begin{Def}  }
\def\enddef{\end{Def}}
\def\begth{\begin{Th} }
\def\endth{\end{Th}}
\def\begle{\begin{Le}  }
\def\endle{\end{Le}}
\def\begco{\begin{Coro}  }
\def\endco{\end{Coro}}
\def\begpro{\begin{proof} }
\def\endpro{\end{proof}}
\def\begex{\begin{example} }
\def\endex{\end{example}}
\def\begrmq{\begin{Rmk}  }
\def\endrmq{\end{Rmk}}
\def\begdes{\begin{description} \item[] }
\def\enddes{\end{description}}
\def\begeq{\begin{equation}  }
\def\endeq{\end{equation}}
\def\begeqno{\begin{equation*}  }
\def\endeqno{\end{equation*}}
\def\begarr{\begin{eqnarray}}
\def\endarr {\end{eqnarray}}
\def\begarrno{\begin{eqnarray*}}
\def\endarrno {\end{eqnarray*}}
\def\begsubeq{\begin{subequations}  }
\def\endsubeq{\end{subequations}}
\def\begsubeqno{\begin{subequations*}  }
\def\endsubeqno{\end{subequations*}}
\newcommand {\rle}[1]{Lemma \ref{#1}}
\newcommand {\req}[1]{Eq. (\ref{#1})}
\title{ Exact Polynomial Time Algorithm for the Response Time Analysis of Harmonic Tasks with Constrained Release Jitter}
\author{ 
Thi Huyen Chau Nguyen, \\ 
               Department of Information Technology, \\ Thang Long University (TLU), Hanoi, Vietnam\\
							 chaunth@thanglong.edu.vn
							\and
							Werner Grass,\\ 
							 Faculty of Computer Science and Mathematics,\\ University of Passau, Germany\\
              grass@fim.uni-passau.de
							\and 
							Klaus Jansen,\\ 
						  Department of Computer Science, \\Christian-Albrechts-University Kiel, Germany\\
							kj@informatik.uni-kiel.de
}
\begin{document} 
\maketitle

\begin{abstract} In some important application areas of hard real-time systems, preemptive sporadic tasks with harmonic periods and constraint deadlines running upon a uni-processor platform play an important role. We propose a new algorithm for determining the exact \emph{worst-case response time} for a task that has a lower computational complexity (linear in the number of tasks) than the known algorithm developed for the same system class. We also allow the task executions to start delayed due to release jitter if they are within certain value ranges. For checking if these constraints are met we define a constraint programming problem that has a special structure and can be solved with heuristic components in a time that is linear in the task number. If the check determines the admissibility of the jitter values, the linear time algorithm can be used to determine the \emph{worst-case response time} also for jitter-aware systems.
\end{abstract}


\section{Introduction}
Hard real-time embedded systems must deliver  functional correct results  related to their initiating events within specified time limits. Such systems are usually modelled as a  composition of a finite number of recurrent tasks with the tasks releasing a potentially infinite sequence of jobs.  In the often used sporadic task model, the jobs arrive at a time distance that is greater than or equal to the inter-arrival time (called period), which thus represents an important task parameter. The processing of a job must be completed at the latest with the relative deadline of the associated task. An important step in the design of such a system is therefore the scheduling analysis, with which compliance with the time conditions is checked,
for the implementation of which further system properties must be introduced. 

In this paper we consider task executions by a single processor, a fixed priority task system and we allow a task being preempted in order to perform a higher priority task. Deadlines may be constrained by values lower as or equal to the corresponding period. A common method of scheduling analysis for these characteristics is response time analysis (RTA) \cite{Joseph1986},\cite{Audsley1993}. 

More recently, real-time systems with harmonic tasks ( the periods are pairs of integer multiples of each other) have received increased attention. This is due in part to the fact that harmonic task systems have at least two advantages over systems with arbitrary periods:   the processor utilization may be larger than in the general case and the \emph{worst-case response times} for the different tasks can be determined in polynomial time \cite{BoMa2013}  whereas in the general case RTA is pseudo-polynomial in the representation of the task system. In the literature we have several case studies in the most important applications fields like avionics \cite{Eisenbrand2010SolvingAA}, automotive \cite{anssi:cea-01779313}, industrial controllers \cite{d7c4f1a1dce54ef9a8883089d808cec5}, robotics \cite{DBLP:conf/rtas/ShihGGCS03}  where harmonic periods are used. If the given periods are not a priori harmonic, they can be made harmonic according to certain criteria from a set of non-harmonic periods with associated allowable value ranges, an appropriate objective function or to  satisfy  end-to-end latency requirements \cite{7176034},\cite{Davare:2007:POH:1278480.1278553},\cite{Mohaqeqi:2016:PFO:2997465.2997490}.

The release jitter of a task is the maximum difference between the arrival times and the release times  over all jobs of this task and may extend its \emph{worst-case response time} \cite{Audsley1993}. The combination of harmonic periods with rate monotonic prioritization leads to a reduction of the jitter problem, as both release jitter and execution time variation can be kept small since every job execution of a task is started at the same time distance from the lower period limit. With an arbitrary prioritization however this advantage no longer exists.

The release jitter concept can also be used to replicate other phenomena that have a corresponding effect on response times as the following two examples show. This makes release jitter all the more important in response time analysis. 

In \cite{TINDELL19951163}  response time analysis introduced for fixed priority scheduling on a uni-processor has been adapted and applied to the scheduling of messages on Controller Area Networks (CAN). 
Instead of a release jitter, we now have a queue jitter of the message with the same effect on the delivery time of that message as the release jitter on the end time of a job.  
 
A real-time job can suspend itself while waiting for an activity to complete. The dynamic self-suspension model allows a job of  a task to suspend itself at any time instance before it finishes as long as its worst-case self-suspension time is not exceeded. This property of real-time system may be modeled by a virtual jitter as discussed e.g. in \cite{DBLP:conf/rtcsa/ChenBH017}.

In the following we first introduce an iterative method to determine the exact \emph{worst-case response times} of harmonic tasks without considering release jitter. This method can be extended to the case that  all tasks have the same release jitter.  Finally, we show that this method can also be used for variable release jitter, provided that the jitter values fulfill certain restrictions, which we check with a linear-time algorithm.

\subsection{Related work}
In 1973, Liu and Layland \cite{Liu1973} had generalized the result on priority assignment of \cite{Fin67} to demonstrate the optimality of Rate Monotonic scheduling (\RM). They also presented a simple sufficient schedulability test for periodic fixed priorities tasks under \ \RM \ and the assumption that the deadline of a task is equal to its period. The sufficient test does not give an answer to the question whether  sets with $n$ tasks that lead to a higher total processor utilization ($>n( 2^{1/n}-1)$) can actually be scheduled or not. Kuo and Mok \cite{KuoMok1991} have shown that it is sufficient to create harmonic periods in order to eliminate uncertainty about schedulability. In this case it is sufficient to keep the total processor utilization $\leq 1$ in order to schedule the task system with the schedule policy considered in \cite{Liu1973}. 

In \cite{d7c4f1a1dce54ef9a8883089d808cec5} Xu et al. also consider harmonic task sets, but this time the \emph{worst-case response times} $R_i$ of the tasks are determined with a binary search process. Thus constraint deadlines $D_i$ with $R_i \leq D_i < T_i$  ($T_i$ denote the period of task $\tau_i$) can also be allowed. This method takes advantage of the fact that in the case of rate-monotonic prioritization of harmonic tasks, the start times of the jobs of a task $\tau_i$ always have the same distance to the previous period which is defined by the \emph{worst-case response time} of the task with the next higher priority. 

Bonifaci et al \cite{BoMa2013} no longer assume that the priorities decrease with longer periods (\RM), but allow any fixed priorities that are not dependent on any other task parameters. The basic task and scheduling model is the same as in our approach but the schedulability test is quite different. In order to determine the worst-case response time of a task $\tau_n$, they first arrange the tasks according to non-increasing periods $\tau_1\dots\tau_{n-1}$ such that after this reordering $\tau_1$ has the largest and $\tau_{n-1}$ the lowest period. They have shown that the response time for $\tau_n$ must be in the interval $[0,C_n T_1]$ where $C_n$ is the worst-case execution time of task $\tau_n$. This interval length is now gradually reduced to the task periods $T_2\dots T_{n-1}$, whereby they have to search for the right position of the smaller interval in the potentially larger  predecessor interval. In contrast, we follow the standard approach in which a modified fixed point iteration is carried out for determining the response time $R_n = C_n +\sum_{j<n}C_j\left\lceil R_n/T_j\right\rceil$, which can now be carried out in exactly $n$ iteration steps because of the harmonic periods.

The approaches described so far for the handling of task systems with harmonic periods do not allow a model extension to take release jitter into account. Rather, one must resort to methods that have been developed for the treatment of arbitrary periods. Audsley \cite{Audsley1993} and Tindell \cite{Tin94b} have definded the response time of task systems with jitter and Sjodin and Hansson \cite{Sjodin1998} have proved that  fixed point iteration can be applied for determining  the response time. We first use our approach to determine the response time when all tasks have the same maximum jitter. We then investigate task systems in which each task can have different jitter and specify restrictions for the jitters so that the fixed point determination can also be carried out with jitter-aware task systems in linear time. 

\subsection{This research}
The first aim of this research is to develop an algorithm that  determines the exact worst-case response time for fixed priority preemptive sporadic harmonic tasks with \emph{constrained deadlines} running on an uni-processor platform. Although these properties are equal with that in \cite{BoMa2013} our algorithm has  a lower computational complexity. It is based on the standard RTA approach which performs a fixed point iteration on the basis of the \emph{processor demand function} which takes into account the worst-case execution time of the examined task as well as its preemptions by higher-priority tasks (\emph{total interference}). In contrast to the standard approach we present a parametric  approximation of this total preemption time by higher priority tasks that contributes to the response time of the task considered. This approximation proceeds in $n$ phases of fine-tuning to get the exact \emph{total interference} hence arriving to the exact response time. 

The second objective is to include possible release jitter of the tasks. The necessary modification of the algorithm for jitter-free tasks is straight forward if we assume that all tasks have the same jitter. Then the approximations have only to be corrected by an additional jitter term. To handle the more general jitter-aware case  we introduce a different calculation rule for the preemption time of a task by higher prioritized tasks which, however, gives the same fixed point. This new formula has a certain formal similarity with the standard formula for jitter-aware systems. If the jitter of the task with the smallest period is the largest and the other jitters fulfill further constraints, we take the largest jitter as a constant jitter for all tasks and use the algorithm introduced for this case to determine the worst-case response time. Finally, we also allow other jitter values, but have to check whether certain constraints are met and have to determine the constant jitter value that is used to determine the worst-case response time. For checking purposes we define a constraint programming problem that has a special structure and can be solved with heuristic components in a time that is linear in $n$.

\subsection{Organization} We formally define the terminology, notation and task model in Section \ref{sec:model}.
In Section \ref{sec:prelresult}, we present our new algorithm for getting the \emph{worst-case response time} for a task in a time that is linear in $n$ assuming that the higher priority tasks are ordered by non-increasing periods. The correctness of the algorithm is proved in Section \ref{sec:proof}. In the rest of the paper we consider jitter-aware systems. In Subsection \ref{samereljit} we begin with modifying the algorithm introduced in Section \ref{sec:prelresult} for systems with the same jitter for all tasks. The new formula to determine the preemption time by higher priority tasks is introduced in Subsection \ref{sec:diff_models} and it is shown that the fixed-point iteration results in the same worst-case response time as the usually used formula. In Subsection \ref{sec:restricted}, we apply the result to task systems where the task with the lowest period has the largest jitter. Finally, we loose the restrictions on jitter and define a constraint programming problem in Subsection \ref{sec:looserestjit} and introduce an algorithm to solve it in Subsection \ref{constprog}.

\section{System model and background}\label{sec:def} \label{sec:model}

In this work, we analyze a set $\Gamma=\left\{\tau_1, \tau_2, \dots, \tau_n \right\}$ of
$n$ hard real-time sporadic \emph{tasks}, each one releasing a sequence of
\emph{jobs}.  Task $\tau_i$ is characterized by:
\begin{itemize}
\item a minimum interarrival time $T_i$ (that we call \emph{period},
  in short) between the arrival of two consecutive jobs,
\item a worst-case \emph{execution time} $C_i$, 
 and
\item a \emph{relative deadline} $D_i$.

\end{itemize}
In Section \ref{sec:SwJit} we will extend this model by release jitter $J_i$.
The task periods are assumed to be harmonic that is $T_i$ divides $T_j$ or vice versa $\left(T_i| T_j \right.$ or $\left. T_j|T_i\right)$. 
All task parameters are positive integer numbers. Notice that, by
properly multiplying all the parameters by an integer, rational
numbers are also allowed.  We assume \emph{constrained deadlines} i.e., $C_i \leq D_i\leq
T_i$. The ratio $U_i = C_i/T_i$ denotes the
\emph{utilization} of task $\tau_i$, that is, the fraction of time
required by $\tau_i$ to execute.

At the time instants denoted by $a_{i,j}$, the $i$-th task arrives and it is released for execution at a time $r_{i,j}\geq a_{i,j}$. A released task requests the execution of its $j$-th \emph{job} for an amount $C_i$ of time. 

The maximum difference $r_{i,j}- a_{i,j}$ over all $j$ is called release jitter $J_i$ and we start our presentation with the assumption of $J_i$ being 0 for all $1\leq i \leq n$.  

Two consecutive arrivals of the same task cannot be separated by less than $T_i$, that is,
\[
\forall i j,\qquad  a_{i,j+1}\geq a_{i,j}+T_i.
\]
We denote the \emph{finishing time} of the $j$-th job of the $i$-th
task by $f_{i,j}$. The \emph{worst-case response time} $R_i$ of a task is \cite{Davis2008}
\begin{equation}
  \label{eq:defRespTime}
  R_i = \max_{j=1,2,\ldots}\{f_{i, j} - r_{i, j}\} 
\end{equation}
A task set is said to be \emph{schedulable} when the maximum time period between the release and the finishing time of task $\tau_i$ is lower than the relative deadline  \cite{Davis2008}:
\[
\forall i,\qquad R_i\leq D_i.
\]

In this paper we assume that tasks are scheduled over a single
processor by preemptive Fixed Priorities (FP). The tasks are ordered
by decreasing priority: $\tau_i$ has higher priority than $\tau_j$ if
and only if $i<j$. Also, we use  abbreviated notations for the sum of  utilizations of tasks with successive indexes. We set $U_{\iota\ldots\kappa} = \sum_{i=\iota\ldots\kappa}U_i$. Correspondingly we denote $\sum_{i=\iota\ldots\kappa}C_i$ with $C_{\iota\ldots\kappa}$.

Also, we recall some basic notions related to fixed-priority
scheduling. In 1990, Lehoczky~\cite{Lehoczky1990} introduced the
notion of \emph{level-$i$ busy period}, which represents the intervals
of time when any among the higher priority tasks is running,
in the \emph{critical instant}.  

In jitter-free systems the \emph{critical instant} occurs when all tasks $\tau_i$ are simultaneously released ~\cite{Liu1973}. Without loss of generality such an instant is set equal to zero. In order to check the schedulability of a jitter-free task system, it is therefore sufficient to test the response times for the first jobs on compliance with the condition $R_i \leq D_i$.

For simplification in the notation, from now on  we consider the \emph{worst-case response time} of task $\tau_n$ but  our results could easily be applicable for all $i < n$. 

The \emph{total interference} $I_{n-1}\left(t\right)$ describes the amount of time that is taken for executing the  tasks with a higher priority then $\tau_n$ during the time interval $[0,t)$. 
\begin{equation}\label{eq:0}
I_{n-1}\left(t\right) = \sum_{i=1}^{n-1} C_i\cdot \left\lceil \frac{t}{T_i}\right\rceil
\end{equation} 
The time period $t-I_{n-1}\left(t\right)$ is therefore left during the interval $[0,t)$ for executing task $\tau_n$. The total time demand for a complete execution of the $n$-th task is given by the \emph{processor demand function}:
\begin{equation} \label{eq:W} 
W_n(t) = C_n + I_{n-1}(t) = C_n + \sig {i=1}{n-1} \cefr{t} {T_i}C_i
\end{equation}
The \emph{worst-case response time} is the point in time  at which $t-I_{n-1}\left(t\right) = C_n$. We therefore determine the \emph{worst-case response time} $R_n$ as the least fixed point \cite{Audsley1993}: 
\begin{equation} \label{eq:R}
R_n \equals \min \lef t| W_n(t) =t\ri
\end{equation}

\section{Preliminary results}\label{sec:prelresult}
 According to \eqref{eq:R}  and as proven in \cite{Sjodin1998} $R_n$ may be determined by an iterative technique starting with $R_{n}^{(0)}$ and  producing the values  $R_{n}^{(1)}, R_{n}^{(2)}, R_{n}^{(3)}, ...$ 
and approximating $R_n$ by applying the recurrence: 
\begin{equation}
R_{n}^{(0)}=\frac{C_{n}}{1-U_{1\ldots n-1}}\ \ \ \ \ 
R_{n}^{(k)} = C_{n}+ \sum_{i=1}^{n-1} C_i\cdot\left\lceil \frac{R_{n}^{(k-1)}}{T_i} \right\rceil
\end{equation}
 The iteration stops when $R_{n}^{(k)}=R_{n}^{(k-1)}$.
Although  the iteration converges for $U_{1\ldots n} < 1$ the number of iteration steps can be very high (pseudo-polynomial complexity). 

One of the main results of our paper is the introduction a completely different sequence of exactly $n$ approximations to the true value of $R_{n}$. It is presented in Theorem \ref{thm1}.
 
\textcolor[rgb]{0,0,0}{In preparation of the theorem, we introduce Lemmas that justify the admissibility of a task reordering. Such rearrangements were also made in \cite{BoMa2013}  and \cite{Bini2015} }.


For this purpose, we introduce the following lemma: 

\begle \label{le:I} The order in which the tasks $\tau_j$
with $j <n$  are executed is immaterial for the total interference of these higher priority tasks to $\tau_n$.
\endle
\begpro By construction of \req{eq:0}, we have in any time interval $[0,t]$, the total interference by  higher priority tasks to $\tau_n$ is $I_{n-1}(t) = \sum_{i=1}^{n-1} C_i\cdot \left\lceil \frac{t}{T_i}\right\rceil$. Since for any given $t$, the reorder of $n-1$ higher priority tasks of $\tau_n$ simply equates the reorder of $n-1$ terms of the accumulate sum, the interference time $I_{n-1} (t)$ remains the same. The lemma follows. 
\endpro

From this lemma, we obtain that the interference of higher priority tasks to $R_{n}^{(k)}$, i.e., $I_{n-1} (R_{n}^{(k-1)})$, is independent of the order in which tasks $\tau_i$ with $i <n$ are executed. This leads us to the idea of computing the \emph{worst-case response time} of $\tau_n$ by rearranging the order of its higher priority tasks. 

\begco \label{co:I} The order in which the tasks $\tau_j$
with $j <n$  are executed is immaterial for calculating the worst-case response time of task $\tau_n$.
\endco
\begpro Directly from \rle {le:I}, for any order in which the tasks $\tau_j$ with $j <n$  are executed, the interference time $I_{n-1} (t)$ remains the same with any given $t$. Consequently, the recursive equation $t= C_n + I_{n-1} (t)$ would obtain the same solution $t= R_n$. Hence to compute $R_n$ we could choose an arbitrary order of these higher priority tasks and the corollary is proved. 
\endpro

Also note that when this method is successively applied for all $\tau_i$ in the system, for each round of computation of $R_i$ since the priority of the $i$-th task as well as the set of its higher priorities remained unchanged, the corresponding re-ordering will be transparent to the $(i+1)$-th task, i.e., it is not a task priority re-assignment and only pre-process preparation for our \emph{worst-case response time} analysis. 

Now in order to keep the calculation of the indices simple in the various processing steps described below, we choose an inverse rate-monotonic order. To formally describe this reordering we introduce a bijective mapping 
\begin{equation}\label{eq:map}
\pi: 1\ldots n-1 \rightarrow 1\ldots n-1,
\end{equation} 
in which $\pi(i) = k$ signifies that task $\tau_k$ with priority $k$ is at position $i$ in the new order. The reverse rate monotonic order satisfies the condition that for all $i < j$ period $T_{\pi(j)}$ divides the period $T_{\pi(i)}$ having the priorities $\pi(i)$ and \textcolor[rgb]{0,0,0}{$\pi(j)$}, respectively.

\begin{Thm}\label{thm1} 
We are given a set of $n$ harmonic tasks in reverse rate monotonic order. Then the least fixed point of the equation
\begin{equation}\label{eq:Rnd}
R_n=C_{n}+\sum_{i=1}^{n-1}C_{\pi(i)}\left\lceil R_n/T_{\pi(i)}\right\rceil
\end{equation}
 can be obtained by applying the iterative formula:
\begin{equation}\label{eq:bc}
\widetilde{R}_n^{(0)}=\frac{C_{n}}{1-U_{\pi(1)\ldots \pi(n-1)}}	
\end{equation}
\begin{multline}\label{eq:itst}
	1\leq i \leq n-1,\ 	\  \widetilde{R}_n^{(i)}=\widetilde{R}_n^{(i-1)}+\frac{ C_{\pi(i)} \left(\left\lceil \frac{\widetilde{R}_n^{(i-1)}}{T_{\pi(i)}}\right\rceil-\frac{\widetilde{R}_n^{(i-1)}}{T_{\pi(i)}}\right)}{1-U_{\pi(i+1)\ldots \pi(n-1)}}
\end{multline}
we finally get $R_n = \widetilde{R}_n^{(n-1)}$.
\end{Thm}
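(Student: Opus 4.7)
The plan is to prove by induction on $i \in \{0, 1, \dots, n-1\}$ the strengthened identity
$$\widetilde{R}_n^{(i)} \bigl(1 - U_{\pi(i+1)\ldots\pi(n-1)}\bigr) = C_n + \sum_{k=1}^{i} C_{\pi(k)} \left\lceil \frac{R_n}{T_{\pi(k)}}\right\rceil.$$
I would take the base case $i = 0$ directly from \eqref{eq:bc}; at $i = n-1$ the factor on the left is $1$ and the right-hand side is exactly $R_n$ by the fixed-point equation \eqref{eq:Rnd}, which delivers the theorem. For the inductive step I would substitute the recurrence \eqref{eq:itst}, use the splitting $1 - U_{\pi(i)\ldots\pi(n-1)} = (1 - U_{\pi(i+1)\ldots\pi(n-1)}) - U_{\pi(i)}$, and observe that after routine algebra the entire step reduces to the single ceiling equality
$$\left\lceil \frac{\widetilde{R}_n^{(i-1)}}{T_{\pi(i)}} \right\rceil = \left\lceil \frac{R_n}{T_{\pi(i)}} \right\rceil.$$

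Setting $m = \lceil R_n/T_{\pi(i)}\rceil$, this equality is equivalent to the two-sided bound $(m-1) T_{\pi(i)} < \widetilde{R}_n^{(i-1)} \leq m T_{\pi(i)}$. The upper bound follows easily from $\widetilde{R}_n^{(i-1)} \leq R_n$: splitting the sum in \eqref{eq:Rnd} at index $i$ and replacing each tail ceiling by its underestimate $R_n/T_{\pi(k)}$ yields $R_n \bigl(1 - U_{\pi(i)\ldots\pi(n-1)}\bigr) \geq C_n + \sum_{k=1}^{i-1} C_{\pi(k)} \lceil R_n/T_{\pi(k)}\rceil$, which by the inductive hypothesis equals $\widetilde{R}_n^{(i-1)} \bigl(1 - U_{\pi(i)\ldots\pi(n-1)}\bigr)$.

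I expect the lower bound to be the main obstacle, and this is where the harmonic periods and the reverse rate monotonic ordering are essential. The strategy here is to evaluate the processor demand function \eqref{eq:W} at the integer multiple $t = (m-1) T_{\pi(i)}$. Two structural facts make the evaluation clean: for $k < i$ one has $T_{\pi(k)} = n_k T_{\pi(i)}$ for some positive integer $n_k$, and the nested-ceiling identity gives $\lceil R_n/T_{\pi(k)}\rceil = \lceil m/n_k\rceil \geq \lceil (m-1)/n_k\rceil$; for $k \geq i$, $T_{\pi(k)}$ divides $T_{\pi(i)}$, so $(m-1) T_{\pi(i)}$ is an exact multiple of $T_{\pi(k)}$ and the tail contribution becomes exactly $(m-1) T_{\pi(i)}\, U_{\pi(i)\ldots\pi(n-1)}$ with no rounding slack. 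Since $(m-1) T_{\pi(i)} < R_n$ and $R_n$ is the \emph{least} fixed point of $W_n$, I can invoke the strict inequality $W_n((m-1) T_{\pi(i)}) > (m-1) T_{\pi(i)}$. Rearranging it, dividing by the positive utilization gap $1 - U_{\pi(i)\ldots\pi(n-1)}$, and applying $\lceil m/n_k\rceil \geq \lceil (m-1)/n_k\rceil$ to lift the head sum from $(m-1)$-values to $m$-values will yield $\widetilde{R}_n^{(i-1)} > (m-1) T_{\pi(i)}$, thereby closing the ceiling equality and the induction.
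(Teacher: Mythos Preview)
Your proof is correct and takes a genuinely different route from the paper's.

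The paper never writes an invariant involving the unknown $R_n$. Instead it introduces a family of auxiliary fixed-point equations $t_i=K_i(t_i)$, where $K_i$ keeps the ceilings for tasks $\pi(1),\dots,\pi(i)$ and linearizes the rest, and shows that each $t_i$ is reached from $t_{i-1}$ in a single iteration of $K_i$. The heart of that argument is a bespoke nested-ceiling identity (their Lemma~\ref{lem:nestedCeil}): $\lceil x+(1-x/z)\lceil z\rceil\rceil=\lceil z\rceil$ for $0<x\le z$, applied with carefully chosen $x$ and $z$ to conclude $\lceil t_i^{(1)}/T_{\pi(i)}\rceil=\lceil t_{i-1}/T_{\pi(i)}\rceil$, so that $K_i$ is constant between $t_{i-1}$ and $t_i^{(1)}$.

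Your invariant instead ties every iterate directly to $R_n$, and your ceiling equality $\lceil \widetilde R_n^{(i-1)}/T_{\pi(i)}\rceil=\lceil R_n/T_{\pi(i)}\rceil$ is obtained by sandwiching: the upper bound $\widetilde R_n^{(i-1)}\le R_n$ from $\lceil x\rceil\ge x$, and the lower bound $(m-1)T_{\pi(i)}<\widetilde R_n^{(i-1)}$ from the strict inequality $W_n(t)>t$ for $t<R_n$. The harmonic structure enters through the standard identity $\lceil\lceil y\rceil/n\rceil=\lceil y/n\rceil$ and the fact that $(m-1)T_{\pi(i)}$ is an exact multiple of every $T_{\pi(k)}$ with $k\ge i$. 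This is more elementary than the paper's Lemma~\ref{lem:nestedCeil}, and it makes explicit that the minimality of $R_n$ is what drives the lower bound. The trade-off is that the paper's induction is ``self-contained'' (each step refers only to $t_{i-1}$, never to the as-yet-unknown $R_n$), which arguably generalizes more smoothly to the jitter variants treated later; your argument uses $R_n$ as a fixed anchor throughout, which is cleaner for this theorem in isolation.
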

The iteration can be stopped once $\widetilde{R}_n^{(i-1)}$ is an integer multiple of $T_{\pi(i)}$ i.e., if $\widetilde{R}_n^{(i)}= \widetilde{R}_n^{(i-1)}$ holds. For then $\widetilde{R}_n^{(i-1)}$ is also an integer multiple of $T_{\pi(j)} \leq T_{\pi(i)}$.

Eq. \eqref{eq:Rnd} describes the usual form of the recursion to determine $R_n$. Note that after changing the order of the tasks the value of the sum remains the same, i.e., we could also write the terms in the sum as $C_{i}\left\lceil t/T_{i}\right\rceil$ without changing the result of the sum.

Before we prove  Theorem 1 in the next section, let us introduce some properties of the result.
The calculation of $R_n$ ends at the latest after $n$  steps and therefore has a linear complexity. Note that only $n-1$ ceiling functions have to  be applied. In comparison, the search algorithm in \cite{BoMa2013} has the complexity $\mathcal O(n\cdot log (T_{\pi (1)}))$ with $T_{\pi(1)} = max_{1\leq i \leq n}(T_{\pi(i)})$. In \cite{d7c4f1a1dce54ef9a8883089d808cec5} an algorithm has been proposed that is also based on a binary search but with a reduced complexity  $\mathcal O(log(T_{\pi(1)})-log(T_{\pi(n)}))$ to compute the response time of  task $\tau_n$ if the priorities are rate monotonic, i.e. decrease with increasing period length. If $T_{\pi(n)}=2^{n{-}1}$ our algorithm and that in \cite{d7c4f1a1dce54ef9a8883089d808cec5} have about the same complexity but our algorithm can be applied to arbitrary fixed-priorities. 
Considering \cite{BoMa2013} and our algorithm, we must add the time required for sorting to get the complexity of the complete algorithm. 


\section{Proof of Theorem \ref{thm1}}\label{sec:proof}
In order to obtain the result of Theorem \ref{thm1}, in this section we would present a parametric  approximation of the \emph{total interference} of higher priority tasks that contributes to the response time of $\tau_n$. This approximation proceeds in $n$ phases of fine-tuning to get the exact \emph{total interference} hence arriving to the exact \emph{worst-case response time}. The main part of this section is the prove that this fine-tuning can be performed in an inductive fashion, of which each phase now has a constant computational complexity.   

 For this purpose, first of all we introduce a set of functions  $\widetilde{I_i}\left(t\right)$ with changing $i$ represents varying degrees of approximation of the \emph{total interference} $I_{n-1}\left(t\right)$. To define such functions, we partition the task set $\Gamma-\tau_{n}$ into two disjoint subsets $\Gamma^{\left\lceil \right\rceil}_{i} $ and $\Gamma^{/}_{i}$. We start with $\Gamma^{\left\lceil \right\rceil}_{0}=\emptyset $ and $\Gamma^{/}_{0}=\Gamma-\tau_{n}$  and terminate with $\Gamma^{\left\lceil \right\rceil}_{n-1}=\Gamma-\tau_{n} $ and $\Gamma^{/}_{n-1}=\emptyset$.
 For the functions formed therebetween we produce $\Gamma^{\left\lceil \right\rceil}_{i}=\left\{\tau_{\pi(1)}\dots.\tau_{\pi(i)}\right\}$ and $\Gamma^{/}_{i}=\left\{\tau_{\pi(i+1)}\ldots \tau_{\pi(n-1)}\right\}$  with $0$ $\leq i\leq n-1$. The indexes of the tasks in the set $\Gamma^{/}_{i}$ determine which addends in the definition equation  of $I_{n-1}\left(t\right)$ are replaced by their linear lower bounds  i.e., the rule $x\leq\left\lceil x\right\rceil$ is applied. Its approximations are defined as follows:
\begin{equation}\label{eq:I}
0\leq i \leq n-1, \ \ \ \widetilde{I}_{i}\left(t\right) = \sum_{i+1\leq j \leq n-1} U_{\pi(j)}\cdot t + \sum_{1 \leq j \leq i} C_{\pi(j)}\left\lceil \frac{t}{T_{\pi(j)}}\right\rceil
\end{equation}
 The left sum is formed by the elements of $\Gamma^{/}_{i}=\left\{\tau_{\pi(i+1)}\ldots \tau_{\pi(n-1)}\right\}$ and the right sum by the elements of $\Gamma^{\left\lceil \right\rceil}_{i}=\left\{\tau_{\pi(1)}\dots.\tau_{\pi(i)}\right\}$. Also note that by this construction, $I_{n-1}\left(t\right)=\widetilde{I}_{n-1}\left(t\right)$.

The difference of two functions with immediately successive indexes is
 \[\widetilde{I_i}\left(t\right)-\widetilde{I}_{i-1}\left(t\right)= C_{\pi(i)}\left\lceil \frac{t}{T_{\pi(i)}}\right\rceil-U_{\pi(i)}\cdot t \]

It follows that the two functions are equal for all times that are multiples of $T_{\pi(i)}$ and that the maximum distance over all time instances  is less than $ C_{\pi(i)}$. 
Fig. \ref{fig:1} shows an example of two functions $C_{n}+\widetilde{I_{i}}\left(t\right)$ and $C_{n}+\widetilde{I_i}_{-1}\left(t\right)$. We are interested in the points of intersection with the identify function $Id(t)=t$ and want to construct the solution of $t_i= C_{n}+\widetilde{I_i}\left(t_i\right)$ knowing the solution of $t_{i-1}= C_{n}+\widetilde{I_i}_{-1}\left(t_{i-1}\right)$. We observe in Fig. \ref{fig:1}, that $\widetilde{I_i}\left(t\right)$ is linear in the time interval $\nu\cdot T_{\pi(i)} < t \leq (\nu+1)\cdot T_{\pi(i)}$ with $\nu \in \mathbb{N}$ and the two points of intersection (at the start point of the vertical arrow and the end point of the horizontal arrow) are within the same period of task $\tau_{\pi(i)}$.
\begin{figure*}[!htp]
\centering
\includegraphics[width=0.60\textwidth]{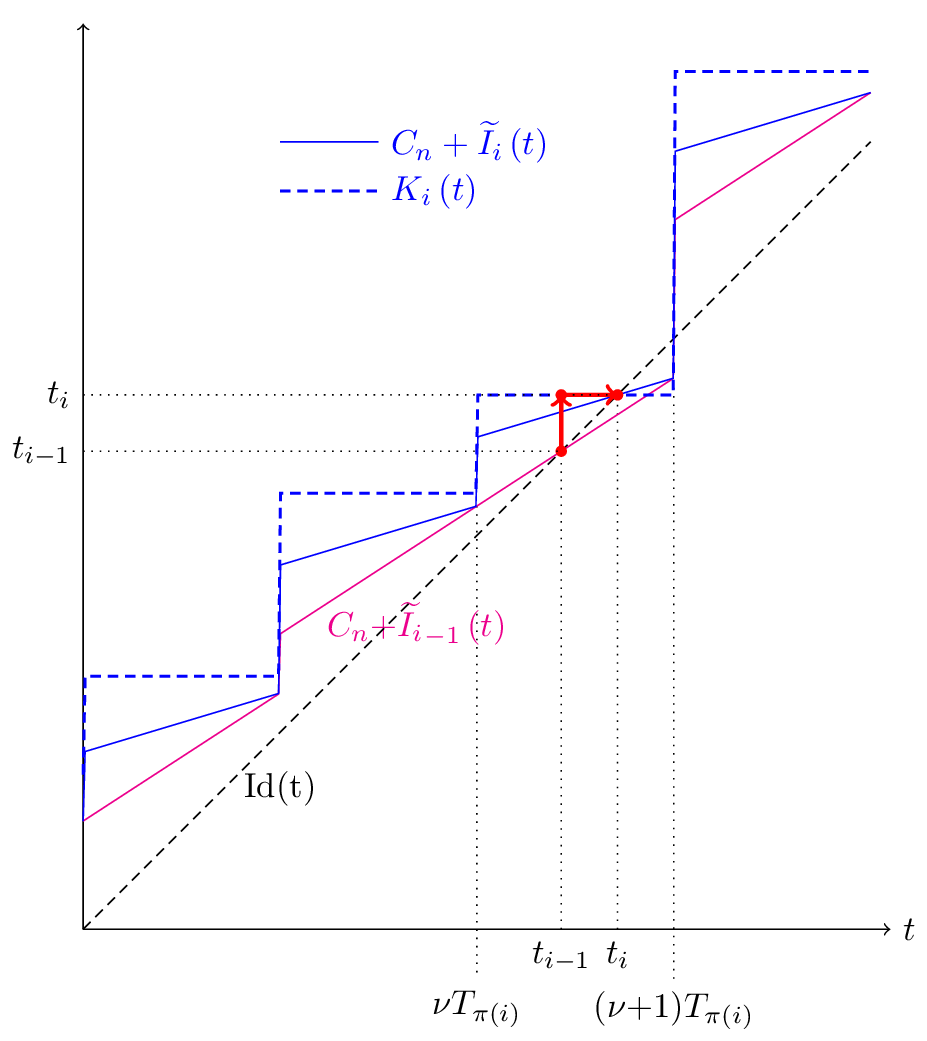}
\caption{The figure shows an example of functions $C_n+\widetilde{I_i}\left(t\right)$, $C_n+\widetilde{I_i}_{-1}\left(t\right)$,  $K_i\left(t\right)$, and $Id\left(t\right)$ as well as the solutions of $t_{i-1}=C_{n}+\widetilde{I_{i}}_{-1}\left(t\right)$ and $t_i=K_i(t_i)$}
	\label{fig:1}	
\end{figure*}

To simplify the process of finding $t_i$ when knowing $t_{i-1}$ we introduce another set of functions which have no subterms that are linear in time and would significantly lengthen the number of iterations until the fixed point is found. We obtain the more suitable equality by manipulating the set of equations $t=C_{n}+\widetilde{I_i}\left(t\right)$.  In doing so, the equations are solved for the time variable $t$ as far as possible i.e., leaving the ceiling terms. We get $t=K_i\left(t\right)$ with
\begin{equation}\label{eq:t_i}
0\leq i \leq n-1,\  K_i\left(t\right) =  \frac{C_{n}+\sum_{1 \leq j \leq i} C_{\pi(j)}\left\lceil \frac{t}{T_{\pi(j)}}\right\rceil}{1- U_{\pi(i+1)\ldots \pi(n-1)}} 
\end{equation}
The solution defined by any of the equations $t_i=K_i\left(t_i\right)$ is equal to that of the corresponding equation $t_i=C_{n}+\widetilde{I_i}\left(t_i\right)$ as shown in the following Lemma.
\begin{lem}\label{fixed_point_ident}
$t_i=C_{n}+\widetilde{I_i}\left(t_i\right) \Leftrightarrow t_i=K_i(t_i)$
\end{lem}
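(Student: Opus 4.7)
The plan is to establish the equivalence by a direct algebraic manipulation, isolating the linear-in-$t_i$ terms on one side and dividing.

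First I would expand the fixed-point equation $t_i = C_n + \widetilde{I}_i(t_i)$ using the definition \eqref{eq:I}, which gives
\begin{equation*}
t_i = C_n + \sum_{i+1 \leq j \leq n-1} U_{\pi(j)} \cdot t_i + \sum_{1 \leq j \leq i} C_{\pi(j)} \left\lceil \frac{t_i}{T_{\pi(j)}} \right\rceil.
\end{equation*}
Next I would collect the terms linear in $t_i$ on the left side. Using the abbreviation $U_{\pi(i+1)\ldots \pi(n-1)} = \sum_{i+1 \leq j \leq n-1} U_{\pi(j)}$ introduced earlier, this rearrangement yields
\begin{equation*}
t_i \left(1 - U_{\pi(i+1)\ldots \pi(n-1)}\right) = C_n + \sum_{1 \leq j \leq i} C_{\pi(j)} \left\lceil \frac{t_i}{T_{\pi(j)}} \right\rceil.
\end{equation*}

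The key observation is that $1 - U_{\pi(i+1)\ldots \pi(n-1)}$ is strictly positive whenever the full task set is schedulable, since the total utilization $U_{1\ldots n}$ is bounded above by $1$. I would note this explicitly so that dividing both sides by this factor is a reversible operation. Division then gives exactly the expression $K_i(t_i)$ on the right-hand side, establishing the forward implication $t_i = C_n + \widetilde{I}_i(t_i) \Rightarrow t_i = K_i(t_i)$. The reverse direction is obtained by multiplying $t_i = K_i(t_i)$ by $1 - U_{\pi(i+1)\ldots \pi(n-1)}$ and moving the linear term back to the right-hand side, which simply reverses the above chain of equalities.

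There is no substantive obstacle here; the argument is one of bookkeeping. The only subtle point is justifying that we are not dividing by zero (and hence that the equivalence holds as a logical equivalence and not merely an implication under side conditions). I would therefore state once, prior to the manipulation, that $U_{\pi(i+1)\ldots \pi(n-1)} \leq U_{1\ldots n-1} < 1$, which is the standing assumption needed for any response-time analysis to make sense, and then present the two-line algebraic chain of equivalences.
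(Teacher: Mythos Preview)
Your proposal is correct and follows essentially the same approach as the paper: expand $\widetilde{I}_i(t_i)$, move the linear-in-$t_i$ term to the left, and divide (respectively multiply) by $1-U_{\pi(i+1)\ldots\pi(n-1)}$ to pass between the two fixed-point equations. Your explicit remark that $1-U_{\pi(i+1)\ldots\pi(n-1)}>0$ makes the division reversible is a welcome clarification that the paper's proof leaves implicit.
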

\begin{proof}
Let be 
\begin{equation}
t_i= C_n+ \widetilde{I}_{i}\left(t_i\right) = C_n+\sum_{i+1\leq j \leq n-1} U_{\pi(j)}\cdot t_i + \sum_{1 \leq j \leq i} C_{\pi(j)}\left\lceil \frac{t_i}{T_{\pi(j)}}\right\rceil  \tag{*}
\end{equation}
Using some algebra we get
\begin{equation}
t_i=  \frac{C_n +\sum_{1 \leq j \leq i} C_{\pi(j)}\left\lceil \frac{t_i}{T_{\pi(j)}}\right\rceil}{1-\sum_{i+1\leq j \leq n-1} U_{\pi(j)} }=K_i(t_i) \tag{**}
\end{equation}
which proofs the $\Rightarrow$ direction. We can also start with eq. (**) and make the reverse conversion to eq. (*). This proves the $\Leftarrow$ direction.
\end{proof}

In Fig.\ref{fig:1} $K_i\left(t\right)$  (dashed line) and $C_{n}+\widetilde{I_i}\left(t\right)$  (solid line) have the same point of intersection with the identity function.  Note also that in the figure $K_i(t)$ is constant in the interval $\nu\cdot T_{\pi(i)} < t \leq (\nu+1)\cdot T_{\pi(i)}$ hence $K_i(t_{i-1})=K_i(t_i)$. 

The use of the functions \eqref{eq:t_i} is not new. In \cite{Lu2006} these functions are used to reduce the number of iterations applying the RTA method.

The solution of any equation $t_i =K_i\left(t_i\right)$ can be found by an iteration:
\begin{equation}\label{eq:fp}
  t_i^{\left(k{+}1\right)} =  \frac{C_{n}+\sum_{1 \leq j \leq i} C_{\pi(j)}\left\lceil \frac{t_i^{(k)}}{T_{\pi(j)}}\right\rceil}{1- U_{\pi(i+1)\ldots \pi(n-1)}}, \ \ \ 0\leq i \leq n-1\  
\end{equation}
ending when we obtained the fixed point $t_i^{\left(k+1\right)}=t_i^{\left(k\right)}$, which is a solution of $t_i=K_i(t_i)$.

For the iterative calculation of the fixed point of $t_i=K_i(t_i)$, it is important that this fixed point is greater than or equal to the fixed point of the equation $t_{i-1}=K_{i-1}(t_{i-1})$, as shown in the following Lemma.
\begin{lem}\label{fixed_point_comparison}
 $t_i \geq t_{i-1}$.
\end{lem}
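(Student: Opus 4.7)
The plan is to compare the two fixed points by first showing a pointwise dominance of the generating functions and then invoking a standard monotone fixed-point comparison.

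First I would pass to the equivalent form $t_i = C_n + \widetilde{I}_i(t_i)$ (and similarly for $t_{i-1}$) via \rle{fixed_point_ident}. The excerpt has already computed
\begin{equation*}
\widetilde{I}_i(t) - \widetilde{I}_{i-1}(t) = C_{\pi(i)}\left(\left\lceil \frac{t}{T_{\pi(i)}}\right\rceil - \frac{t}{T_{\pi(i)}}\right),
\end{equation*}
which is non-negative for every $t \geq 0$ since $\lceil x\rceil \geq x$. Writing $F(t) := C_n + \widetilde{I}_i(t)$ and $G(t) := C_n + \widetilde{I}_{i-1}(t)$, I therefore have $F \geq G$ pointwise, and both functions are monotone non-decreasing in $t$ (their linear parts are non-decreasing and every ceiling term is a non-decreasing step function).

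Second I would apply the monotone fixed-point principle: if $F \geq G$ pointwise and both are monotone non-decreasing, then the least fixed point of $F$ is at least the least fixed point of $G$. The proof is an iteration from $0$: by induction on $k$, $F^k(0) \geq G^k(0)$, where the induction step $F^{k+1}(0) = F(F^k(0)) \geq F(G^k(0)) \geq G(G^k(0))$ uses monotonicity of $F$ and $F \geq G$. Under the schedulability assumption $U_{\pi(1)\ldots\pi(n-1)} < 1$ the iterates are bounded above by any fixed point and so converge, each to the least fixed point of its function; letting $k \to \infty$ in the inequality yields $t_i \geq t_{i-1}$.

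The main subtlety I expect is justifying that the iteration from $0$ really converges to the \emph{least} fixed point, even though $F$ has jumps at integer multiples of the harmonic periods and is not continuous. This relies on the monotonicity and boundedness of the iterates together with upper semicontinuity of the ceiling function, and in the RTA setting this monotone convergence is classical and can be invoked from \cite{Sjodin1998}. Once this is taken for granted, the comparison itself reduces to the one-line pointwise-dominance induction described above.
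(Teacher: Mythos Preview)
Your argument is correct, but it takes a different route from the paper's.

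You work in the $\widetilde I$-picture: from the already-computed identity $\widetilde I_i(t)-\widetilde I_{i-1}(t)=C_{\pi(i)}(\lceil t/T_{\pi(i)}\rceil-t/T_{\pi(i)})\ge 0$ you get $F\ge G$ pointwise, and then compare the least fixed points via the Kleene iteration $F^k(0)\ge G^k(0)$. The paper instead stays with the $K_i$-functions and argues directly that $K_i(t)>t$ for every $t<t_{i-1}$: it writes $K_i$ algebraically in terms of $K_{i-1}$, uses that $K_{i-1}(t)>t$ on $[0,t_{i-1})$ (because $t_{i-1}$ is the least fixed point of $K_{i-1}$), and reduces the inequality to $C_{\pi(i)}\lceil t/T_{\pi(i)}\rceil\ge U_{\pi(i)}t$. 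This immediately excludes any fixed point of $K_i$ below $t_{i-1}$, with no limiting argument needed. What the paper's approach buys is self-containment: it avoids the convergence subtlety you flag about iterating a discontinuous map from $0$ to its least fixed point. What your approach buys is generality and transparency: it isolates the single pointwise inequality $F\ge G$ and invokes a reusable monotone fixed-point comparison, rather than redoing an ad hoc algebraic manipulation. One small slip: the ceiling function is \emph{left}-continuous (equivalently lower semicontinuous), not upper semicontinuous; this is in fact exactly the property you need for the increasing iterates $F^k(0)\uparrow L$ to satisfy $F(L)=L$, and your deferral to \cite{Sjodin1998} covers it.
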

\begin{proof}
$t_{i-1}$ is the least fixed point of $t=K_{i-1}(t)$. For all $t< t_{i-1}$ we therefore have $t<K_{i-1}(t)$. We write $K_i(t)$ in terms of $K_{i-1}(t)$:  
\begin{equation*}
K_i(t) = \frac{K_{i-1}(t)(1-U_{\pi(i)\dots\pi(n-1)}) + C_{\pi(i)}\left\lceil t/T_{\pi(i)}\right\rceil }{1-U_{\pi(i+1)\dots\pi(n-1)}}
\end{equation*}
For $t<t_{i-1}$ we have:
\begin{equation*}
K_i(t) > \frac{t(1-U_{\pi(i)\dots\pi(n-1)}) + C_{\pi(i)}\left\lceil t/T_{\pi(i)}\right\rceil }{1-U_{\pi(i+1)\dots\pi(n-1)}}
\end{equation*}
With $U_{\pi(i)\dots\pi(n-1)} =U_{\pi(i)}+ U_{\pi(i+1)\dots\pi(n-1)}$ it is
\begin{equation*}
K_i(t) >t+ \frac{-t U_{\pi(i)} + C_{\pi(i)}\left\lceil t/T_{\pi(i)}\right\rceil }{1-U_{\pi(i+1)\dots\pi(n-1)}}\geq t
\end{equation*}
$t=K_i(t)$ therefore has no fixed point in the interval $[0,t_{i-1})$, so that $t_i \geq t_{i-1}$ applies.
\end{proof}

This Lemma ensures that we can start the search for the fixed point $t_i$ with the fixed point $t_{i-1}\leq t_i$.

Theorem \ref{thm1} states that $t_i=K_i(t_{i-1})$ i.e., if we set $t_i^{\left(0\right)} = t_{i-1}$, only one iteration step is required and  $t_i^{\left(1\right)}$ is the solution we are searching for.
The proof of Theorem \ref{thm1} is done by induction and structured in the following way:
\begin{enumerate}
	\item Introducing the base case in Lemma \ref{eq:BC} 
	\item In the inductive hypothesis we assume that we have a solution of $t_{i-1} = K_{i-1}\left(t_{i-1}\right)$. We start the iterative procedure for getting a solution of $t_i = K_i\left(t_i\right)$ with the initial value $t^{\left(0\right)}_i=t_{i-1}$ and use the hypothesis in order to get a simplified version of the equation for $t_{i}$ (Lemma \ref{lem:iter2}) 
	\item Finally we will give a solution for $t_i=K_i\left(t_i\right)$ and prove its validity (Lemma \ref{lem:indstep}).
\end{enumerate}

We start with the base case.
\begin{lem}\label{eq:BC}
The base case of the inductive proof is defined by $i=0$. 
\begin{equation}\label{eq:bc2}
t_{0}=\frac{C_{n}}{1-U_{\pi(1)\ldots \pi(n-1)}}
\end{equation}
\end{lem}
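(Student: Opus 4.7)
The plan is a direct verification by substitution, since this is the degenerate case $i = 0$ of the function $K_i(t)$ defined in \eqref{eq:t_i}. First I would set $i = 0$ in \eqref{eq:t_i}. The sum $\sum_{1 \leq j \leq i} C_{\pi(j)}\left\lceil t/T_{\pi(j)}\right\rceil$ then has an empty range of summation and therefore vanishes, while the denominator becomes $1 - U_{\pi(1)\ldots \pi(n-1)}$, which is strictly positive under the standing schedulability assumption $U_{1\ldots n} < 1$. This yields
\begin{equation*}
K_0(t) = \frac{C_n}{1 - U_{\pi(1)\ldots \pi(n-1)}},
\end{equation*}
a constant function of $t$. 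Consequently the equation $t = K_0(t)$ has a unique, hence least, fixed point, which can simply be read off the right-hand side: $t_0 = C_n/(1 - U_{\pi(1)\ldots \pi(n-1)})$, matching \eqref{eq:bc2} and coinciding with $\widetilde{R}_n^{(0)}$ defined in \eqref{eq:bc}.

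As a consistency check, I would invoke \rle{fixed_point_ident} to confirm that this same value solves the original form $t = C_n + \widetilde{I}_0(t)$. Since $\Gamma^{\lceil\rceil}_{0} = \emptyset$ by construction, the definition \eqref{eq:I} reduces to $\widetilde{I}_0(t) = U_{\pi(1)\ldots \pi(n-1)}\cdot t$, and the resulting linear equation $t = C_n + U_{\pi(1)\ldots \pi(n-1)}\cdot t$ indeed rearranges to the same closed-form value. This establishes that the base case anchors the induction at precisely the starting point required by Theorem \ref{thm1}.

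There is no genuine obstacle in this lemma: it is essentially a definitional observation that the $i = 0$ case of the iterative scheme collapses to a closed form because no ceiling terms are yet present in the numerator of $K_0$. The only point worth stating explicitly is the nontriviality of the denominator, which follows from the standing schedulability hypothesis and does not need to be re-derived within this step.
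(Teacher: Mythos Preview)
Your proposal is correct and follows the same approach as the paper: you set $i=0$ in \eqref{eq:t_i}, observe that the sum over $1\le j\le 0$ is empty so that $K_0$ is constant, and read off the unique fixed point. The paper's own proof is a one-liner stating exactly this; your added consistency check via \rle{fixed_point_ident} and \eqref{eq:I} is sound but not needed.
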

\begin{proof}
This follows immediately from $t_{0}=K_{0}\left(t_{0}\right)$ (see Eq. \eqref{eq:t_i} with $i=0$).
 \end{proof}
Supposing we already had a solution  of the  Eq. \eqref{eq:t_i} for the index $i-1$, i.e. a solution of
\begin{equation*}
 t_{i-1} =  \frac{C_{n}+\sum_{1 \leq j \leq i-1} C_{\pi(j)}\left\lceil \frac{t_{i-1}}{T_{\pi(j)}}\right\rceil}{1- U_{\pi(i)\ldots \pi(n-1)}} 
\end{equation*}
for the variable $t_{i-1}$.

Now, we calculate $t_i$ as a function of $t_{i-1}$ on the assumption that we start the iteration for getting $t_i$ with the initial value $t_i^{\left(0\right)}=t_{i-1}$.
\begin{lem}\label{lem:iter2}
We assume that we have found a solution for the  equation $t_{i-1}=K_{i-1}\left(t_{i-1}\right)$. If we start the iteration for the determination of $t_i$ with this value $t_{i-1}$, we can use the equation
\begin{equation}\label{eq:iter2}
t_i^{(1)} = \frac{t_{i-1}\cdot\left(1- U_{\pi(i)\ldots \pi(n-1)}\right) + C_{\pi(i)}\cdot \left\lceil \frac{t_{i-1}}{T_{\pi(i)}} \right\rceil}{1- U_{\pi(i+1)\ldots \pi(n-1)}} 
\end{equation}
\end{lem}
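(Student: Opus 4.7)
The plan is to obtain the claimed formula by starting from the one-step iteration \eqref{eq:fp} applied with $k=0$ and $t_i^{(0)} = t_{i-1}$, then peeling off the $j=i$ term of the numerator and substituting the hypothesis $t_{i-1} = K_{i-1}(t_{i-1})$ into the remaining sum. Concretely, I would write
\begin{equation*}
t_i^{(1)} = \frac{C_n + \sum_{1\leq j\leq i-1} C_{\pi(j)}\left\lceil \frac{t_{i-1}}{T_{\pi(j)}}\right\rceil + C_{\pi(i)}\left\lceil \frac{t_{i-1}}{T_{\pi(i)}}\right\rceil}{1 - U_{\pi(i+1)\ldots\pi(n-1)}}.
\end{equation*}

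Next I would invoke the inductive hypothesis, i.e.\ that $t_{i-1}$ solves $t_{i-1} = K_{i-1}(t_{i-1})$. Multiplying through by $1 - U_{\pi(i)\ldots\pi(n-1)}$ yields
\begin{equation*}
C_n + \sum_{1\leq j\leq i-1} C_{\pi(j)}\left\lceil \frac{t_{i-1}}{T_{\pi(j)}}\right\rceil = t_{i-1}\bigl(1 - U_{\pi(i)\ldots\pi(n-1)}\bigr),
\end{equation*}
which is exactly the block appearing in the numerator of the displayed expression for $t_i^{(1)}$. Substituting this identity in and leaving the $C_{\pi(i)}\lceil t_{i-1}/T_{\pi(i)}\rceil$ term untouched gives precisely formula \eqref{eq:iter2}.

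The argument is therefore a one-line algebraic substitution; there is essentially no obstacle, only bookkeeping. The only point worth being explicit about is the index range when splitting the sum: the term $j=i$ is the unique new addend in the numerator of $K_i$ compared with $K_{i-1}$ (after accounting for the different denominators), and it is precisely this new term that remains after the hypothesis absorbs the partial sum over $1\leq j \leq i-1$. I would also briefly note that Lemma \ref{fixed_point_comparison} guarantees $t_{i-1} \leq t_i$, so initializing the iteration for $t_i$ with $t_i^{(0)} = t_{i-1}$ is a legitimate lower starting point from which the monotone iteration \eqref{eq:fp} can proceed; the content of the present lemma is simply to compute the first iterate in closed form, while the fact that this single iterate already equals the fixed point $t_i$ is deferred to the subsequent Lemma \ref{lem:indstep}.
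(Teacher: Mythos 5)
Your proof is correct and follows the same approach as the paper: split off the $j=i$ term in the one-step iterate, multiply the induction hypothesis by $1 - U_{\pi(i)\ldots\pi(n-1)}$, and substitute the resulting identity into the numerator. The only addition is your explicit remark that Lemma \ref{fixed_point_comparison} justifies the choice of starting point, which the paper makes in the surrounding text rather than inside this proof.
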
 
\begin{proof}
By the induction hypothesis we have
\begin{equation*}
  t_{i-1} =  \frac{C_{n}+\sum_{1 \leq j \leq i-1} C_{\pi(j)}\left\lceil \frac{t_{i-1}}{T_{\pi(j)}}\right\rceil}{1- U_{\pi(i)\ldots \pi(n-1)}} \tag{*}
\end{equation*}
After multiplying both sides of the equation (*) by  $1- U_{\pi(i)\ldots \pi(n-1)}$ we get:
\begin{equation}\label{eq:t_i+1}
C_{n}+\sum_{1 \leq j \leq i-1} C_{\pi(j)}\left\lceil\frac{t_{i-1}}{T_{\pi(j)}}\right\rceil=t_{i-1}\cdot\left(1- U_{\pi(i)\ldots \pi(n-1)}\right)  
\end{equation}
This equation allows us to substitute a subterm of a term in the form of the LHS by the RHS.

We now consider the iteration for the determination of $t_i$ starting with $t_{i-1}$. In Eq. \eqref{eq:fp} we split the sum of the ceiling terms for $j=1$ to $j=i$ in the numerator into the sum from $j=1$ to $j=i-1$ and a separately written addend with $j=i$:
\begin{equation*}
t_i^{\left(1\right)} =  \frac{C_{n}+\sum_{1 \leq j \leq i-1} C_{\pi(j)}\left\lceil \frac{t_{i-1}}{T_{\pi(j)}}\right\rceil+C_{\pi(i)}\left\lceil \frac{t_{i-1}}{T_{\pi(i)}}\right\rceil}{1- U_{\pi(i+1)\dots\pi(n-1)}} 
\end{equation*}
 Observing \eqref{eq:t_i+1} to simplify the numerator we get the Lemma.
\end{proof}

Note that Eq. \eqref{eq:iter2} can also be written as
\begin{equation}\label{eq:iter3}
t_i^{(1)} = t_{i-1} +\frac{-t_{i-1}\cdot U_{\pi(i)} + C_{\pi(i)}\cdot \left\lceil \frac{t_{i-1}}{T_{\pi(i)}} \right\rceil}{1- U_{\pi(i+1)\ldots \pi(n-1)}} = t_{i-1}  + \frac{C_{\pi(i)}\cdot \left(\left\lceil \frac{t_{i-1}}{T_{\pi(i)}} \right\rceil-\frac{t_{i-1}}{T_{\pi(i)}}\right)}{1- U_{\pi(i+1)\ldots \pi(n-1)}} 
\end{equation}
so that $t_i^{(1)} \geq t_{i-1}$.

After these preparations we can now take the decisive third step in the proof of Theorem \ref{thm1}.
First, a basic property of nested ceiling functions is recalled from Lemma~\ref{lem:nestedCeil} in \cite{GrNg}.

\begin{lem}
  \label{lem:nestedCeil}
  Let be $x,z$ positive rational numbers with $0<x\leq z$. Then,
  \begin{equation}
    \ceil{x+\left(1-\frac{x}{z}\right)\ceil{z}} = \ceil{z}
    \label{eq:ceilingProp}
  \end{equation}
\end{lem}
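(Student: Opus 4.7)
The plan is to set $n=\ceil{z}$ and show directly that the argument of the outer ceiling lies in the half-open interval $(n-1,\, n]$, from which $\ceil{\cdot}=n=\ceil{z}$ follows immediately. Since $z>0$, $n$ is a positive integer and by definition of the ceiling we have $n-1 < z \leq n$, hence $0 \leq n-z < 1$.

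The key algebraic step is to rewrite the argument of the outer ceiling in a form that makes its distance to $n$ transparent. Let
\begin{equation*}
y \;:=\; x + \left(1 - \frac{x}{z}\right)\ceil{z} \;=\; x + \left(1-\frac{x}{z}\right)n \;=\; n \;-\; \frac{x(n-z)}{z}.
\end{equation*}
The quantity $x(n-z)/z$ is non-negative because $x>0$, $z>0$ and $n-z\geq 0$, which immediately yields the upper bound $y \leq n$.

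For the (strict) lower bound I need to show $y > n-1$, i.e.\ $x(n-z)/z < 1$, equivalently $x(n-z) < z$. This is where the two hypotheses $x\leq z$ and $n-z<1$ interact. If $n=z$ (the case where $z$ is already an integer), then $x(n-z)=0<z$ trivially. Otherwise $0<n-z<1$, and combining with $x\leq z$ gives $x(n-z) \leq z(n-z) < z$. In both subcases the strict inequality holds, so $y > n-1$.

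Putting the two bounds together, $n-1 < y \leq n$, so $\ceil{y}=n=\ceil{z}$, which is exactly \eqref{eq:ceilingProp}. I do not expect a real obstacle here; the only thing to watch is keeping the boundary case $z=n$ (where $n-z=0$) consistent with the strict inequality in the lower bound, which is handled by splitting off that case or by noting that $x(n-z)=0<z$ holds unconditionally once $x,z>0$.
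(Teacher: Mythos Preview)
Your proof is correct. The rewriting $y = n - x(n-z)/z$ with $n=\ceil{z}$ cleanly isolates the distance from $n$, and your two bounds $0 \le x(n-z)/z < 1$ follow directly from $0<x\le z$ and $0\le n-z<1$; the boundary case $n=z$ is handled properly.

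As for comparison: the paper does not actually prove this lemma. It is merely stated and attributed to an external reference (\cite{GrNg}), so there is no in-paper argument to compare against. Your self-contained proof is therefore strictly more than what the paper provides here.
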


This Lemma allows to substitute the complex LHS by the simple RHS in a corresponding term.

To complete the proof of Theorem \ref{thm1} we have to show that the value of $t_i^{(1)}$  is a solution of the equation $t_i=K_i(t_i)$. With reference to Fig. (1), we can elucidate the underlying idea as a preparation of the Lemma \ref{lem:indstep}. The start point of the vertical  arrow has the coordinates $\left(t_{i-1}, t_{i-1}\right)$ and the end point of the horizontal arrow has the coordinates $\left(t_{i}, t_{i}\right)$. The function $K_i(t)$ is constant for all $t\in  \left[t_{i-1},t_i\right]$ hence $K_i(t_{i-1})=K_i(t_{i})$.

\begin{lem}\label{lem:indstep}
We assume the validity of \eqref{eq:iter2} (induction hypothesis) and get
\begin{equation*}
 t_i^{\left(1\right)}=K_i\left(t_{i-1}\right) \Rightarrow t_i^{\left(1\right)} = K_i\left(t_{i}^{\left(1\right)}\right)\Rightarrow t_i=t_i^{\left(1\right)}
\end{equation*}
\end{lem}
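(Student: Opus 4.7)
The plan is to establish the two implications in the displayed chain separately. For the first, I would show that the single iteration step from $t_{i-1}$ to $t_i^{(1)}=K_i(t_{i-1})$ is too small to cross any multiple of $T_{\pi(j)}$ for $j\le i$, so that none of the ceilings in the definition \eqref{eq:t_i} of $K_i$ changes value; consequently $K_i(t_i^{(1)})=K_i(t_{i-1})=t_i^{(1)}$. For the second implication I would use the monotonicity of $K_i$ together with Lemma \ref{fixed_point_comparison} to sandwich $t_i$ between $t_{i-1}$ and $t_i^{(1)}$. The key quantitative input is the bound $\beta := U_{\pi(i)}/(1-U_{\pi(i+1)\ldots\pi(n-1)}) \le 1$, which is equivalent to the schedulability condition $U_{\pi(i)\ldots\pi(n-1)}\le 1$.

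Starting from the alternative form \eqref{eq:iter3} delivered by Lemma \ref{lem:iter2}, I would divide by $T_{\pi(i)}$ and rearrange to obtain the convex combination
\[
\frac{t_i^{(1)}}{T_{\pi(i)}} \;=\; (1-\beta)\,\frac{t_{i-1}}{T_{\pi(i)}} \;+\; \beta \left\lceil \frac{t_{i-1}}{T_{\pi(i)}}\right\rceil .
\]
Since $\beta\in(0,1]$, the right-hand side lies in the closed interval $[t_{i-1}/T_{\pi(i)},\,\lceil t_{i-1}/T_{\pi(i)}\rceil]$; equivalently, Lemma \ref{lem:nestedCeil} applied with $x=(1-\beta)\,t_{i-1}/T_{\pi(i)}$ and $z=t_{i-1}/T_{\pi(i)}$ yields $\lceil t_i^{(1)}/T_{\pi(i)}\rceil=\lceil t_{i-1}/T_{\pi(i)}\rceil$. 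The reverse rate-monotonic harmonic ordering then makes $T_{\pi(j)}$ an integer multiple of $T_{\pi(i)}$ for every $j<i$, so two arguments lying in a common half-open interval of length $T_{\pi(i)}$ automatically lie in a common half-open interval of length $T_{\pi(j)}$; hence $\lceil t_i^{(1)}/T_{\pi(j)}\rceil=\lceil t_{i-1}/T_{\pi(j)}\rceil$ for all $1\le j\le i$. Substituting these equalities term by term into the definition \eqref{eq:t_i} of $K_i$ gives $K_i(t_i^{(1)})=K_i(t_{i-1})=t_i^{(1)}$, which is the first implication.

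For the second implication, observe that $K_i$ is nondecreasing in $t$ (the only $t$-dependence is through ceilings with nonnegative coefficients), while Lemma \ref{fixed_point_comparison} supplies $t_i\ge t_{i-1}$. Monotonicity then yields $t_i=K_i(t_i)\ge K_i(t_{i-1})=t_i^{(1)}$, and since $t_i$ is by definition the least fixed point of $K_i$ while $t_i^{(1)}$ is itself a fixed point by the first implication, we also have $t_i\le t_i^{(1)}$; combining these two inequalities gives $t_i=t_i^{(1)}$, which is exactly what Theorem \ref{thm1} claims for the inductive step.

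The main obstacle I anticipate is the invariance of the innermost ceiling $\lceil \cdot/T_{\pi(i)}\rceil$: it rests squarely on the bound $\beta\le 1$, which is delicate because the schedulability assumption is used in its tightest possible form to prevent the one-shot iterate from overshooting the next $T_{\pi(i)}$-multiple. Once that step is in place, harmonicity propagates the invariance to all coarser ceilings essentially for free, and the least-fixed-point identification is just a bracketing argument.
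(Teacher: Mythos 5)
Your proposal is correct and mirrors the paper's proof: the first implication is established with Lemma~\ref{lem:nestedCeil} applied to the very same pair $x=(1-\beta)\,t_{i-1}/T_{\pi(i)}$ and $z=t_{i-1}/T_{\pi(i)}$ (the paper writes $1-\beta$ as $(1-U_{\pi(i)\ldots\pi(n-1)})/(1-U_{\pi(i+1)\ldots\pi(n-1)})$), followed by the harmonicity argument that coarser ceilings with denominators $T_{\pi(j)}$, $j<i$, are constant on the same interval; the second implication is the same least-fixed-point bracketing via Lemma~\ref{fixed_point_comparison}. Your convex-combination reformulation of \eqref{eq:iter3} just makes the role of $\beta\le 1$ visually explicit (and your monotonicity argument for $t_i\ge t_i^{(1)}$ is a slightly more careful rendering of the paper's appeal to fixed-point iteration starting from $t_{i-1}$), but the underlying argument is the same.
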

\begin{proof}
The proof is done in three steps: 
\begin{enumerate}
	\item $K_i\left(t\right)$ is constant in the left open time interval 

$$I=		\left]T_{\pi\left(i\right)}\left(\left\lceil \frac{t_{i-1}}{T_{\pi\left(i\right)}}\right\rceil{-1}\right),\  T_{\pi\left(i\right)}\left\lceil \frac{ t_{i-1}}{T_{\pi\left(i\right)}}\right\rceil \right]$$
	
\begin{proof}
Eq. \eqref{eq:t_i} contains only ceiling terms with a denominator $T_{\pi\left(j\right)}$ that is a multiple of $T_{\pi\left(i\right)}$ and the numerator $t$. All these ceiling terms are therefore constant in the mentioned interval. There are no other time dependent terms.

\end{proof}
	
	\item The two intances of time $t_{i-1} \in I$ and $t_i^{\left(1\right)} \in I$ with $t_i^{\left(1\right)} \geq t_{i-1}$ lie in the same period of the task $\tau_{\pi\left(i\right)}$, i.e. $T_{\pi\left(i\right)}\left\lceil \frac{ t_{i-1}}{T_{\pi\left(i\right)}}\right\rceil$ = $T_{\pi\left(i\right)}\left\lceil \frac{ t_{i}^{\left(1\right)}}{T_{\pi\left(i\right)}}\right\rceil$.
This is true if the ceiling terms are equal.
\begin{proof}
In Lemma \ref{lem:iter2} we have shown how we can use a simple term to determine $t_i^{(1)}$. From this we get:
\begin{equation*}
\left\lceil \frac{t_i^{\left(1\right)}}{T_{\pi\left(i\right)}}\right\rceil = \left\lceil  \frac{t_{i-1}\cdot\left(1- U_{\pi(i)\ldots \pi(n-1)}\right) + C_{\pi(i)}\cdot \left\lceil \frac{t_{i-1}}{T_{\pi(i})} \right\rceil}{T_{\pi\left(i\right)}\left(1- U_{\pi(i+1)\ldots \pi(n-1)}\right)} \right\rceil
\end{equation*}
By setting  
\begin{equation} 
x =  \frac{t_{i-1}\cdot\left(1- U_{\pi(i)\ldots \pi(n-1)}\right) }{T_{\pi\left(i\right)}\left(1- U_{\pi(i+1)\ldots \pi(n-1)}\right)} \tag{*}
\end{equation} and
\begin{equation} z =\frac{t_{i-1}}{T_{\pi(i)}}\tag{**},
\end{equation} 
 from the property of the nested ceiling of Lemma \ref{lem:nestedCeil}, it follows that
\begin{equation}
\left\lceil \frac{t_i^{\left(1\right)}}{T_{\pi\left(i\right)}}\right\rceil =\left\lceil z\right\rceil= \left\lceil \frac{t_{i-1}}{T_{\pi(i)}} \right\rceil \tag{***}
\end{equation}
The Lemma is applicable in this case, since  
\begin{itemize}
	\item with $t_0=C_n/(1-U_{\pi(1)\dots\pi(n-1)})>0$ and Lemma 3 we have $0< t_0 < t_1 < \dots < t_{i-1}$ and $z>0$.
	\item with $1-U_{\pi(i)\ldots \pi(n-1)} >0$ and $t_{i-1} > 0$ we have $x>0$
  \item 
for the factor of the inner ceiling applies: 
\begin{equation} 
C_{\pi\left(i\right)}/\left(T_{\pi\left(i\right)}\left(1-U_{\pi\left(i+1\right)\ldots  \pi\left(n-1\right)}\right)\right) = 1-x/z \tag{****}
\end{equation} 
since by (*) and (**)
\begin{equation*}
1-x/z = 1-\frac{t_{i-1}\cdot\left(1- U_{\pi(i)\ldots \pi(n-1)}\right) }{T_{\pi\left(i\right)}\left(1- U_{\pi(i+1)\ldots \pi(n-1)}\right)}\frac{T_{\pi(i)}}{t_{i-1}} 
\end{equation*}
The RHS can be transformed into:
\begin{equation*}
1-\frac{1- U_{\pi(i)\ldots \pi(n-1)} }{1- U_{\pi(i+1)\ldots \pi(n-1)}}=\frac{U_{\pi\left(i\right)}}{1- U_{\pi(i+1)\ldots \pi(n-1)}}>0
\end{equation*}
The RHS is equivalent to the LHS of (****) and is $> 0$. 
\end{itemize}

Since $K_i\left(t\right)$ is constant $\forall t \in I$ and we have for the time instances $t_{i-1} \in I$ and $t_i^{(1)} \in I $ we get  $t_i^{(1)}=K_i\left(t_{i-1}\right)=K_i(t_i^{(1)})$.

\end{proof}
	\item Finally, $t_i=t_i^{(1)}$
\begin{proof}
The point in time as defined in Eq. \eqref{eq:iter2} $t_i^{(1)}=K_i(t_i^{(1)})$   is a fixed point of the equation $t=K_i(t)$. Furthermore we started the fixed point iteration with the time instant $t_{i-1}$ which is lower than $t_i^{(1)}$. Therefore $t_i^{(1)}$ is the least fixed point i.e. $t_i=t_i^{(1)}$. 
By Lemma \ref{fixed_point_ident} it is also the least fixed point of $t = C_n + \widetilde{I}_i(t)$.

\end{proof}
\end{enumerate}
\end{proof}
We are now able to give a proof of Theorem \ref{thm1}:
\begin{proof}  Theorem \ref{thm1}:
The equations \eqref{eq:bc} and \eqref{eq:itst} are equivalent to \eqref{eq:bc2} and \eqref{eq:iter3}, respectively if we substitute $t_0$ by $\widetilde{R}_n^{(0)}$ and $t_i$ by $\widetilde{R}_n^{(k)}$.
By Lemma \ref{lem:indstep} we know that $\widetilde{R}_n^{(k)}$ is a solution of \eqref{eq:itst}. Also by construction of \eqref{eq:I} $\widetilde{I}_{n-1}(t)= {I}_{n-1}(t)$ hence $\widetilde{R}_n^{(n-1)}$ is the exact \emph{worst-case response time} $R_n$. 
\end{proof}

\section{Systems with Jitter}\label{sec:SwJit}
When  release jitters have to be considered, the harmonic tasks lose the important property that all discontinuities of the \emph{total interference function} are restricted to the times which are multiples of the smallest period $T_{\pi\left(n-1\right)}$. This may cause that no polynomial algorithm exists for the determination of the \emph{worst-case response time}, and one has to resort to the algorithm for general tasks. In this section, we show that we can handle jitter that meets certain limits.

Release jitter models the delay between the arrival time of a job and the time the job is released. To determine the \emph{worst-case response time}, we can assume that the jobs of a task follow each other with the minimum distance. The arrival times are therefore multiples of the smallest inter-arrival time \cite{Baruah90preemptivelyscheduling}.  That is, the $i$-th job of a task $\tau_{j}$ arriving at the time $a_{j,i}=i T_j$ is released within the time interval $i T_j \leq r_{j,i} \leq i T_j + J_j$ where $J_j$ denotes the maximum jitter of the task over all jobs. As consequence, the time interval between consecutive releases of a task may be lower than $T_j$ and 
 the critical situation for a task $\tau_j$ arises when it is released together with all higher-priority tasks. Different points in time can therefore be \emph{critical instances} for the individual tasks.

Audsley  et al. \cite{Audsley1993} and Tindell et al. \cite{Tin94b} have discussed the release jitter problem for real time tasks in detail and proposed a modification of the \emph{total interference function}
\begin{equation}\label{eq:jit}
J^{J_i}_{n-1}(t)=\sum_{i=1}^{n-1}C_i\left\lceil\frac{t+J_i}{T_i} \right\rceil
\end{equation}

which leads to the \emph{processor demand function}
\begin{equation*} 
W^{J_i}_{n}(t)=C_n + J^{J_i}_{n-1}(t) = C_n + \sum_{i=1}^{n-1}C_i\left\lceil\frac{t+J_i}{T_i} \right\rceil
\end{equation*}
(The index $J_i$ denotes functions, whose values depend on a variable jitter).
The \emph{worst-case response time} of task $\tau_n$ results as the smallest fixed point  \cite{Sjodin1998}
\begeq \label{eq:fixp_J_i}
R_n^{J_i} \equals \min \lef t| W_n^J{i}(t) =t\ri
\endeq 
For this task to be schedulable, it must fulfill the condition \cite{Davis2008}:
\begin{equation*}
R_n^{J_i} \leq D_n-J_n
\end{equation*}
Again, we can reorder the higher-priority tasks without changing the value of these functions, where the ordering rule for $\pi$ is extended by ordering tasks with equal periods according to increasing jitter. Ties are broken arbitrarily.
We develop our method for determining the corresponding \emph{worst-case response time} in four steps:
\begin{enumerate}
	\item Assuming the same jitter for all tasks.
	\item Shifting the times of discontinuities in a limited way without changing the time of the fixed point.
	\item Application of the former results to handle strongly restricted different jitters for different tasks (subsection \ref{sec:restricted})
	\item Loosening the restrictions introduced in subsection \ref{sec:restricted}.
\end{enumerate}
\subsection{Assuming the same release jitter for all tasks}\label{samereljit}
To prepare our actual result, we first assume that all tasks have the same jitter $J$. The discontinuities of the \emph{processor demand function} are now shifted from integer multiples of the smallest period in the jitter-free case to a time that exactly $J$ time units lie before these periods. 
\begin{equation}\label{eq:J1}
W_{n}^J\left(t\right)=C_n+\sum_{i=1}^{n-1} C_i\left\lceil \frac{t+J}{T_i}\right\rceil
\end{equation}
and the \emph{worst-case response time} is defined as
\begeq \label{eq:fixp_J}
R_n^J \equals \min \lef t| W_n^J(t) =t\ri
\endeq

The condition for a task set to be schedulable is now:
\begin{equation*}
R^J_n \leq D_n-J
\end{equation*}

 Because we consider in this section different \emph{total interference functions} which result in different \emph{worst-case response times} we introduce indexes to make the difference visible. This recurrence can be solved analogously as the recurrence without jitter. We follow therefore the line of reasoning in sections \ref{sec:prelresult} and \ref{sec:proof} and begin with defining the function $K^J_i(t)$ where we use a reordering according to the mapping function $\pi$ and for which we search the least fixed points.
\begin{equation}\label{eq:K^J}
 0\leq i \leq n-1,\  K^J_i\left(t\right) =  \frac{C_{n}+\sum_{1 \leq j \leq i} C_{\pi(j)}\left\lceil \frac{t+J}{T_{\pi(j)}}\right\rceil}{1- U_{\pi(i+1)\ldots \pi(n-1)}}
\end{equation}
\begin{lem}\label{thm2}
We are given a set of $n$ harmonic tasks  where the $n-1$ tasks of higher priority are  ordered according to the mapping funtion $\pi$. Then the least fixed point of the equation
\eqref{eq:fixp_J} can be obtained by iteration:
\begin{equation}\label{eq:bcwt}
\widetilde{R}_n^{J(0)}=\frac{C_{n}+J}{1-U_{\pi(1)\ldots \pi(n-1)}}	-J
\end{equation}
\begin{equation}\label{eq:itstwtJ}
	1\leq i \leq n-1,	\ \ \ \widetilde{R}_n^{J(i)}= \widetilde{R}_n^{J(i-1)}+\frac{-U_{\pi\left(i\right)}(\widetilde{R}_n^{J(i-1)}+J)+C_{\pi(i)} \left\lceil \frac{\widetilde{R}_n^{J(i-1)}+J}{T_{\pi(i)}}\right\rceil}{1-U_{\pi(i+1)\ldots \pi(n-1)}}
\end{equation}
we finally get $R_n^J = \widetilde{R}_n^{J(n-1)}$.
\end{lem}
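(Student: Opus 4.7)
The plan is to reduce the common-jitter setting to the jitter-free problem already solved by Theorem \ref{thm1}, via a simple change of variable. Concretely, I would substitute $s = t + J$ in the defining equation $t = C_n + \sum_{i=1}^{n-1} C_i \lceil (t+J)/T_i \rceil$ and observe that it rewrites as $s = (C_n + J) + \sum_{i=1}^{n-1} C_i \lceil s/T_i \rceil$. This is exactly the jitter-free fixed-point equation of Theorem \ref{thm1} applied to a \emph{modified} task system in which $\tau_n$ is given the enlarged execution time $C_n + J$ and all higher-priority tasks are unchanged. Since the substitution touches neither the periods nor the mapping $\pi$ (which remains reverse rate monotonic because all jitters are equal), Theorem \ref{thm1} applies verbatim to this modified system.

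Next I would read off the iteration of Theorem \ref{thm1} for the modified system. The base case \eqref{eq:bc} becomes $s^{(0)} = (C_n + J)/(1 - U_{\pi(1)\ldots\pi(n-1)})$, and the inductive step \eqref{eq:itst} becomes $s^{(i)} = s^{(i-1)} + C_{\pi(i)}\bigl(\lceil s^{(i-1)}/T_{\pi(i)}\rceil - s^{(i-1)}/T_{\pi(i)}\bigr)/(1 - U_{\pi(i+1)\ldots\pi(n-1)})$, with $s^{(n-1)}$ being the least fixed point in the $s$-variable.

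Finally I would translate back by setting $\widetilde{R}_n^{J(i)} = s^{(i)} - J$. The base case matches \eqref{eq:bcwt} on the nose. For the inductive step, I substitute $s^{(i-1)} = \widetilde{R}_n^{J(i-1)} + J$ into the iteration and use the identity $C_{\pi(i)}/T_{\pi(i)} = U_{\pi(i)}$ to rewrite $C_{\pi(i)} \cdot s^{(i-1)}/T_{\pi(i)} = U_{\pi(i)}(\widetilde{R}_n^{J(i-1)} + J)$; this reproduces \eqref{eq:itstwtJ} exactly. Because the map $s\mapsto s-J$ is an order-preserving bijection, the least fixed point in $s$ corresponds to the least fixed point in $t$, giving $R_n^J = s^{(n-1)} - J = \widetilde{R}_n^{J(n-1)}$.

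I do not expect any serious obstacle. The only points that deserve a line of checking are (i) that the reordering and interference arguments of Lemmas \ref{le:I}--\ref{fixed_point_comparison} are insensitive to the value of $C_n$ and thus survive the enlargement $C_n \to C_n + J$, and (ii) that the substitution preserves well-definedness (the denominator $1 - U_{\pi(1)\ldots\pi(n-1)}$ is untouched by the modification) and the least-fixed-point property. Neither requires any calculation beyond rewriting.
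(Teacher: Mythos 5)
Your proof is correct, and it takes a genuinely different and leaner route than the paper. The paper's proof of Lemma~\ref{thm2} re-runs the inductive machinery of Lemma~\ref{lem:indstep} in the jitter setting: it verifies that the intervals of constancy of $K^J_i(t)$ are shifted by $-J$, and reapplies Lemma~\ref{lem:nestedCeil} with $x = (\widetilde{R}_n^{J(i)}+J)/T_{\pi(i)}$ and $z = (\widetilde{R}_n^{J(i-1)}+J)/T_{\pi(i)}$, rechecking positivity and the identity $1 - x/z = C_{\pi(i)}/\bigl(T_{\pi(i)}(1-U_{\pi(i+1)\ldots\pi(n-1)})\bigr)$. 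You instead notice that the affine change of variable $s = t + J$ turns the jitter fixed-point equation $t = C_n + \sum_i C_i\lceil (t+J)/T_i\rceil$ into the jitter-free one $s = (C_n + J) + \sum_i C_i \lceil s/T_i\rceil$, so Theorem~\ref{thm1} can be invoked verbatim on a task system in which only $C_n$ is enlarged to $C_n + J$. That substitution is order-preserving and bijective on $\mathbb{R}$, so it preserves least fixed points, and it touches neither the periods nor the harmonic ordering $\pi$ (the only thing to note, which you do, is that with a common jitter the extended jitter tiebreak on $\pi$ degenerates to the one used by Theorem~\ref{thm1}). The algebraic translation of the base case and the iterative step back into the $t$-variable matches \eqref{eq:bcwt} and \eqref{eq:itstwtJ} exactly. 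What your approach buys is avoiding any re-derivation: the positivity hypotheses in Lemma~\ref{lem:nestedCeil} are inherited automatically because $s^{(0)} = (C_n+J)/(1-U_{\pi(1)\ldots\pi(n-1)}) > 0$ and the iteration of Theorem~\ref{thm1} is nondecreasing. What the paper's direct proof buys is a step-by-step template that it then reuses when the jitter is no longer constant (Subsection~\ref{sec:restricted} onward); your change-of-variable trick is specific to a common additive shift and would not transfer to task-dependent $J_i$.
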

\begin{proof}
The value $\widetilde{R}_n^{J(0)}$ is the solution of the equation $$\widetilde{R}_n^{J(0)}= C_n+U_{\pi(1)\dots\pi(n-1)}\left(\widetilde{R}_n^{J(0)}+J\right)$$ which is derived from \eqref{eq:J1} by replacing all ceiling functions by their arguments and which has been defined in \cite{Sjodin1998} as a valid starting point for a fixed point iteration. 

Note that with $J>0$ and $0<U_{\pi(1)\dots \pi(n-1)}<1$ we have $\widetilde{R}_n^{J(0)}>0$.

The central part of the proof corresponds to the steps of the proof of Lemma \ref{lem:indstep}.
\begin{enumerate}

	\item 
	The introduced jitter changes the time intervals $I$ with constant values of $K_i^J(t)$ to:
	\begin{equation*}
	I=		\left]T_{\pi\left(i\right)}\left(\left\lceil \frac{\widetilde{R}_n^{J(i-1)}+J}{T_{\pi\left(i\right)}}\right\rceil{-1}\right),\  T_{\pi\left(i\right)}\left\lceil \frac{ \widetilde{R}_n^{J(i-1)}+J}{T_{\pi\left(i\right)}}\right\rceil \right]
\end{equation*}
This means  the discontinuities are now shifted to the time instances $mT_{\pi(i)}-J=m'T_{\pi(n-1)}-J$ for $K_i((t)$ with $m, m' \in \mathbb{N}$ but the interval length has not changed.
\item Now we have to show that  $\left\lceil (\widetilde{R}_n^{J(i-1)}+J)/T_{\pi(i)}\right\rceil = \left\lceil (\widetilde{R}_n^{J(i)}+J)/T_{\pi(i)}\right\rceil$ holds what means that the two time instances $\widetilde{R}_n^{J(i-1)}$ and $\widetilde{R}_n^{J(i)}$ are in the same time interval with a constant value of $K_i^J(t)$ i.e. $K_i^J(\widetilde{R}_n^{J(i)}) = K_i^J(\widetilde{R}_n^{J(i-1)}) $.

We substitute the RHS of \eqref{eq:itstwtJ} for $\widetilde{R}_n^{J(i)}$ in $\left\lceil (\widetilde{R}_n^{J(i)}+J)/T_{\pi(i)}\right\rceil$ and show that this term can be simplified to $\left\lceil (\widetilde{R}_n^{J(i-1)}+J)/T_{\pi(i)}\right\rceil$. Again we use Lemma \ref{lem:nestedCeil}.

We set  
	
\begin{multline}
x=\frac{\widetilde{R}_n^{J(i)}+J}{T_{\pi(i)}} \overset{by \ \eqref{eq:itstwtJ}}{=} \frac{ \widetilde{R}_n^{J(i-1)}\left(1-U_{\pi(i)\ldots \pi(n-1)}\right)-U_{\pi\left(i\right)}J}{T_{\pi(i)}\left(1-U_{\pi(i+1)\ldots \pi(n-1)}\right)}+\frac{J}{T_{\pi(i)}}=\\ \frac{ \left(\widetilde{R}_n^{J(i-1)}+J\right)\left(1-U_{\pi(i)\ldots \pi(n-1)}\right)}{T_{\pi(i)}\left(1-U_{\pi(i+1)\ldots \pi(n-1)}\right)}\tag{*}
\end{multline}
and
\begin{equation}
z=\frac{\widetilde{R}_{n}^{J(i-1)}+J}{T_{\pi(i)}} \tag{**}
\end{equation}
With $\widetilde{R}_n^{J(i-1)}>\widetilde{R}_n^{J(0)}$ and $J>0$ we also have $x> 0$ and $z > 0$. From the property of the nested ceiling of Lemma \ref{lem:nestedCeil}, it follows that
\begin{equation}
\left\lceil \frac{\widetilde{R}_{n}^{J(i)}+J}{T_{\pi(i)}}\right\rceil=\left\lceil z\right\rceil=\left\lceil \frac{\widetilde{R}_{n}^{J(i-1)}+J}{T_{\pi(i)}}\right\rceil \tag{***}
\end{equation}
The Lemma is applicable in this case, since
 for the factor of the inner ceiling applies: 
\begin{equation} 
C_{\pi\left(i\right)}/\left(T_{\pi\left(i\right)}\left(1-U_{\pi\left(i+1\right)\ldots  \pi\left(n-1\right)}\right)\right) = 1-x/z \tag{****}
\end{equation} 
since by (*) and (**)
\begin{equation*}
x/z = \frac{(\widetilde{R}_n^{J(i-1)}+J )\left(1- U_{\pi(i)\ldots \pi(n-1)}\right) }{T_{\pi\left(i\right)}\left(1- U_{\pi(i+1)\ldots \pi(n-1)}\right)}\cdot\frac{T_{\pi(i)}}{\widetilde{R}_n^{J(i-1)}+J} 
\end{equation*}

 After simplification we also get:
\begin{equation*}
1-x/z =1- \frac{1- U_{\pi(i)\ldots \pi(n-1)}} {1- U_{\pi(i+1)\ldots \pi(n-1)}}
\end{equation*}
Using the equivalence $\left(1- U_{\pi(i+1)\ldots \pi(n-1)}\right)- U_{\pi(i)}\equiv \left(1- U_{\pi(i)\ldots \pi(n-1)}\right)$
and the definition $U_{\pi(i)}= C_{\pi(i)}/T_{\pi(i)}$ we have :
\begin{equation*}
1-x/z =\frac{ C_{\pi(i)}} {T_{\pi(i)}(1- U_{\pi(i+1)\ldots \pi(n-1)})}
\end{equation*}

\end{enumerate}
This proves that the  Lemma \ref{lem:nestedCeil} is applicable and Eq. (****) applies.
\end{proof}
We can now give a first application of this method and start with introducing some definitions. We assume that the different tasks have different jitters and we are interested in the minimum and maximum values:  $$  J_{max} = \max\{ J_{\pi(i)} : 1 \le i \le n-1\}$$ and
 $$ J_{min} = \min\{ J_{\pi(i)} : 1 \le i \le n-1\} $$ 

We first set $J = J_{min}$, and then determine the {worst-case response time} which we denote by $R^{min}_n$  applying \eqref{eq:bcwt} and \eqref{eq:itstwtJ}. On the other hand, we set $J = J_{max}$ and obtain the \emph{worst-case response time} $R^{max}_n$.
\begin{lem}
We assume that we have different jitters for the different tasks and therefore have to consider the \emph{processor demand function}:
\begin{equation}\label{eq:equalJ}
W_{n-1}^{J_i}(t)=C_n+\sum_{i=1}^{n-1} C_{\pi(i)}\left\lceil (t+J_{\pi(i)})/T_{\pi(i)}\right\rceil
\end{equation}
The solution of the recursive equation 
\begin{equation}\label{eq:RJi}
R_n^{J_i}=W_{n}^{J_i}(R_n^{J_i})
\end{equation}
leads to the \emph{worst-case response time} for which we have
\begin{equation*}
R^{min}_n \leq R_n^{J_i} \leq R^{max}_n
\end{equation*}
 \end{lem}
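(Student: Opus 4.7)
The plan is to bound the worst-case response time $R_n^{J_i}$ with mixed jitters between the two uniform-jitter worst-case response times $R_n^{\min}$ and $R_n^{\max}$ by first establishing a pointwise inequality between the three processor demand functions and then invoking a standard monotonicity argument for least fixed points of non-decreasing maps.

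First I would show that
\[
W_n^{J_{\min}}(t) \leq W_n^{J_i}(t) \leq W_n^{J_{\max}}(t), \qquad \text{for all } t \geq 0,
\]
where $W_n^{J_{\min}}$ and $W_n^{J_{\max}}$ denote the uniform-jitter demand functions of the form \eqref{eq:J1}. This follows summand by summand: since $J_{\min} \leq J_{\pi(i)} \leq J_{\max}$ for every $1\leq i \leq n-1$ and the map $J \mapsto \lceil (t+J)/T_{\pi(i)}\rceil$ is non-decreasing, each term of \eqref{eq:equalJ} satisfies
\[
C_{\pi(i)}\left\lceil\frac{t+J_{\min}}{T_{\pi(i)}}\right\rceil \leq C_{\pi(i)}\left\lceil\frac{t+J_{\pi(i)}}{T_{\pi(i)}}\right\rceil \leq C_{\pi(i)}\left\lceil\frac{t+J_{\max}}{T_{\pi(i)}}\right\rceil,
\]
and summing the bounds and adding $C_n$ produces the inequality displayed above.

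Next I would translate this pointwise ordering of the demand functions into an ordering of their least fixed points. All three maps are non-decreasing in $t$, and under the standing utilization condition $U_{\pi(1)\ldots \pi(n-1)} < 1$ they admit finite least fixed points $R_n^{\min}$, $R_n^{J_i}$, $R_n^{\max}$ obtained by the Sjodin--Hansson iteration starting from a common lower bound, say $t_0 = C_n$. Writing the $k$-th iterates of $W_n^{J_{\min}}$, $W_n^{J_i}$, $W_n^{J_{\max}}$ as $t_{\min}^{(k)}$, $t^{(k)}$, $t_{\max}^{(k)}$, an induction on $k$ yields $t_{\min}^{(k)} \leq t^{(k)} \leq t_{\max}^{(k)}$: the step from $k$ to $k+1$ combines the pointwise inequality (to pass from $W_n^{J_{\min}}(t_{\min}^{(k)})$ to $W_n^{J_i}(t_{\min}^{(k)})$) with the monotonicity of $W_n^{J_i}$ (to pass from $W_n^{J_i}(t_{\min}^{(k)})$ to $W_n^{J_i}(t^{(k)})$), and symmetrically for the upper bound. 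Taking $k \to \infty$ (the iteration stabilises in finitely many steps) delivers the claimed inequality $R_n^{\min} \leq R_n^{J_i} \leq R_n^{\max}$.

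The argument is essentially routine and the main point requiring care is that the pointwise inequality on the demand functions alone does not compare fixed points, so the monotonicity of the demand functions must be invoked when propagating the ordering through the iteration. No further appeal to the harmonic or constrained-deadline structure is needed for this particular sandwiching step.
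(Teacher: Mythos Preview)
Your proof is correct and follows essentially the same route as the paper: establish the summand-wise inequality $W_n^{J_{\min}}(t)\le W_n^{J_i}(t)\le W_n^{J_{\max}}(t)$ using monotonicity of the ceiling, then deduce the ordering of the least fixed points. The only difference is one of rigor in the second step: the paper simply asserts that the least intersections with the identity inherit the ordering, whereas you spell this out via an induction on the Sj\"odin--Hansson iterates combined with monotonicity of each demand function, which is the cleaner justification.
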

\begin{proof} 
We compare the total sums in the equations for  determining  $R^{min}_n ,R_n^{J_i}$, and $R^{max}_n$ summand-wise. The constant term $C_n$ is equal in the three sums. For the summands with index i we have: $C_{\pi(i)}\left\lceil \frac{t+J_{min}}{T_{\pi(i)}}\right\rceil$,  $C_{\pi(i)}\left\lceil \frac{t+J_{\pi(i)}}{T_{\pi(i)}}\right\rceil$,  and \\ $C_{\pi(i)}\left\lceil \frac{t+J_{max}}{T_{\pi(i)}}\right\rceil$ with $0 \leq J_{min} \leq$  $J_{\pi(i)} \leq J_{max}$. With $x \leq y \Rightarrow \left\lceil x \right\rceil \leq \left\lceil y\right\rceil$ we get:
$\left\lceil \frac{t+J_{min}}{T_{\pi(i)}}\right\rceil \leq \left\lceil \frac{t+J_{\pi(i)}}{T_{\pi(i)}}\right\rceil \leq \left\lceil \frac{t+J_{max}}{T_{\pi(i)}}\right\rceil$. This is true for any $i$ and therefore it is $\forall t, \ \ W^{J_{min}}_{n}(t) \leq W^{J_i}_n(t) \leq W^{J_{max}}_n(t)$. The least intersections  of these functions with the identity function $Id(t)=t$ form the fixed points searched for.  This implies that the fixed points are ordered as the lemma states.
\end{proof}
Unfortunately, the \emph{worst-case response times} for $J_{min}$ and $J_{max}$ may differ greatly. In either case, however, we can determine an upper limit and lower limit on $R_n^{J_i}$ in linear time.  
\subsection{ Different models for the \emph{total interference function}}\label{sec:diff_models}
\sssec {Preliminary results of the exclusion intervals} \label{ssec:exclusionIntervals}
The equation
\begin{equation}\label{eq:Inf}
I_{n-1}(R_n)=\sum_{i=1}^{n-1}C_{\pi(i)}\left\lceil R_n/T_{\pi(i)}\right\rceil
\end{equation} 
 does not describe the only model for the \emph{total interference} for $J=0$, especially in case of task sets with harmonic periods we can modify it. Our goal is to change the time instances of discontinuities individually for different tasks similar to the shift by jitter. We will show that the \emph{worst-case response time} using this different interference model  remains unchanged by this modification such that it can be computed by our method introduced above.  To derive this model and its implication we define time intervals which can not contain $R_n$ as an element and which we call the \emph{exclusion time intervals}. 

 In \eqref{eq:Inf}, the entire execution time of jobs of the task $\tau_{\pi(i)}$ is added
 immediately after the time instants which are multiples of $T_{\pi(i)}$ but
 in fact, the processor processes the activated jobs continuously.   That means that the time intervals  $m_iT_\pi{(i)} < t\leq m_iT_{\pi(i)}+C_{\pi(i)}$ can only be used by the  task $\tau_{\pi(i)}$ or a higher priority task but never to execute parts of the low priority task $\tau_n$.

Hence, the \emph{worst-case response time} for the task $\tau_n$ can not be in the left open time intervals $\left(m_iT_i,m_iT_i+C_i\right], \forall i\leq n-1, \forall m_i \in \mathbb{N} $. Rather it lies before or after such an interval.

When we consider tasks with harmonic periods, we can increase the \emph{exclusion time intervals}. At the time $m_iT_{\pi(i)}$ at least jobs of the tasks whose period divides $T_{\pi(i)}$  are released and must be executed according to their priority before the next time portion can be assigned to  the task $\tau_{n}$. Therefore $R_n$ can not be an element of the time intervals $\left(m_iT_{\pi(i)},m_iT_{\pi(i)}+C_{\pi(i)\ldots \pi(n-1)}\right],  1 \leq i\leq n-1, \forall m_i \in \mathbb{N}$. For the task with the lowest period $T_{\pi(n-1)}$ the \emph{exclusion time intervals} are $\left.\left(m_{n-1}T_{\pi(n-1)},m_{n-1}T_{\pi(n-1)}+C_{\pi(n-1)}\right.\right]$ $\forall m_{n-1} \in \mathbb{N}_0$.

Note that due to the commutativity of the addition $C_{\pi(i+1)\ldots\pi(n-1)}$ the order of the executed tasks after a period is not relevant.

In Figure \ref{fig:2} we give an example of a task system with $n=5$.  The upper solid line shows $C_5+I_{4}(R_5)$  whereas the lower solid line represents $C_5+\widehat{I}_{4}(R_5)$. $R_5$ can only be an element of the time intervals in which the dotted line meets the two solid lines. At each multiple of $T_{\pi(4)}$ an \emph{exclusion time interval}  begins which can not contain $R_5$ and which has different length depending on the tasks that are released at that time. At time $0$,  for example, all 5 tasks  are released resulting in an \emph{exclusion time interval} of length $C_{\pi(1)\dots.\pi(4)}$. The length of the  \emph{exclusion time interval} are shown for the multiples of $T_{\pi(4)}$. For the sake of simplicity, we have presented the task executions after the times $m_4 T_{\pi (4)}$ in the order of increasing periods. When the order of execution is changed, nothing changes in the course of the dotted line.

\begin{figure*}[!htp]
\centering
\includegraphics[width=0.90\textwidth]{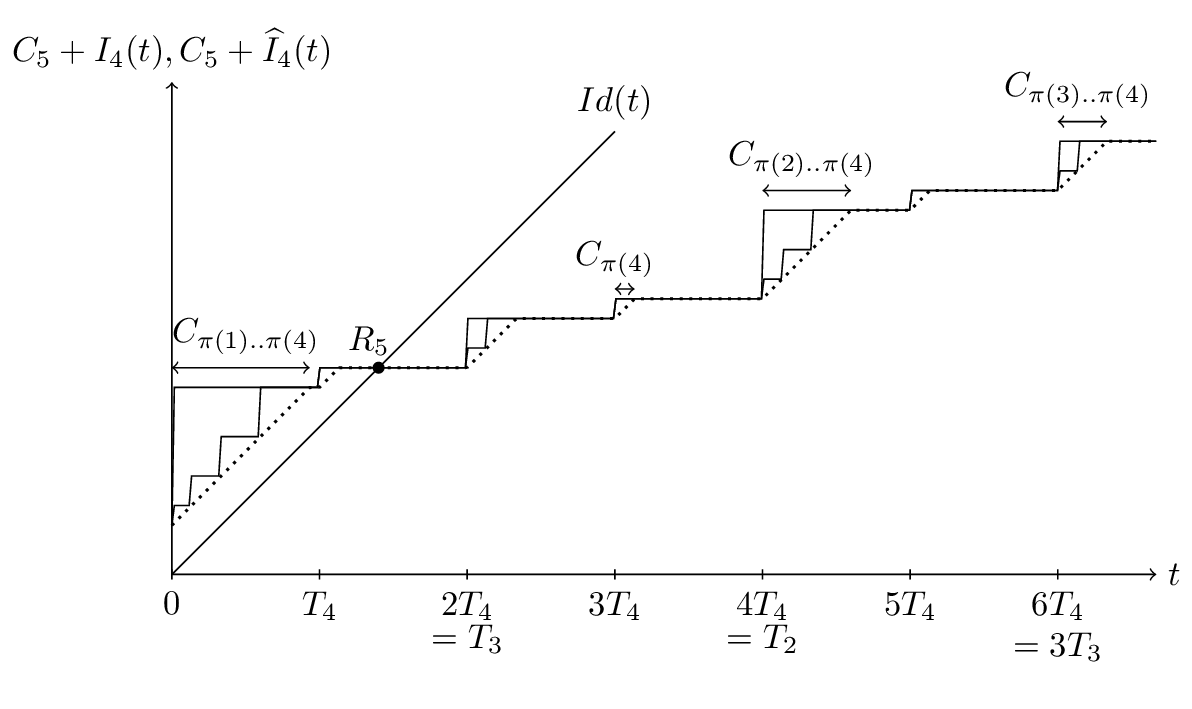}
\caption{The figure shows a 5 task system and explains that shifting activation times of tasks has no influence on the response time}
	\label{fig:2}	
\end{figure*}
We now can formally demonstrate this property of exclusion intervals for $R_n$ as follows: 
\begin{lem}\label{le:exclusionIntervalR}
\begin{equation*}
\forall i = 1\dots n-1, \forall m_i \in \mathbb{N},\ \ \ R_n \notin (m_iT_{\pi(i)}, m_iT_{\pi(i)} + C_{\pi(i)\dots \pi(n-1)}] 
\end{equation*} 
\end{lem}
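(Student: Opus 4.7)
My plan is to combine two ingredients: a synchronous-release observation forced by the reverse rate-monotonic order fixed by $\pi$, and the characterization of $R_n$ as the least fixed point of $W_n(t) = t$ from \eqref{eq:R}.

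The first step I would carry out is a jump analysis of $W_n$ at $t = m_i T_{\pi(i)}$. Because $\pi$ orders the higher-priority tasks by non-increasing period, for every $j$ with $i \le j \le n-1$ we have $T_{\pi(j)} \mid T_{\pi(i)}$, and hence $m_i T_{\pi(i)}/T_{\pi(j)}$ is a positive integer. Since $\lceil \cdot \rceil$ is left-continuous but takes the next integer value immediately past any integer argument, each term $C_{\pi(j)}\lceil t/T_{\pi(j)}\rceil$ in the definition of $W_n(t)$ contributes an upward jump of exactly $C_{\pi(j)}$ at $t = m_i T_{\pi(i)}$. Summing over $j \in \{i,\ldots,n-1\}$, I obtain that for every sufficiently small $\epsilon > 0$, $W_n(m_i T_{\pi(i)} + \epsilon) \ge W_n(m_i T_{\pi(i)}) + C_{\pi(i)\ldots\pi(n-1)}$; any additional contributions coming from tasks $\tau_{\pi(j)}$ with $j<i$ whose periods happen to divide $m_i T_{\pi(i)}$ only enlarge this jump.

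Next I would invoke the standard monotonicity fact that $W_n(t) > t$ for every $0 < t < R_n$: $W_n$ is a non-decreasing step function with only upward jumps and $W_n(0) = C_n > 0$, whereas the identity line grows with unit slope, so the difference $W_n(t)-t$ can vanish only at a fixed point, and the least such fixed point is $R_n$ by definition. The lemma then follows by a case split. If $R_n \le m_i T_{\pi(i)}$ the claim is immediate because the interval is left-open. Otherwise $m_i T_{\pi(i)} < R_n$, so $W_n(m_i T_{\pi(i)}) > m_i T_{\pi(i)}$; combining this strict inequality with the jump bound gives $W_n(m_i T_{\pi(i)} + \epsilon) > m_i T_{\pi(i)} + C_{\pi(i)\ldots\pi(n-1)}$ for small $\epsilon > 0$, and monotonicity of $W_n$ propagates this to $W_n(t) > m_i T_{\pi(i)} + C_{\pi(i)\ldots\pi(n-1)} \ge t$ throughout the entire interval $(m_i T_{\pi(i)}, m_i T_{\pi(i)} + C_{\pi(i)\ldots\pi(n-1)}]$. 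No point of this interval is then a fixed point of $W_n$, so $R_n$ must already exceed its right endpoint.

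The only genuinely delicate step, and the main obstacle I foresee, is the one-sided continuity bookkeeping: I must preserve the strict inequality $W_n(m_i T_{\pi(i)}) > m_i T_{\pi(i)}$ (rather than the weaker $\ge$) and match it with the half-open shape of $(m_i T_{\pi(i)}, m_i T_{\pi(i)}+C_{\pi(i)\ldots\pi(n-1)}]$ so that both endpoints are correctly handled. Once this is pinned down the argument reduces to the structural observation that reverse rate-monotonic ordering forces a simultaneous release of all priorities $\pi(i),\ldots,\pi(n-1)$ at each $m_i T_{\pi(i)}$, producing a block of $C_{\pi(i)\ldots\pi(n-1)}$ units of higher-priority work that must be served before $\tau_n$ can complete.
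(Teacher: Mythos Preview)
Your proof is correct and follows essentially the same approach as the paper: both use that $W_n(m_iT_{\pi(i)})>m_iT_{\pi(i)}$ whenever $m_iT_{\pi(i)}<R_n$, together with the observation that the divisibility $T_{\pi(j)}\mid T_{\pi(i)}$ for $j\ge i$ forces $W_n$ to increase by at least $C_{\pi(i)\ldots\pi(n-1)}$ immediately past $m_iT_{\pi(i)}$. The only cosmetic difference is that the paper phrases this as a contradiction argument comparing $W_n(R_n)-W_n(m_iT_{\pi(i)})$ directly, whereas you use an $\epsilon$-jump plus monotonicity; your $\epsilon$ step is in fact unnecessary, since the inequality $W_n(t)\ge W_n(m_iT_{\pi(i)})+C_{\pi(i)\ldots\pi(n-1)}$ already holds for every $t>m_iT_{\pi(i)}$.
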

\begpro Supposing that $\exists i \in [1,n-1], m_i \in \mathbb{N}$ such that $ m_iT_{\pi(i)} < R_n \leq m_iT_{\pi(i)} + C_{{\pi(i)}\dots {\pi(n-1)}} $, we will prove this Lemma by counterposition.

By definition of \req {eq:R}, since $W_n(t)$ is a non-decreasing function with $W_n(0)>0$, and $R_n$ is the first instant that $W_n(t)=t$, we should have for all $t<R_n$, $W_n(t)>t$, hence $W_n(m_iT_{\pi(i)})>m_iT_{\pi(i)}$. \hfill (*)\\
Also, we have: 

\begarr
W_n(R_n)-W_n(m_iT_{\pi(i)}) &=& \sig {j=1}{n-1} \left( \cefr{R_n}{T_{\pi(j)}} - \cefr {m_iT_{\pi(i)}}{T_{\pi(j)}} \right)C_{\pi(j)} \nonumber\\
&=&\sig {j=1}{i-1} \left( \cefr{R_n}{T_{\pi(j)}} - \cefr {m_iT_{\pi(i)}}{T_{\pi(j)}} \right)C_{\pi(j)} \nonumber\\
 && + \sig {j=i}{n-1} \left( \cefr{R_n}{T_{\pi(j)}} - \cefr {m_iT_{\pi(i)}}{T_{\pi(j)}} \right)C_{\pi(j)} \nonumber\\
&\geq& \sig {j=i}{n-1} \left( \cefr{R_n}{T_{\pi(j)}} - \cefr {m_iT_{\pi(i)}}{T_{\pi(j)}} \right)C_{\pi(j)} \label{eq:mT-R}
\endarr

By nature of our reordered harmonic systems,  for all $j=i\dots n-1$, we have $T_{\pi(j)}|T_{\pi(i)}$ hence $T_{\pi(j)}|m_iT_{\pi(i)}$ and $\cefr {m_iT_{\pi(i)}}{T_{\pi(j)}}=\frac {m_iT_{\pi(i)}}{T_{\pi(j)}} <\frac {R_n}{T_{\pi(j)}}\leq\cefr {R_n}{T_{\pi(j)}}$. Therefore $\forall j \in [i,n-1], \cefr {R_n}{T_{\pi(j)}}-\cefr {m_iT_{\pi(i)}}{T_{\pi(j)}} \geq 1$. Replacing these into \req{eq:mT-R} we have: 

\begarrno
W_n(R_n)-W_n(m_iT_{\pi(i)}) &\geq& \sig {j=i}{n-1} C_{\pi(j)} \\
W_n(R_n) &\geq& W_n(m_iT_{\pi(i)}) + \sig {j=i}{n-1} C_{\pi(j)} \\
R_n=W_n(R_n) &>& m_iT_{\pi(i)} + C_{{\pi(i)}\dots {\pi(n-1)}} \text { (by (*))}
\endarrno
which contradicts that $R_n \in (m_iT_{\pi(i)}, m_iT_{\pi(i)} + C_{{\pi(i)}\dots {\pi(n-1)}}] $. The Lemma is proved. 
\endpro

\sssec{A novel model to compute the task worst-case response times} \label{ssec:anotherModel}
Our new model for calculating harmonic task system \emph{worst-case response time} allows for the treatment of release jitter by applying the algorithm in Lemma \ref{thm2}. In this subsection we start with proving the equivalence of the novel model with the standard model concerning the result of the corresponding fixed point iterations. 

With such a model $\widehat{I}_{n-1}(t)$ of the \emph{total interference}, we can now define another \emph{processor demand function} $\widehat{W}_{n}(t)$ as: 
\begeq \label{Wh}
\widehat{W}_{n}(t)  \equals C_n+ \widehat{I}_{n-1}(t) = C_n + \sig {i=1}{n-1} C_{\pi(i)}\cefr{t-C_{\pi(i+1)\dots \pi(n-1)}} {T_{\pi(i)}}
\endeq 
where $\forall i, C_{\pi(i+1)\dots \pi(n-1)} < T_{\pi(i)}$ because of the assumptions $U_{\pi(1)\dots \pi(n)}< 1$ and $T_{\pi(1)}\geq T_{\pi(2)} ... \geq T_{\pi(n-1)}$. We then obtain its least fixed point $\widehat{R}_{n}$ as: 
\begeq\label{Rh}
\widehat{R}_{n} \equals \min \lef t| \widehat{W}_{n}(t) =t\ri
\endeq
Note that the arguments of the ceiling function in \eqref{Wh} are similar to that of \eqref{eq:equalJ} with the exception that in \eqref{eq:equalJ} positive jitter values are added to the time variable $t$ whereas in \eqref{Wh} the sum of execution times is subtracted from $t$.

We will now demonstrate that this novel analysis model of  $\widehat{I}_{n-1}(t)$,  $\widehat{W}_{n}(t)$ and  $\widehat{R}_{n}$ is equivalent to the response time analysis of ${I}_{n-1}(t)$,  ${W}_{n}(t)$ and  ${R}_{n}$. For this purpose, we first present the following property of  $\widehat{R}_{n}$:

\begin{lem}\label{le:exclusionIntervalRh}
\begeq
\forall i = 1\dots n-1, \forall m_i \in \mathbb{N},\ \ \ \widehat{R}_{n} \notin (m_iT_{\pi(i)}, m_iT_{\pi(i)} + C_{{\pi(i)}\dots {\pi(n-1)}}] 
\endeq 
\end{lem}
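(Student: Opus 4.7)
The plan is to adapt the counterposition argument of Lemma \ref{le:exclusionIntervalR} to the shifted demand function $\widehat{W}_n$. Suppose for contradiction that $\widehat{R}_n \in (m_iT_{\pi(i)},\,m_iT_{\pi(i)} + C_{\pi(i)\ldots\pi(n-1)}]$ for some admissible $i$ and $m_i$, and set $s = \widehat{R}_n - m_iT_{\pi(i)}$, so $s$ is a positive integer (all task parameters, and hence $\widehat{R}_n$, are integers) with $s\le C_{\pi(i)\ldots\pi(n-1)}$. The function $\widehat{W}_n$ is non-decreasing with $\widehat{W}_n(0)=C_n>0$, so by the same standard fixed-point argument used in Lemma \ref{le:exclusionIntervalR} we have $\widehat{W}_n(t)>t$ for every $t<\widehat{R}_n$; combined with integrality this gives $\widehat{W}_n(m_iT_{\pi(i)})\ge m_iT_{\pi(i)}+1$.

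The next step is to bound the increment $\widehat{W}_n(\widehat{R}_n) - \widehat{W}_n(m_iT_{\pi(i)})$ term-by-term. In the reverse rate-monotonic ordering, $T_{\pi(j)}$ divides $T_{\pi(i)}$ for every $j\ge i$, so $k_j := m_iT_{\pi(i)}/T_{\pi(j)}$ is an integer; together with the standing assumption $C_{\pi(j+1)\ldots\pi(n-1)} < T_{\pi(j)}$ this yields
\[
\left\lceil \frac{m_iT_{\pi(i)} - C_{\pi(j+1)\ldots\pi(n-1)}}{T_{\pi(j)}} \right\rceil = k_j,
\]
while at $\widehat{R}_n$ the same ceiling equals $k_j$ when $s\le C_{\pi(j+1)\ldots\pi(n-1)}$ and is at least $k_j+1$ when $s > C_{\pi(j+1)\ldots\pi(n-1)}$. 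Using $\widehat{W}_n(\widehat{R}_n)=\widehat{R}_n$, discarding the non-negative contributions from $j<i$, and letting $\mathcal{S} := \{\,j : i\le j\le n-1,\ s > C_{\pi(j+1)\ldots\pi(n-1)}\,\}$, one obtains
\[
s \;\ge\; 1 + \sum_{j \in \mathcal{S}} C_{\pi(j)}.
\]

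To close the argument I exploit the fact that $C_{\pi(j+1)\ldots\pi(n-1)}$ is strictly decreasing in $j$, so $\mathcal{S}$ is a contiguous suffix $\{j_0,j_0+1,\ldots,n-1\}$ of $\{i,\ldots,n-1\}$ and is non-empty, since $j=n-1$ always belongs ($C_{\pi(n)\ldots\pi(n-1)}=0<s$). By minimality of $j_0$, when $j_0>i$ the index $j_0-1$ fails the defining condition and hence $s\le C_{\pi(j_0)\ldots\pi(n-1)}$; when $j_0=i$ the same bound $s\le C_{\pi(i)\ldots\pi(n-1)}$ is the standing assumption on $\widehat{R}_n$. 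Since $\sum_{j\in \mathcal{S}} C_{\pi(j)} = C_{\pi(j_0)\ldots\pi(n-1)}$, the displayed inequality collapses to $s\ge 1 + C_{\pi(j_0)\ldots\pi(n-1)} \ge 1 + s$, the desired contradiction.

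The main obstacle compared with Lemma \ref{le:exclusionIntervalR} is that the naive termwise bound $\lceil (\widehat{R}_n - C_{\pi(j+1)\ldots\pi(n-1)})/T_{\pi(j)}\rceil \ge k_j+1$ is no longer available for every $j\ge i$: the subtracted offset can absorb the increment $s$ and kill the ``jump'' at those indices $j$ for which $C_{\pi(j+1)\ldots\pi(n-1)}\ge s$. The remedy is to split $\{i,\ldots,n-1\}$ at the cut-off $j_0$ and trade the integer gap of $1$ in $\widehat{W}_n(m_iT_{\pi(i)})\ge m_iT_{\pi(i)}+1$ against precisely those missing jumps; that single unit of slack exactly compensates for the ceilings that do not advance between $m_iT_{\pi(i)}$ and $\widehat{R}_n$, and the contradiction goes through.
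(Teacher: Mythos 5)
Your argument is correct, but it is a genuinely different route from the paper's. The paper proves the exclusion by a backward induction: it first shows $\widehat{R}_n > m_iT_{\pi(i)} + C_{\pi(n-1)}$, then marches the index $k$ down from $n-1$ to $i+1$, at each step evaluating $\widehat{W}_n$ at the intermediate point $m_iT_{\pi(i)} + C_{\pi(k)\ldots\pi(n-1)}$ and extracting the single extra jump from the $j=k-1$ term. You instead perform a one-shot aggregation: you compare $\widehat{W}_n(\widehat{R}_n)$ and $\widehat{W}_n(m_iT_{\pi(i)})$ directly, classify the indices $j\in\{i,\ldots,n-1\}$ by whether the offset $C_{\pi(j+1)\ldots\pi(n-1)}$ is smaller than $s$, observe that the set $\mathcal{S}$ of jumping indices is a non-empty suffix $\{j_0,\ldots,n-1\}$, and derive $s > C_{\pi(j_0)\ldots\pi(n-1)}$ against the opposite bound coming from the minimality of $j_0$. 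This replaces $n-i$ nested evaluations by one. The key observation you need beyond the paper's ingredients is precisely the monotone structure of $\mathcal{S}$ together with the telescoping identity $\sum_{j\ge j_0}C_{\pi(j)} = C_{\pi(j_0)\ldots\pi(n-1)}$, and that is sound. One small remark: your appeal to integrality (giving $\widehat{W}_n(m_iT_{\pi(i)})\ge m_iT_{\pi(i)}+1$) is not actually needed — the strict inequality $\widehat{W}_n(m_iT_{\pi(i)})>m_iT_{\pi(i)}$ already yields $s > C_{\pi(j_0)\ldots\pi(n-1)}$, which contradicts $s\le C_{\pi(j_0)\ldots\pi(n-1)}$ without the extra unit of slack; but the step does no harm, since $\widehat{W}_n$ indeed takes integer values at integer arguments.
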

\begpro
Supposing that $\exists i \in [1,n-1], m_i \in \mathbb{N}$ such that $ m_iT_{\pi(i)} < \widehat{R}_n \leq m_iT_{\pi(i)} + C_{{\pi(i)}\dots {\pi(n-1)}} $, we will prove this Lemma by counterposition.

Note also that since $\widehat{W}_{n}(t)$ is a non-decreasing function with $\widehat{W}_{n}(0)>0$, and $\widehat{R}_n$ is the first instant that $\widehat{W}_{n}(t)=t$, we should have for all $t<\widehat{R}_n, \widehat{W}_{n}(t)>t$. \hfill (*)\\ 

Now we will prove that $\widehat{R}_n> m_iT_{\pi(i)}+ C_{\pi(i)\dots\pi(n-1)}$ by induction. \\
1. First we prove the base case that $ \widehat{R}_n> m_iT_{\pi(i)}+C_{\pi(n-1)}$. 

Since by definition, $\widehat{W}_n(\widehat{R}_n) = \widehat{R}_n$, we have: 
\begin{multline}\label{eq:basech}
\widehat{R}_n-\widehat{W}_n(m_iT_{\pi(i)}) =  \widehat{W}_n(\widehat{R}_n)-\widehat{W}_n(m_iT_{\pi(i)}) \\
= \sig {j=1}{n-1} C_{\pi(j)}\left( \cefr{\widehat{R}_n-C_{\pi(j+1)\dots\pi(n-1)}}{T_{\pi(j)}} - \cefr {m_iT_{\pi(i)}- C_{\pi(j+1)\dots\pi(n-1)}}{T_{\pi(j)}} \right) 
\end{multline}
Let us consider the addends of the accumulated sum above: 

- For $j =1\dots n-2$: Since $\widehat{R}_n > m_iT_{\pi(i)}$ we have: 

\begeq \label{eq:jBeforeLast}
\drop{i=1\dotsn-2,}\cefr{\widehat{R}_n-C_{\pi(j+1)\dots\pi(n-1)}}{T_{\pi(j)}} - \cefr {m_iT_{\pi(i)}- C_{\pi(j+1)\dots\pi(n-1)}}{T_{\pi(j)}} \geq 0
 \endeq

- For $j=n-1$:  By nature of our reordered harmonic systems, with $i\leq n-1$, we have $T_{\pi(n-1)}|T_{\pi(i)}$ hence $T_{\pi(n-1)}|m_iT_{\pi(i)}$ and $\cefr {m_iT_{\pi(i)}}{T_{\pi(n-1)}}=\frac {m_iT_{\pi(i)}}{T_{\pi(n-1)}} <\frac {\widehat{R}_n}{T_{\pi(n-1)}}\leq\cefr {\widehat{R}_n}{T_{\pi(n-1)}}$. Therefore, with $j=n-1$:

 \begeq\label{eq:jLast}
\drop{j= n-1,} \cefr {\widehat{R}_n}{T_{\pi(j)}}-\cefr {m_iT_{\pi(i)}}{T_{{\pi(j)}}} \geq 1
\endeq

Applying \req{eq:jBeforeLast} and \req{eq:jLast} into Eq. \eqref{eq:basech}, we have $\widehat{R}_n - \widehat{W}_n(m_iT_{\pi(i)})\geq C_{\pi(n-1)}$. By (*), we also have $\widehat{W}_n(m_iT_{\pi(i)})>m_iT_{\pi(i)}$, hence $\widehat{R}_n \geq m_iT_{\pi(i)}+ C_{\pi(n-1)}$ and the base case is proved. \\
2. Now supposing we already had $\widehat{R}_n> m_iT_{\pi(i)}+ C_{\pi(k)\dots\pi(n-1)}$ with $i+1\leq k\leq n-1$ we will prove that $\widehat{R}_n>  m_iT_{\pi(i)}+ C_{\pi(k-1) \dots\pi(n-1)}$. We have:  

\begin{multline}
\widehat{R}_n-\widehat{W}_n(m_iT_{\pi(i)}+C_{\pi(k)\dots\pi(n-1)})=\widehat{W}_n(\widehat{R}_n)-\widehat{W}_n(m_iT_{\pi(i)}+C_{\pi(k)\dots\pi(n-1)}) \\
= \sig {j=1}{n-1}C_{\pi(j)}\left( \cefr{\widehat{R}_n-C_{\pi(j+1)\dots\pi(n-1)}}{T_{\pi(j)}} -\right. \\ \left.\cefr {m_iT_{\pi(i)} +C_{\pi(k)\dots\pi(n-1)}- C_{\pi(j+1)\dots\pi(n-1)}}{T_{\pi(j)}} \right) \label{eq:inductive}
\end{multline}

Let us consider the addends of the accumulated sum above: 

- For $j =1\dots n-1, j \neq k-1$: Since by the inductive hypothesis $\widehat{R}_n >  m_iT_{\pi(i)}+ C_{\pi(k)\dots\pi(n-1)}$, we have: 

\begeq \label{eq:jBeforeLastInductive}
 \drop{j =1\dots n-1, j \neq k-1,} \cefr{\widehat{R}_n-C_{\pi(j+1)\dots\pi(n-1)}}{T_{\pi(j)}} - \cefr {m_iT_{\pi(i)}+ C_{\pi(k)\dots\pi(n-1)}- C_{\pi(j+1)\dots\pi(n-1)}}{T_{\pi(j)}} \geq 0
 \endeq

- For $j=k-1$: By the inductive hypothesis, we have $\widehat{R}_n-C_{\pi(j+1)\dots\pi(n-1)}>m_iT_{\pi(i)}+ C_{\pi(k)\dots\pi(n-1)}- C_{\pi(j+1)\dots\pi(n-1)}= m_iT_{\pi(i)}$. Also, $i+1\leq k \infer k-1 \geq i$, then by nature of our reordered harmonic systems, we have $T_{\pi(k-1)}|T_{\pi(i)}$ hence $T_{\pi(k-1)}|m_iT_{\pi(i)}$ and $\cefr {m_iT_{\pi(i)}}{T_{\pi(k-1)}} =\frac {m_iT_{\pi(i)}}{T_{\pi(k-1)}}$. Combine all these we have $\frac{\widehat{R}_n-C_{\pi(j+1)\dots\pi(n-1)}}{T_{\pi(j)}} >\frac {m_iT_{\pi(i)}}{T_{\pi(k-1)}}=\cefr {m_iT_{\pi(i)}}{T_{\pi(k-1)}}= \cefr {m_iT_{\pi(i)}+ C_{\pi(k)\dots\pi(n-1)}- C_{\pi(j+1)\dots\pi(n-1)}}{T_{\pi(j)}}$. Finally, applying $\ceil{x}\geq x$, for $j=k-1$ we obtain:  

 \begeq\label{eq:jLastInductive}
\drop{j=k-1,} \cefr{\widehat{R}_n-C_{\pi(j+1)\dots\pi(n-1)}}{T_{\pi(j)}} - \cefr {m_iT_{\pi(i)}+ C_{\pi(k)\dots\pi(n-1)}- C_{\pi(j+1)\dots\pi(n-1)}}{T_{\pi(j)}} \geq 1
\endeq

 Applying \req{eq:jBeforeLastInductive} and \req{eq:jLastInductive} into \req{eq:inductive}, we have $\widehat{R}_n -\widehat{W}_n(m_iT_{\pi(i)}+ C_{\pi(k)\dots\pi(n-1)}) \geq  C_{\pi(k-1)}$. Also, by (*), $\widehat{W}_n(m_iT_{\pi(i)}+ C_{\pi(k)\dots\pi(n-1)})>m_iT_{\pi(i)}+ C_{\pi(k)\dots\pi(n-1)}$. Consequently, we have $\widehat{R}_n > m_iT_{\pi(i)}+ C_{\pi(k-1)\dots\pi(n-1)}$ and the inductive case is proved. 

Finally, applying the results above inductively from the base case until $k=i+1$, we have $\widehat{R}_n > m_iT_{\pi(i)}+ C_{\pi(i)\dots\pi(n-1)}$, which contradicts the counterpositive hypothesis and the Lemma is proved.  
\endpro

Now having this property of exclusion intervals for $\widehat{R}_{n}$, we could prove the correctness of the novel response time analysis: 

\begin{lem}\label{le:RRh}
\begeq
R_n = \widehat{R}_n
\endeq
\end{lem}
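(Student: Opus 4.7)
The plan is to leverage the two exclusion‑interval lemmas (\ref{le:exclusionIntervalR} and \ref{le:exclusionIntervalRh}) to reduce the equality $R_n = \widehat R_n$ to a simple minimality argument. Both $R_n$ and $\widehat R_n$ are known to lie outside every interval of the form $(m_iT_{\pi(i)},\ m_iT_{\pi(i)} + C_{\pi(i)\dots\pi(n-1)}]$. The key step I would carry out first is to show that on the complement of the union of these exclusion intervals, the two processor demand functions agree pointwise: $W_n(t) = \widehat W_n(t)$.

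To prove this pointwise coincidence I would fix an arbitrary admissible $t$ and an index $i \in \{1,\ldots,n-1\}$, and set $m_i = \lceil t/T_{\pi(i)}\rceil - 1$ so that $m_iT_{\pi(i)} < t \leq (m_i+1)T_{\pi(i)}$. Since $t$ avoids the exclusion interval at position $i$, it satisfies $t > m_iT_{\pi(i)} + C_{\pi(i)\dots\pi(n-1)}$, which strictly exceeds $m_iT_{\pi(i)} + C_{\pi(i+1)\dots\pi(n-1)}$ because $C_{\pi(i)\dots\pi(n-1)} = C_{\pi(i)} + C_{\pi(i+1)\dots\pi(n-1)}$ and $C_{\pi(i)} > 0$. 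Hence $t - C_{\pi(i+1)\dots\pi(n-1)}$ lies strictly between $m_iT_{\pi(i)}$ and $(m_i+1)T_{\pi(i)}$, so its ceiling divided by $T_{\pi(i)}$ equals $m_i+1 = \lceil t/T_{\pi(i)}\rceil$. Summing the matching summands over $i$ yields $W_n(t) = \widehat W_n(t)$.

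The final step is a symmetric minimality argument. Because $R_n$ lies in the complement of the exclusion intervals (Lemma \ref{le:exclusionIntervalR}), the coincidence just established gives $\widehat W_n(R_n) = W_n(R_n) = R_n$, so $R_n$ is a fixed point of $\widehat W_n$; by minimality $\widehat R_n \leq R_n$. Using Lemma \ref{le:exclusionIntervalRh} the reverse direction proceeds identically: $W_n(\widehat R_n) = \widehat W_n(\widehat R_n) = \widehat R_n$, whence $R_n \leq \widehat R_n$. The two inequalities combine to $R_n = \widehat R_n$.

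I do not foresee a major obstacle here; the two exclusion‑interval lemmas perform the heavy lifting. The only point requiring care is keeping the telescoping $C_{\pi(i)\dots\pi(n-1)} = C_{\pi(i)} + C_{\pi(i+1)\dots\pi(n-1)}$ in view, since it is precisely the extra buffer of width $C_{\pi(i)}$ between the exclusion boundary and the next multiple of $T_{\pi(i)}$ that guarantees both ceilings land on the same integer. The edge case $i = n-1$ causes no trouble because $C_{\pi(n)\dots\pi(n-1)}$ is an empty sum equal to $0$, so the $(n-1)$-th summand of $\widehat W_n$ already coincides with that of $W_n$ on the whole positive axis.
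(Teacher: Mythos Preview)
Your proposal is correct and follows essentially the same route as the paper: both arguments show that on the complement of the exclusion intervals the individual ceiling summands of $W_n$ and $\widehat{W}_n$ coincide, and then use Lemmas~\ref{le:exclusionIntervalR} and~\ref{le:exclusionIntervalRh} together with the least-fixed-point characterization to obtain the two inequalities $\widehat{R}_n \leq R_n$ and $R_n \leq \widehat{R}_n$. Your write-up is in fact slightly more explicit than the paper's in spelling out why the buffer $C_{\pi(i)}$ forces the two ceilings to agree, but the underlying idea is identical.
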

\begpro  We compare the associated ceiling functions composing $W_n(t)$ and $\widehat{W}_n(t)$:

 $$\left\lceil t/T_{\pi(j)}\right\rceil \\\ \text{and} \\\left\lceil (t-C_{\pi(j+1)\ldots \pi(n-1)})/T_{\pi(j)}\right\rceil$$

The two terms have the same value for $m_jT_{\pi(j)}+C_{\pi(j+1)\ldots \pi(n-1)} {<} t {\leq}  (m_j+1)T_{\pi(j)}$ for any $j \in[1, n-1]$.  \hfill (*)

By \rle{le:exclusionIntervalR}, the times of (*) are contained in the permissible value range of ${R}_n$. Therefore for $t = {R}_n$ we obtain $ W_n(t)=\widehat{W}_n(t)$. Since for $t = {R}_n, {W}_n(t) =t $ then for $t = {R}_n, \widehat{W}_n(t) =t$, i.e., ${R}_n$ is a fixed point of the equation $\widehat{W}_n(t)=t$. Also, by definition, $\widehat{R}_n$ is the least fixed point of $\widehat{W}_n(t)=t$, hence we have $\widehat{R}_n \leq R_n$.  \hfill(**)

Now by \rle{le:exclusionIntervalRh}, the times of (*) are also contained in the permissible value range of $\widehat{R}_n$. Therefore for $t = \widehat{R}_n$, we also obtain $W_n(t)=\widehat{W}_n(t)$. Since for $t = \widehat{R}_n, \widehat{W}_n(t) =t $ then for $t = \widehat{R}_n, W_n(t) = t$, i.e., $\widehat{R}_n$ is a fixed point of the equation $W_n(t) =t$. Since $R_n$ is the least fixed point of $W_n(t)=t$, we have $R_n\leq\widehat{R}_n$. \hfill(***)

From (**) and (***), the Lemma is proved. 
\endpro
We now assume that the activation times of the tasks $\tau_{\pi(i)}$ are delayed by a value 
$$0<\Delta_{\pi(i)} \leq C_{\pi(i+1)\ldots\pi(n-1)}$$\label{eq:del}
 and give an answer to the question whether this delays have an influence on the \emph{worst-case response time}.
\begin{lem}
Let be 
\begin{equation*}
R^{\Delta}_n=C_n+I^{\Delta}_{n-1}(R^{\Delta}_n)
\end{equation*}
with 
\begin{equation*}
I^{\Delta}_{n-1}(R^{\Delta}_n) = \sum_{i=1}^{n-1}C_{\pi(i)}\left\lceil (R^{\Delta}_n-\Delta_{\pi(i)})/T_{\pi(i)}\right\rceil
\end{equation*}
and
\begin{equation*}\label{eq:rel}
0 \leq \Delta_{\pi(i)} \leq C_{\pi(i+1)\ldots\pi(n-1)}
\end{equation*}
then
\begin{equation*}
R_n^{\Delta} = R_n
\end{equation*}
\end{lem}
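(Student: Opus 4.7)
The plan is to avoid redoing the long inductive argument of Lemma~\ref{le:exclusionIntervalRh} and instead bracket $R_n^{\Delta}$ between $R_n$ and $\widehat{R}_n$, exploiting that Lemma~\ref{le:RRh} already identifies these two endpoints. The key observation is a pointwise monotonicity in the shift parameters: writing
\[
W_n^{\Delta}(t) \;=\; C_n + \sum_{i=1}^{n-1} C_{\pi(i)}\left\lceil \frac{t-\Delta_{\pi(i)}}{T_{\pi(i)}}\right\rceil ,
\]
each ceiling term is non-increasing in $\Delta_{\pi(i)}\in[0,C_{\pi(i+1)\ldots\pi(n-1)}]$, since enlarging $\Delta_{\pi(i)}$ only decreases the argument. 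The two extreme choices $\Delta_{\pi(i)}=0$ and $\Delta_{\pi(i)}=C_{\pi(i+1)\ldots\pi(n-1)}$ recover $W_n$ and $\widehat{W}_n$ respectively, so for every admissible $\Delta$ one has $W_n(t)\geq W_n^{\Delta}(t)\geq \widehat{W}_n(t)$ for all $t\geq 0$.

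Next I would transfer this pointwise order to the least fixed points. For any non-decreasing step function $W$ with $W(0)>0$, the value $W(t)-t$ starts positive and stays strictly positive on $[0,R)$, where $R$ is the least fixed point of $W(t)=t$ (otherwise $R$ would not be least). Hence if $W_1(t)\geq W_2(t)$ everywhere, no fixed point of $W_1$ can lie strictly below $R_2$: for $t<R_2$ we would have $W_1(t)\geq W_2(t)>t$. Applying this twice to the sandwich above yields
\[
R_n \;\geq\; R_n^{\Delta} \;\geq\; \widehat{R}_n .
\]

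Finally, Lemma~\ref{le:RRh} asserts $R_n=\widehat{R}_n$, so both inequalities collapse to equalities and $R_n^{\Delta}=R_n$, which is exactly the claim.

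The only delicate point — what I expect to be the main obstacle to state precisely — is the fixed-point monotonicity lemma, because one must be clear that $R_n^{\Delta}$ (and the two endpoints) are well defined and that $W_n^{\Delta}(t)>t$ on $[0,R_n^{\Delta})$; this follows from $W_n^{\Delta}$ being non-decreasing, piecewise constant, and taking the value $C_n>0$ at $t=0$ (indeed $W_n^{\Delta}(0)\geq C_n$ since $\lceil -\Delta_{\pi(i)}/T_{\pi(i)}\rceil\geq 0$ thanks to $0\leq \Delta_{\pi(i)}<T_{\pi(i)}$, which holds because $C_{\pi(i+1)\ldots\pi(n-1)}<T_{\pi(i)}$ by the same utilization argument used just before Eq.~\eqref{Wh}). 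Once this is in place, the rest is purely a sandwich argument relying only on Lemma~\ref{le:RRh}.
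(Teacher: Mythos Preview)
Your proposal is correct and follows essentially the same approach as the paper: sandwich $R_n^{\Delta}$ between $R_n$ and $\widehat{R}_n$ via the pointwise monotonicity of the ceiling terms in the shift parameter, then invoke Lemma~\ref{le:RRh} to collapse the inequalities. Your write-up is in fact more careful than the paper's, which states the sandwich inequality and the conclusion in two lines without spelling out the fixed-point monotonicity step.
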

\begin{proof}
Because of the relation \eqref{eq:rel} and Lemma , we get 
\begin{equation*}
\widehat{R}_n \leq R_n^{\Delta} \leq R_n
\end{equation*}
With Lemma \ref{le:RRh} holds
$$\widehat{R}_n = R_n^{\Delta} = R_n $$
\end{proof}
\subsection{Strongly restricted different jitters for different tasks}\label{sec:restricted}
In this section, different values of the jitter are considered.  However, it is assumed that the task with the smallest period has the largest jitter and all other task jitters are appropriately limited so that the Lemmas of section \ref{sec:diff_models} are applicable. 

We have also to cope with the problem that jitter shifts the time of the discontinuities to the left whereas \emph{execution time intervals} go in the opposite direction. 

Note that  there is no margin for the value  $\Delta_{\pi(n-1)}$  in \eqref{eq:del} of the task with the smallest period i.e. we have $\Delta_{\pi(n-1)} = 0$.
To derive the ranges for the other jitters we assume $J_{max}=J_{\pi(n-1)}$ and consider
\begin{equation} \label{eq:IJmax}
W_{n-1}^{Jmax}(t)=C_n+\sum_{i=1}^{n-1}C_{\pi(i)}\left\lceil \frac{t+J_{\pi(n-1)}}{T_{\pi(i)}}\right\rceil
\end{equation}
and
\begin{equation} \label{eq:WJmax}
\widehat{W}_{n-1}^{Jmax}(t)=C_n+\sum_{i=1}^{n-1}C_{\pi(i)}\left\lceil \frac{t+J_{\pi(n-1)}-\Delta_{\pi(i)}}{T_{\pi(i)}}\right\rceil
\end{equation}
which lead to the same \emph{worst-case response time}.

Since the jitter $ J_{\pi(n-1)}$ in these equations is the same for all tasks we can determine the fixed point using our new method and get the \emph{worst-case response time}. 

We have shown in the previous section the extent to which the times of the discontinuities may vary without affecting the value of the fixed point. This possibility of variation does not change if the times of the discontinuities are shifted from $m_iT_{\pi(i)}$ to the new reference time $m_iT_{\pi(i)}-J_{\pi(n-1)}$. Accordingly, we introduce a virtual jitter into the formula for the \emph{total interference function} related to this reference time:
\begin{equation}\label{eq:IJi} 
I_{n-1}^{J_i}(t)=\sum_{i=1}^{n-1}C_{\pi(i)}\left\lceil \frac{t+J_{\pi(i)}}{T_{\pi(i)}}\right\rceil=
\sum_{i=1}^{n-1}C_{\pi(i)}\left\lceil \frac{t+J_{\pi(n-1)}-J_{\pi(n-1)}+J_{\pi(i)}}{T_{\pi(i)}}\right\rceil
\end{equation}
Where $t+J_{\pi(n-1)}$ describes the reference time and $\Delta_{\pi(i)}=J_{\pi(n-1)}-J_{\pi(i)}$ the virtual jitter bounded by \eqref{eq:del}.
\begin{lem}\label{lem:delta} 
For a set of $n$ tasks with jitters the equations
\begin{equation*}
R_n^{J_{max}} = W_{n-1}^{Jmax}(R_n^{J_{max}}) 
\end{equation*}
\begin{equation*}
R_n^{J_{i}} =\widehat{W}_{n-1}^{Jmax}(R_n^{J_{i}}) 
\end{equation*}
where the \emph{processor} demand function is defined in  \eqref{eq:IJmax} and \eqref{eq:WJmax} we get equal solutions:
\begin{equation*}
R_n^{J_{i}} = R_n^{J_{max}}
\end{equation*}
if 
\begin{equation}\label{eq:real_jitter_cond}
\forall 1 \leq i \leq n-1,\ \ \  max\left(0, J_{\pi(n-1)}-C_{\pi(i+1\dots\pi(n-1)}\right) \leq J_{\pi(i)} \leq J_{\pi(n-1)}
\end{equation}
\end{lem}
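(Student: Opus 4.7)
The plan is to reduce the claim to the preceding $\Delta$-lemma (the one stating $R_n^{\Delta} = R_n$) by an affine shift of the time variable. First I would set $\Delta_{\pi(i)} := J_{\pi(n-1)} - J_{\pi(i)}$ for $1 \leq i \leq n-1$. The hypothesis \eqref{eq:real_jitter_cond} then translates termwise into
\begin{equation*}
0 \leq \Delta_{\pi(i)} \leq C_{\pi(i+1)\ldots\pi(n-1)},
\end{equation*}
which is precisely the admissibility condition required by that $\Delta$-lemma. In the boundary case $i = n-1$ the upper bound degenerates to the empty sum $0$, forcing $\Delta_{\pi(n-1)} = 0$, consistent with the fact that the reference task at the smallest period contributes no virtual shift.

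Next I would carry out the substitution $s := t + J_{\pi(n-1)}$ in both fixed-point equations. From \eqref{eq:IJmax}, setting $u := R_n^{J_{max}} + J_{\pi(n-1)}$, the equation $R_n^{J_{max}} = W_{n-1}^{J_{max}}(R_n^{J_{max}})$ becomes
\begin{equation*}
u = C_n' + \sum_{i=1}^{n-1} C_{\pi(i)} \left\lceil \frac{u}{T_{\pi(i)}} \right\rceil,
\end{equation*}
with $C_n' := C_n + J_{\pi(n-1)} > 0$; this is the ordinary jitter-free fixed-point equation with augmented base cost. From \eqref{eq:WJmax}, and using the rewriting that underlies \eqref{eq:IJi}, setting $v := R_n^{J_i} + J_{\pi(n-1)}$ the equation $R_n^{J_i} = \widehat{W}_{n-1}^{J_{max}}(R_n^{J_i})$ becomes
\begin{equation*}
v = C_n' + \sum_{i=1}^{n-1} C_{\pi(i)} \left\lceil \frac{v - \Delta_{\pi(i)}}{T_{\pi(i)}} \right\rceil,
\end{equation*}
which is exactly the $\Delta$-shifted form handled by the preceding lemma, with the same base cost $C_n'$.

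Then I would invoke the preceding $\Delta$-lemma with $C_n$ replaced by $C_n'$ to conclude that these two $s$-equations share their least fixed point, i.e.\ $u = v$. Since $s \mapsto s - J_{\pi(n-1)}$ is an increasing bijection that preserves least fixed points of monotone non-decreasing operators, this equality transports back to $R_n^{J_i} = R_n^{J_{max}}$, proving the lemma.

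The main obstacle I anticipate is not the algebra of the shift but rather the bookkeeping step of confirming that the preceding $\Delta$-lemma, whose proof ultimately rests on the exclusion-interval arguments of Lemmas~\ref{le:exclusionIntervalR} and~\ref{le:exclusionIntervalRh} together with the equivalence established in Lemma~\ref{le:RRh}, goes through unchanged when $C_n$ is replaced by $C_n' = C_n + J_{\pi(n-1)}$. Since those lemmas only use that the processor demand function is non-decreasing with positive value at $0$ and that the higher-priority periods divide $T_{\pi(i)}$, and none of these properties is affected by inflating the constant $C_n$, the substitution is harmless and the reduction is complete.
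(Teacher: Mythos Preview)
Your proof is correct and follows essentially the same route as the paper: identify $\Delta_{\pi(i)} = J_{\pi(n-1)} - J_{\pi(i)}$, observe that \eqref{eq:real_jitter_cond} is exactly the admissibility condition $0 \le \Delta_{\pi(i)} \le C_{\pi(i+1)\ldots\pi(n-1)}$, and then invoke the preceding $\Delta$-lemma. The paper's own proof records only the algebraic equivalence of the two conditions and leaves the equality of fixed points to the surrounding discussion (the assertion after \eqref{eq:WJmax} that the two demand functions ``lead to the same worst-case response time'' under a uniform shift of the discontinuity reference time); your explicit affine substitution $s = t + J_{\pi(n-1)}$, with $C_n$ replaced by $C_n' = C_n + J_{\pi(n-1)}$, makes that step rigorous and correctly notes that the exclusion-interval Lemmas~\ref{le:exclusionIntervalR}--\ref{le:RRh} depend on $C_n$ only through positivity of $W_n(0)$, so they transfer unchanged.
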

\begin{proof}
In order to fulfill the condition of Lemma \ref{lem:delta} for the the virtual jitter  $J_{\pi(n-1)}-J_{\pi(i)} = \Delta_{\pi(i)}$ must hold. Eq. \eqref{eq:rel} defines the range of  $\Delta_{\pi(i)}$. So we have
$0 \leq J_{\pi(n-1)}-J_{\pi(i)}  \leq  C_{\pi(i+1)\ldots\pi(n-1)}$. From this follows:
$$ J_{\pi(n-1)} - C_{\pi(i+1)\ldots\pi(n-1)}  \leq J_{\pi(i)} \leq J_{\pi(n-1)} $$
Since the upper bound could be $<0$ whereas jitters are $\geq 0$ we introduced the \textit{max}-function.
\end{proof}

Since the maximum jitter should be assigned to the task $\tau_{n-1}$ we extend the rules for task ordering. Again,  $\pi$ describes a mapping of an ordered set from priority order to reverse rate monotonic ordering. A tie is broken by ordering the task with equal periods by growing jitter.

\begin{table}[htbp]
\caption{Example task set 1 }
\begin{center}
\begin{tabular}{|c|c|c|c|c|c|c|c|}

\hline
\multicolumn{4}{|c|}{\rule[-1mm]{0mm}{4.2mm}{\text{task parameters}}}&
\multicolumn{4}{|c|}{\rule[-1mm]{0mm}{4.2mm}{\text{ derived parameters} }}

\\
\hline 
\rule[-1mm]{0mm}{4.5mm}\textbf{$i$} & \textbf{$T_i$}& \textbf{$C_i$}&\textbf{$J_i$}& $U_i$&$T_i-J_i$&$J_{max}$&\textbf{response time}   \\
\hline
\rule[-0.5mm]{0mm}{4.1mm}1& 60 &6 & 8 &0.1&52&-&  6     \\
\hline
\rule[-0.5mm]{0mm}{4.1mm}2& 60 &8 & 0&0.133&60&8&  14     \\
\hline
\rule[-0.5mm]{0mm}{4.1mm}3& 30 &4 & 9 &0.133&21&9& 18   \\
\hline
\rule[-0.5mm]{0mm}{4.1mm}4& 360 &13 &7 &0.036&353&9& 35  \\
\hline
\rule[-0.5mm]{0mm}{4.1mm}5& 120 &7 &3 &0.058&117&9& 42  \\
\hline
\rule[-0.5mm]{0mm}{4.1mm}6& 360 &12 & 9 &0.033&351&9& 72     \\
\hline

\end{tabular} 
\label{tab1}
\end{center}
\end{table}

We now explain our method for determining the \emph{worst-case response time} using a concrete example whose parameters are listed in Table \ref{tab1}. The tasks are ordered by growing priority. We assume that the relative deadline of the tasks is $D_i = T_i$.
\begin{itemize}
	\item Task $\tau_1$: By \eqref{eq:bcwt} we have with $n=1$ and $C_{\pi(1)\dots\pi(n-1)}=0$ $R_1^{(0)}=C_1=6$.
	\item Task $\tau_2$: The only task with a higher priority is $\tau_1$. Thus we have $\pi: 1\rightarrow 1$. We get by \eqref{eq:bcwt} with $n=2$, $J=J_{max}=J_1=8$:  $\widetilde{R}_2^{(0)}=(C_2-J_{max})/(1-U_1) +J_{max}=88/9$. By \eqref{eq:itstwtJ}: $\widetilde{R}_2^{(1)}=\widetilde{R}_2^{(0)} -U_1(\widetilde{R}_2^{(0)}+J_{max})+C_1\left\lceil (\widetilde{R}_2^{(0)}+J_{max})/T_1 \right\rceil=14$
	\item Task $\tau_3$: We now have 2 tasks with higher priority, which have the same  periods. To sort by growing jitter we have to  reorder the two tasks. Thus we have  $\pi: 1,2 \rightarrow 2,1$ or $\tau_{\pi(1)}=\tau_2$ and $\tau_{\pi(2)}=\tau_1$.  
	
We get by \eqref{eq:bcwt} with $n=3$, $J=J_{max}=J_{\pi(2)}=8$:  

$\widetilde{R}_3^{(0)}=(C_3{-}J_{max})/(1{-}U_{\pi(1)\dots\pi(2)}) +J_{max}=176/23$. 
	
By \eqref{eq:itstwtJ}: 

$\widetilde{R}_3^{(1)} = \widetilde{R}_3^{(0)} + \left(-U_{\pi(1)}\left(\widetilde{R}_3^{(0)} + J_{max}\right) + C_{\pi(1)}\left\lceil(\widetilde{R}_3^{(0)}+J_{max}) / T_{\pi(1)} \right\rceil\right)$ / $\left(1-U_{\pi(2)}\right) = 128/9$
	
and $\widetilde{R}_3^{(2)}=\widetilde{R}_3^{(1)} -U_{\pi(2)}\left(\widetilde{R}_3^{(1)}+J_{max}\right)+C_{\pi(2)}\left\lceil (\widetilde{R}_3^{(1)}+J_{max})/T_{\pi(2)} \right\rceil = 18$.
\end{itemize}
	
The final results of the further steps are listed in Table \ref{tab1}. Note that for task with priority 4 we have $\pi: 1,2,3\rightarrow 2,1,3$, for task 5 $\pi:1,2,3,4\rightarrow 4,2,1,3$, and for task 6 $\pi:1,2,3,4,5\rightarrow 4,5,2,1,3$.  We see from the Table that all \emph{worst-case response times} $R_i$ are lower than $T_i-J_i$, thus the task system is schedulable.


 \subsection{Loosening the restrictions on jitter}\label{sec:looserestjit}
The condition introduced in the last section that $J_{\pi(n-1)}$ must be greater than any other jitter value $J_{\pi_i}$ with $1\leq i \leq n-2$, and the resulting other consequences for these jitters established in  \eqref{eq:real_jitter_cond}, may be too severe for practical applications. In this section we want to adapt the allowed jitter to larger value ranges.

 It is known that in harmonic systems jitters are often small and for some tasks even 0. So it makes sense, from a practical point of view, to limit the jitters for all $i$ to $J_i <T_i$. This does not mean, however, that we can specify a solution for all task systems that meet these constraints. Rather, we do allow a change in the restrictions, so that  the jitter of task $\tau_{\pi(n-1)}$ need not  be the largest one and that also several tasks can have a zero-jitter

For this purpose we introduce a new virtual jitter for the tasks

\begin{equation}\label{eq:vJ}
J'_{\pi(i)}= J_{\pi(i)} + m_i T_{\pi(i)}
\end{equation}
where for all $i$ $m_i \in \mathbb{N}_0$. 

 Introducing this virtual jitter into \eqref{eq:equalJ} we have to observe that the value of $I_{n-1}^{J_i}(t)$ for all $t$ is not changed. Therefore, outside of the ceiling functions, we subtract the same amount that we add up inside.
\begin{equation}\label{eq:Jmax2} 
\ \forall t \in \mathbb{R}, \ \ I_{n-1}^{J_i}(t)=\sum_{i=1}^{n-1}C_{\pi(i)}\left\lceil \frac{t{+}J_{\pi(i)}}{T_{\pi(i)}}\right\rceil=  \sum_{i=1}^{n-1}C_{\pi(i)}\left(\left\lceil \frac{t{+}J'_{\pi(i)}}{T_{\pi(i)}}\right\rceil-m_i\right)
\end{equation}
\textcolor[rgb]{0,0,0}{In section \ref{sec:restricted} we have shown that for real jitter values $J_{\pi(i)}$ that fulfill condition \eqref{eq:real_jitter_cond}, we can determine the \emph{worst-case response time} by replacing the constant jitter value $J$ with $J_{max}=J'_{\pi(n-1)}$ in Lemma \ref{thm2}. We now want to determine the virtual jitter values according to equation \eqref{eq:vJ} so that the same method can be applied to the virtual jitters this time by replacing $J$ with $J'_{max}=J'_{\pi(n-1)}$ that must be made greater than or equal to all other virtual jitters. Since the virtual jitter can be changed in steps of of height $T_{\pi(n-1)}$, there is a greater number of virtual jitter value sets for which the \emph{worst-case response time} can be determined using our method.}

The following therefore apply in detail:
\begin{enumerate}	
	\item \textcolor[rgb]{0,0,0}{The virtual jitter $J'_{\pi(n-1)}$ must be the largest virtual jitter i.e. $J'_{max}=_{def} \max_i J'_{\pi(i)}=J'_{\pi(n-1)}$ defines an upper bound  for all other virtual jitters: }
	\begin{equation}\label{eq:viJup}
	\forall i, \ \ 1\leq i \leq n-2 \ \ J'_{\pi(n-1)} \geq  J'_{\pi(i)}.
	\end{equation}
	\item In a similar way we can define a lower bound  for the virtual jitters
	\begin{equation}\label{eq:viJlo}
	\forall i, \ \ 1\leq i \leq n-2 \ \ J'_{\pi(n-1)}- C_{\pi(i+1)..\pi(n-1)} \leq  J'_{\pi(i)}.
	\end{equation}

	\item  We write \eqref{eq:viJup} and \eqref{eq:viJlo}
	
	in terms of  real task jitters. From \eqref{eq:viJup} follows
		\begin{equation}\label{eq:ineq2}
   \forall i,\ \ 1\leq i\leq n-2,\ \ J_{\pi(i)}+ m_iT_{\pi(i)} \leq J_{\pi(n-1)}+m_{n-1}T_{\pi(n-1)}
	\end{equation}

	If we consider that $J_{\pi(i)} \geq 0$ we get by \eqref{eq:viJlo} 
	\begin{multline}\label{eq:ineq}
	 \forall i,\ \ 1\leq i\leq n-2,\\ \max(m_iT_{\pi(i)}, J_{\pi(n-1)}+m_{n-1}T_{\pi(n-1)}- C_{\pi(i+1)..\pi(n-1)}) \leq  J_{\pi(i)}+ m_iT_{\pi(i)}
	\end{multline}

\end{enumerate}

Once we have found a valid set of values of the variables $m_i$, we can determine the \emph{total interference} as the basis of the fixed point iteration for determining the \emph{worst-case response time} as follows:
\begin{equation}\label{eq:interf}
I'^{\ \max}_{n-1}(t)=\sum_{i=1..n-1} C_{\pi(i)}\left(\left\lceil \frac{t{+}J'_{max}}{T_{\pi(i)}}\right\rceil-m_i\right)
\end{equation}
Jitter is now constant in all ceiling terms, so we can use the Lemma 6 \ref{thm2} where we set $J=J'_{max}$ to determine the fixed point for the recursive equation:
\begin{equation}\label{eq:resp_time}
R_{n}= C_{n} + I'^{\ \max}_{n-1}(R_{n})
\end{equation}

The variables $m_i$ must  be integers and therefore we have a special type of constraint programming problem. The set of points that satisfy the constraints  is called feasibility region.  If there are no such points  the feasible region is the null set and the problem has no solution what means that it is infeasible. If there exists at least one solution the constraint program is feasible.

Note that we  have two variables per inequality and  the variable $m_{n-1 }$ is contained in every constraints. Furthermore we have two constraints for each  pair of variables $m_{n-1 }$ and $m_i$ with $1 \leq i \leq {n-2}$ and therefore there are $2(n-2)$ constraints. Our  system of constraints is called monotone since each constraint is an inequality on two variables with coefficients of opposite signs. 

The property of two variables per constraint present in our problem type has been extensively discussed in the literature, generally assuming that bounds are known for the value ranges of the variables. In \cite{Bar-Yehuda2001} more general integer programs with 2 variables per constraint are considered and an $\mathcal{O}(2\Delta(n-2))$  feasibility algorithm is proposed, where $\Delta$ denotes the maximum value range of any variable.
As proved in \cite{Hochbaum:1994:SFA:194602.194606}  the problem of finding a feasible solution of a system of monotone inequalities in integers is weak NP complete. The proposed algorithm transforms a fractional solution of the corresponding LP program step by step into a solution of the ILP. 
  
In our problem, such bounds are not present as values, but are represented by terms containing the variable $m_{n-1}$. In addition, all feasible solutions are equivalent, since we can apply appropriate corrections to the \emph{total interference function} \eqref{eq:Jmax2}.

\subsubsection{Defining the value of $m_1$}
If the constraint program is feasible we can calculate from a valid solution infinitely many other solutions by adding to or subtracting from all $m_i$ values a multiple of the integer $T_{\pi(1)}/T_{\pi(i)}$ i.e.,  $\alpha T_{\pi(1)}/T_{\pi(i)}$ where $\alpha \in \mathbb{Z}$. This is shown with the following Lemma.
\begin{lem}
If we know a solution in the feasibility region of the constraint program described in \eqref{eq:ineq2} and \eqref{eq:ineq} with the variable values $\widehat{m}_i$ then there are also infinity many solutions in this region with the  values  of the variables
\begin{equation}\label{eq:mod}
\forall 1 \leq i \leq n-1, \ \ m_{i} = \widehat{m}_{i}+ \alpha \frac{T_{\pi(1)}}{T_{\pi(i)}}
\end{equation}
where $\alpha \in \mathbb{Z}$.
\end{lem}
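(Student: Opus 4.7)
The plan is to show invariance of both families of constraints \eqref{eq:ineq2} and \eqref{eq:ineq} under the substitution $m_i \mapsto \widehat{m}_i + \alpha T_{\pi(1)}/T_{\pi(i)}$, after first verifying that this substitution keeps us inside the search space ($m_i \in \mathbb{N}_0$).

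First I would establish two integrality/positivity facts. Because the tasks are harmonic and ordered in reverse rate-monotonic fashion, $T_{\pi(i)} \mid T_{\pi(1)}$ for every $1 \leq i \leq n-1$, so the ratio $T_{\pi(1)}/T_{\pi(i)}$ is a positive integer. Consequently, for any $\alpha \in \mathbb{N}_0$, $\widehat{m}_i + \alpha T_{\pi(1)}/T_{\pi(i)}$ is a non-negative integer, which shows that infinitely many candidate assignments lie in the admissible integer domain.

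Next I would plug the modified variables into each constraint. The observation that drives the proof is that the substitution produces the same additive shift $\alpha T_{\pi(1)}$ on both $m_i T_{\pi(i)}$ and $m_{n-1} T_{\pi(n-1)}$:
\begin{equation*}
\left(\widehat{m}_i + \alpha \tfrac{T_{\pi(1)}}{T_{\pi(i)}}\right) T_{\pi(i)} = \widehat{m}_i T_{\pi(i)} + \alpha T_{\pi(1)},
\end{equation*}
and likewise for the index $n-1$. Substituting into \eqref{eq:ineq2} yields
\begin{equation*}
J_{\pi(i)} + \widehat{m}_i T_{\pi(i)} + \alpha T_{\pi(1)} \;\leq\; J_{\pi(n-1)} + \widehat{m}_{n-1} T_{\pi(n-1)} + \alpha T_{\pi(1)},
\end{equation*}
and the common term $\alpha T_{\pi(1)}$ cancels, reducing the inequality to the original one, which holds by hypothesis on $\widehat{m}$. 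The same cancellation applies inside the $\max$ of \eqref{eq:ineq}: both arguments of the maximum and the right-hand side each pick up a $+\alpha T_{\pi(1)}$, so the inequality reduces to the original one.

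Since every $\alpha \in \mathbb{N}_0$ yields a valid assignment, and these assignments are pairwise distinct, we conclude that the feasibility region contains infinitely many solutions of the form prescribed in \eqref{eq:mod}. The only mildly subtle point is the verification that $T_{\pi(1)}/T_{\pi(i)}$ is an integer (which requires invoking harmonicity together with the reverse rate-monotonic ordering fixed by $\pi$); everything else is direct algebraic cancellation. Note also that the lemma as stated allows $\alpha \in \mathbb{Z}$, but only $\alpha \geq -\min_i \widehat{m}_i T_{\pi(i)}/T_{\pi(1)}$ preserves non-negativity of the $m_i$; since infinitely many positive $\alpha$ already work, the statement of infinitely many solutions holds without restriction.
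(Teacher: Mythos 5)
Your proof is correct and follows essentially the same route as the paper: both arguments hinge on the observation that the substitution shifts every $m_i T_{\pi(i)}$ by the common amount $\alpha T_{\pi(1)}$, which cancels in constraints \eqref{eq:ineq2} and \eqref{eq:ineq} (the paper phrases this as invariance of the difference $m_{n-1}T_{\pi(n-1)} - m_i T_{\pi(i)}$, you verify each inequality directly — the same algebra). You additionally make explicit the integrality of $T_{\pi(1)}/T_{\pi(i)}$ and the lower bound on $\alpha$ needed to keep $m_i \in \mathbb{N}_0$, both of which the paper's terser proof leaves implicit, so your version is slightly more careful but not a different argument.
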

\begin{proof}
Any solution satisfies the constraints in \eqref{eq:ineq2} and \eqref{eq:ineq} which define a lower and an upper bound, respectively. For the assumed solution the difference $	\widehat{m}_{n-1}T_{\pi(n-1)}- \widehat{m}_iT_{\pi(i)}$ must lay within these bounds. We have to show that any proposed transformation does not change the value of this difference.
By \eqref{eq:mod}  we  get:
	\begin{multline}\label{eq:ineq8}
 \widehat{m}_{n-1}T_{\pi(n-1)}- \widehat{m}_iT_{\pi(i)} =	m_{n-1}T_{\pi(n-1)}-\alpha T_{\pi(1)}- m_iT_{\pi(i)} +\alpha T_{\pi(1)}= \\	m_{n-1}T_{\pi(n-1)}- m_iT_{\pi(i)} 
	\end{multline}
The new difference is the same as the old one and therefore also stays within the limits. Hence the lemma follows.	
\qed	
\end{proof}
\textcolor[rgb]{0,0,0}{Note that in \eqref{eq:Jmax2} the values  of the function  $I^{J_i}_{n-1}(t)$ over $t$ do not change for any of the possible solution set ${m_1,\dots,m_{n-1}}$. }Larger values of $m_i$ result in  larger values of virtual jitters and a larger reduction outside the ceiling terms.   So we can select any of the solutions as a representative. In the following we set $$m_{1} = 1$$ with the consequence that all $m_i \geq 0$
  which is demonstrated by verifying the validity of the following Lemma.
\begin{lem}
We consider the constraint program established by \eqref{eq:ineq2} and \eqref{eq:ineq}. For $m_1=1$ we get $m_i \geq 0$ with $1\leq i \leq n-1$
\end{lem}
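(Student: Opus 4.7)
The plan is to derive a lower bound on each $m_i$ by working outward from the fixed value $m_1 = 1$, using the two constraint families \eqref{eq:ineq2} and \eqref{eq:ineq} together with the standing assumptions $J_j < T_j$ and $C_{\pi(j+1)\ldots\pi(n-1)} < T_{\pi(j)}$ (established after \eqref{Wh}). First I would specialise the upper-bound constraint \eqref{eq:ineq2} at $i = 1$ with $m_1 = 1$, which rearranges to
\[
m_{n-1}\, T_{\pi(n-1)} \;\geq\; T_{\pi(1)} + J_{\pi(1)} - J_{\pi(n-1)}.
\]
Since $J_{\pi(n-1)} < T_{\pi(n-1)} \leq T_{\pi(1)}$ and $J_{\pi(1)} \geq 0$, the right-hand side is strictly positive, so integrality of $m_{n-1}$ forces $m_{n-1} \geq 1$.

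Next, for each intermediate index $1 < i < n-1$, I would feed the bound on $m_{n-1}\, T_{\pi(n-1)}$ into the lower-bound half of \eqref{eq:ineq}. After cancellation this yields
\[
m_i\, T_{\pi(i)} \;\geq\; T_{\pi(1)} + J_{\pi(1)} - J_{\pi(i)} - C_{\pi(i+1)\ldots\pi(n-1)}.
\]
A case analysis on $T_{\pi(i)}$ versus $T_{\pi(1)}$ then closes the argument. If $T_{\pi(i)} < T_{\pi(1)}$, harmonicity upgrades this strict inequality to $T_{\pi(1)} \geq 2\, T_{\pi(i)}$, and combined with $J_{\pi(i)} < T_{\pi(i)}$ and $C_{\pi(i+1)\ldots\pi(n-1)} < T_{\pi(i)}$ the right-hand side strictly exceeds $2T_{\pi(i)} - T_{\pi(i)} - T_{\pi(i)} = 0$; integrality then gives $m_i \geq 1$.

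The hard part is the equal-period case $T_{\pi(i)} = T_{\pi(1)}$, where harmonicity supplies no multiplicative slack. Here I would invoke the extended ordering convention that tasks of equal period are sorted by increasing jitter, which forces $J_{\pi(1)} \leq J_{\pi(i)}$. The lower bound then rewrites as $T_{\pi(i)} - (J_{\pi(i)} - J_{\pi(1)}) - C_{\pi(i+1)\ldots\pi(n-1)}$, and because both $J_{\pi(i)} - J_{\pi(1)} < T_{\pi(i)}$ and $C_{\pi(i+1)\ldots\pi(n-1)} < T_{\pi(i)}$ hold strictly, the expression strictly exceeds $-T_{\pi(i)}$. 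Hence $m_i\, T_{\pi(i)} > -T_{\pi(i)}$, i.e.\ $m_i > -1$, and integrality delivers $m_i \geq 0$. Together with $m_1 = 1$ and the bound on $m_{n-1}$, this establishes the claim for every $1 \leq i \leq n-1$.
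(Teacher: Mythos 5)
Your argument is essentially the paper's own: combine \eqref{eq:ineq2} at $i=1$ with $m_1=1$ and the $m_{n-1}$-half of \eqref{eq:ineq} to obtain $m_i T_{\pi(i)} \geq T_{\pi(1)} + J_{\pi(1)} - J_{\pi(i)} - C_{\pi(i+1)\ldots\pi(n-1)}$, then close using $J_{\pi(1)}\geq 0$, $J_{\pi(i)}<T_{\pi(i)}$, $C_{\pi(i+1)\ldots\pi(n-1)}<T_{\pi(i)}$, $T_{\pi(1)}\geq T_{\pi(i)}$, and integrality (the paper divides by $T_{\pi(i)}$ and applies the ceiling first, you keep the linear bound and invoke integrality at the end, but it is the same estimate). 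Your case split on $T_{\pi(i)}$ versus $T_{\pi(1)}$ and the appeal to the jitter tie-breaking rule are both unnecessary, since the uniform estimate already gives $m_i T_{\pi(i)} > T_{\pi(1)} - 2T_{\pi(i)} \geq -T_{\pi(i)}$, hence $m_i>-1$ and $m_i\geq 0$; on the other hand you explicitly verify $m_{n-1}\geq 1$, which the paper's proof leaves implicit.
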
 
\begin{proof}
By \eqref{eq:ineq} we have
	\begin{equation*}
	 \forall i,\ \ 1\leq i\leq n-2, J_{\pi(n-1)}+m_{n-1}T_{\pi(n-1)}- C_{\pi(i+1)\dots\pi(n-1)} \leq  J_{\pi(i)}+ m_iT_{\pi(i)}
	\end{equation*}
	and by \eqref{eq:ineq2} we have with $i=1$ and $m_1=1$:
	\begin{equation}
	J_{\pi(n-1)}+m_{n-1}T_{\pi(n-1)}\geq J_{\pi(1)}+ T_{\pi(1)} 
	\end{equation}
	We combine the two constraints and resolve to $m_i$. Additionally we consider the integer of the division result $T_{n-1}|T_i$ and $m_i$:
	\begin{equation}
		 \forall i,\ \ 1\leq i\leq n-2, \  \  m_i \geq {T_{\pi(1)} \over T_{\pi(i)}}+\left\lceil {J_{\pi(1)}-J_{\pi(i)}-C_{\pi(i+1)\dots\pi(n-1)}\over T_{\pi(i)} }\right\rceil
	\end{equation}
	With $J_{\pi(1)} \geq 0$, $J_{\pi(i)} < T_{\pi(i)}$, and $C_{\pi(i+1)\dots\pi(n-1)} < T_{\pi(i)} $ the second summand is $\geq -1$ whereas the first summand is $\geq 1$. It follows for all $i$ $m_i \geq 0$.
	\qed
\end{proof}

In Section 4.3, we have proposed a method where it was implicitly assumed that all $m_i=0$ with $1 \leq i \leq n-1$. Such a solution is now impossible, because we have explicitly set $m_1=1$.  If we have jitter values that are valid according to Section 4.3, there are also valid $m_i$ values according to Lemma 1 under the assumption $m_1=1$.

\subsection{Derivation of an algorithm}\label{constprog}
The task of the algorithm presented below is to determine the $m_i$ values of a feasible constraint system or to characterize the system as infeasible. Since we also want  to consider  the efficiency of the algorithm, in certain cases we use a simple heuristic that has to choose between two possible values.

The basic approach is to determine the values $m_2, m_3, \dotsm_{n-1}$ one after the other starting from the fixed value $m_1=1$. Therefore we start again with the constraints \eqref{eq:ineq2} and \eqref{eq:ineq}, select as index $i$ and $i+1$, and set $m_i$ and $m_{i+1}$ in relation to each other.

\begin{equation}\label{eq:constraint_set1}
\forall i, \ \ 1\leq i \leq n-2, \ \ m_{i+1}T_{\pi(i+1)} + J_{\pi(i+1)}+C_{\pi(i+2)\dots\pi(n-1)} \geq m_{i}T_{\pi(i)} + J_{\pi(i)}
\end{equation}
\begin{equation}\label{eq:constraint_set2}
m_{i+1}T_{\pi(i+1)} + J_{\pi(i+1)} \leq m_{i}T_{\pi(i)} + J_{\pi(i)}+C_{\pi(i+1)\dots\pi(n-1)}
\end{equation}
Note that $C_{\pi(n)\dots\pi(n-1)}=0$. 
We rearrange terms, exploit that $m_{i+1}$ must be an integer, and that $T_{\pi(i+1)}$ divides the period $T_{\pi(i)}$.
\begin{multline}\label{eq:m.i+1}
{T_{\pi(i)} \over T_{\pi(i+1)}}m_i {+} \left\lceil {J_{\pi(i)} {-} J_{\pi(i+1)} {-} C_{\pi(i{+}2)\dots\pi(n{-}1)} \over T_{\pi(i+1)}}\right\rceil \leq m_{i+1} \leq \\  {T_{\pi(i)} \over T_{\pi(i+1)}}m_i {+} \left\lfloor {J_{\pi(i)} {-} J_{\pi(i+1)} {+} C_{\pi(i+1)\dots\pi(n-1)} \over T_{\pi(i+1)}}\right\rfloor 
\end{multline}
Let us first check under which conditions which values for $m_{i+1}$ are allowed, assuming that the value of $m_i$ is unique or has been chosen by a heuristic technique.

In order to keep the presentation clear, we introduce the abbreviation $\widetilde{J}_{\pi(i)} =_{def} J_{\pi(i+1)}-J_{\pi(i)} \mod T_{\pi(i+1)}$ and use the universally valid identity $x \mod y = x - y\left\lfloor x/y\right\rfloor$. We can therefore transform \eqref{eq:m.i+1} into:
\begin{multline}\label{eq:unique}
{T_{\pi(i)} \over T_{\pi(i+1)}}m_i {+}\left\lfloor {J_{\pi(i)}-J_{\pi(i+1)} \over T_{\pi(i+1)}} \right\rfloor +\left\lceil {\widetilde{J}_{\pi(i)} {-} C_{\pi(i{+}2)\dots\pi(n{-}1)} \over T_{\pi(i+1)}}\right\rceil \leq m_{i+1} \leq \\  {T_{\pi(i)} \over T_{\pi(i+1)}}m_i {+}\left\lfloor {J_{\pi(i)}-J_{\pi(i+1)} \over T_{\pi(i+1)}} \right\rfloor +\left\lfloor {\widetilde{J}_{\pi(i)} {+} C_{\pi(i+1)\dots\pi(n-1)} \over T_{\pi(i+1)}}\right\rfloor 
\end{multline}
where $0 \leq \widetilde{J}_{\pi(i)} /T_{\pi(i+1)}<1$ and $C_{\pi(i{+}2)\dots\pi(n{-}1)}/T_{\pi(i+1)}< C_{\pi(i{+}1)\dots\pi(n{-}1)}/T_{\pi(i+1)}<1$, i.e.
$$\left\lceil {\widetilde{J}_{\pi(i)} {-} C_{\pi(i{+}2)\dots\pi(n{-}1)} \over T_{\pi(i+1)}}\right\rceil \in \left\{0,1\right\}$$
and
$$\left\lfloor {\widetilde{J}_{\pi(i)} {+} C_{\pi(i+1)\dots\pi(n-1)} \over T_{\pi(i+1)}}\right\rfloor \in \left\{0,1\right\}$$
We define $\gamma_i \in \left\{\left\lceil {\widetilde{J}_{\pi(i)} {-} C_{\pi(i{+}2)\dots\pi(n{-}1)} \over T_{\pi(i+1)}}\right\rceil, \left\lfloor {\widetilde{J}_{\pi(i)} {+} C_{\pi(i+1)\dots\pi(n-1)} \over T_{\pi(i+1)}}\right\rfloor \right\}$ and discuss the possible situations depending on the value range of $\gamma_i$ in the following Lemma.

Note that if $m_i$ has a unique value and the two terms defining the value range of $\gamma_i$ are equal, $m_{i+1}$ is unique.
\begin{lem}\label{lem:gamma}
We have $\widetilde{J}_{\pi(i)}<T_{\pi(i+1)}$ and $C_{\pi(i+2)\dots\pi(n-1)} < C_{\pi(i+1)\dots\pi(n-1)} <T_{\pi(i+1)}$ then \eqref{eq:unique} has the following possible solutions depending on $\widetilde{J}_{\pi(i)}:\newline$
\begin{equation}\label{eq:gamma}
\gamma_i \in \begin{cases}\left\{0\right\}&\widetilde{J}_{\pi(i)} \leq \min( C_{\pi(i+2)\dots\pi(n-1)},T_{\pi(i+1)}-C_{\pi(i+1)\dots\pi(n-1)}-1) 
\\\left\{1\right\}&\max( C_{\pi(i+2)\dots\pi(n-1)}+1,T_{\pi(i+1)}-C_{\pi(i+1)\dots\pi(n-1)}) \leq \widetilde{J}_{\pi(i)} \\ \emptyset&C_{\pi(i+2)\dots\pi(n-1)}+1\leq \widetilde{J}_{\pi(i)} \leq T_{\pi(i+1)}-C_{\pi(i+1)\dots\pi(n-1)}-1\\
\left\{0,1\right\}&T_{\pi(i+1)}-C_{\pi(i+1)\dots\pi(n-1)} \leq \widetilde{J}_{\pi(i)}\leq C_{\pi(i+2)\dots\pi(n-1)}
\end{cases}
\end{equation}
\end{lem}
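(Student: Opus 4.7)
The plan is to reduce the lemma to an elementary case analysis on the two boundary values that form $\gamma_i$. Write $L_i \equals \left\lceil \frac{\widetilde{J}_{\pi(i)} - C_{\pi(i+2)\dots\pi(n-1)}}{T_{\pi(i+1)}}\right\rceil$ and $U_i \equals \left\lfloor \frac{\widetilde{J}_{\pi(i)} + C_{\pi(i+1)\dots\pi(n-1)}}{T_{\pi(i+1)}}\right\rfloor$. By the hypotheses $0 \leq \widetilde{J}_{\pi(i)} < T_{\pi(i+1)}$ and $0 \leq C_{\pi(i+2)\dots\pi(n-1)} < C_{\pi(i+1)\dots\pi(n-1)} < T_{\pi(i+1)}$, the numerators of $L_i$ and $U_i$ lie in $(-T_{\pi(i+1)}, T_{\pi(i+1)})$ and $[0, 2T_{\pi(i+1)})$ respectively, so both $L_i$ and $U_i$ belong to $\{0,1\}$. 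This justifies, before splitting on cases, that $\gamma_i$ takes at most two values in $\{0,1\}$.

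Next I would pin down the integer thresholds. Using that all quantities are positive integers, a standard argument on ceilings gives
\begin{equation*}
L_i = 0 \iff \widetilde{J}_{\pi(i)} \leq C_{\pi(i+2)\dots\pi(n-1)}, \qquad L_i = 1 \iff \widetilde{J}_{\pi(i)} \geq C_{\pi(i+2)\dots\pi(n-1)} + 1,
\end{equation*}
and similarly
\begin{equation*}
U_i = 0 \iff \widetilde{J}_{\pi(i)} \leq T_{\pi(i+1)} - C_{\pi(i+1)\dots\pi(n-1)} - 1, \qquad U_i = 1 \iff \widetilde{J}_{\pi(i)} \geq T_{\pi(i+1)} - C_{\pi(i+1)\dots\pi(n-1)}.
\end{equation*}
These equivalences are the only non-trivial computation; they follow by multiplying through by $T_{\pi(i+1)}$ and using $\lceil x \rceil \leq 0 \iff x \leq 0$ and $\lfloor x \rfloor \geq 1 \iff x \geq 1$ after normalising by $T_{\pi(i+1)}$.

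Finally I would cross-tabulate the four combinations $(L_i,U_i)\in\{0,1\}^2$ to obtain the four cases in \eqref{eq:gamma}:
\begin{itemize}
\item $L_i=U_i=0$ yields $\gamma_i\in\{0\}$ precisely when both upper bounds on $\widetilde{J}_{\pi(i)}$ hold simultaneously, i.e.\ when $\widetilde{J}_{\pi(i)} \leq \min(C_{\pi(i+2)\dots\pi(n-1)},\, T_{\pi(i+1)} - C_{\pi(i+1)\dots\pi(n-1)} - 1)$.
\item $L_i=U_i=1$ yields $\gamma_i\in\{1\}$ exactly when both lower bounds hold, giving the $\max$ expression.
\item $L_i=1$, $U_i=0$ corresponds to the empty admissible interval for $m_{i+1}$ since the lower bound in \eqref{eq:unique} would exceed the upper bound; this is the infeasibility range $C_{\pi(i+2)\dots\pi(n-1)}+1 \leq \widetilde{J}_{\pi(i)} \leq T_{\pi(i+1)} - C_{\pi(i+1)\dots\pi(n-1)} - 1$.
\item $L_i=0$, $U_i=1$ gives the ambiguous case $\gamma_i\in\{0,1\}$ in the overlap $T_{\pi(i+1)} - C_{\pi(i+1)\dots\pi(n-1)} \leq \widetilde{J}_{\pi(i)} \leq C_{\pi(i+2)\dots\pi(n-1)}$.
\end{itemize}
It remains to verify that these four ranges partition $[0, T_{\pi(i+1)})$, which is a bookkeeping check on whether the thresholds $C_{\pi(i+2)\dots\pi(n-1)}$ and $T_{\pi(i+1)} - C_{\pi(i+1)\dots\pi(n-1)}-1$ sit in ascending or descending order; the two orderings produce either the empty or the overlap case, and both are covered by the statement.

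The main obstacle will not be the inequality manipulations but rather making the bookkeeping rigorous: demonstrating that the four cases truly exhaust $[0,T_{\pi(i+1)})$ without gaps or unintended overlaps, despite $C_{\pi(i+2)\dots\pi(n-1)}$ and $T_{\pi(i+1)}-C_{\pi(i+1)\dots\pi(n-1)}-1$ having no prescribed ordering. I would therefore present the proof by first establishing the two equivalences above, then arguing that the four pairwise conditions $(L_i,U_i)$ cover all of $[0,T_{\pi(i+1)})$ and that each combination yields exactly the set written in \eqref{eq:gamma}.
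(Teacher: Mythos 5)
Your proposal is correct and follows essentially the same route as the paper: identify the ceiling term (your $L_i$) and floor term (your $U_i$), show each takes values only in $\{0,1\}$ because their arguments lie in $(-1,1)$ and $[0,2)$ respectively, and then enumerate the four $(L_i,U_i)$ combinations to match the four branches of \eqref{eq:gamma}. The paper verifies each of the four cases directly while you first derive the two single-term equivalences and cross-tabulate, but this is only a reorganization of the same computation, and your closing observation about verifying that the four ranges exhaust $[0,T_{\pi(i+1)})$ is a useful bookkeeping step the paper leaves implicit.
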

\begin{proof}
We look on the 4 cases:
\begin{enumerate}
\item
We have $$\left\lceil \frac{\widetilde{J}_{\pi(i)}-C_{\pi(i+2)\dots\pi(n-1)}}{T_{\pi(i+1)}}\right\rceil = 0 \Leftrightarrow -1 < \frac{\widetilde{J}_{\pi(i)}-C_{\pi(i+2)\dots\pi(n-1)}}{T_{\pi(i+1)}} \leq 0$$
We dissolve to $\widetilde{J}_{\pi(i)}$ and get $C_{\pi(i+2)\dots\pi(n-1)}-T_{\pi(i+1)} < \widetilde{J}_{\pi(i)}\leq C_{\pi(i+2)\dots\pi(n-1)}$. The LHS is lower than 0 but $\widetilde{J}_{\pi(i)} \geq 0$ by definition. We therefore can write $0 \leq \widetilde{J}_{\pi(i)}\leq C_{\pi(i+2)\dots\pi(n-1)}$

We perform a similar consideration for the floor function 
$$\left\lfloor \frac{\widetilde{J}_{\pi(i)}+C_{\pi(i+1)\dots\pi(n-1)}}{T_{\pi(i+1)}}\right\rfloor = 0 \Leftrightarrow 0 \leq \frac{\widetilde{J}_{\pi(i)}+C_{\pi(i+1)\dots\pi(n-1)}}{T_{\pi(i+1)}} < 1$$
and get $0 \leq \widetilde{J}_{\pi(i)} < T_{\pi(i+1)} - C_{\pi(i+1)\dots\pi(n-1)}$. Since periods and worst-case execution times are integers we have $0 {\leq} \widetilde{J}_{\pi(i)} {\leq} T_{\pi(i+1)} {-} C_{\pi(i+1)\dots\pi(n-1)}-1$. 

The ceiling term and the floor term must both have the value 0, so that the minimum of the two upper limits must apply.
\item
In this case, both terms must have the value 1.
$$\left\lceil \frac{\widetilde{J}_{\pi(i)}-C_{\pi(i+2)\dots\pi(n-1)}}{T_{\pi(i+1)}}\right\rceil = 1 \Leftrightarrow 0 < \frac{\widetilde{J}_{\pi(i)}-C_{\pi(i+2)\dots\pi(n-1)}}{T_{\pi(i+1)}} \leq 1$$ and
$$\left\lfloor \frac{\widetilde{J}_{\pi(i)}+C_{\pi(i+1)\dots\pi(n-1)}}{T_{\pi(i+1)}}\right\rfloor = 1 \Leftrightarrow 1 \leq \frac{\widetilde{J}_{\pi(i)}+C_{\pi(i+1)\dots\pi(n-1)}}{T_{\pi(i+1)}} < 2$$
This time we need to look at the lower bounds and make the maximum.
$\widetilde{J}_{\pi(i)} \geq \max(C_{\pi(i+2)\dots\pi(n-1)}+1, T_{\pi(i+1)}- C_{\pi(i+1)\dots\pi(n-1)})$
\item
In this case, the lower limit is greater than the upper limit, i.e. the ceiling term assumes the value 1 and the floor term the value 0. This requires compliance with the constraints:
$ 0 \leq \frac{\widetilde{J}_{\pi(i)}+C_{\pi(i+1)\dots\pi(n-1)}}{T_{\pi(i+1)}} < 1 \wedge 0 < \frac{\widetilde{J}_{\pi(i)}-C_{\pi(i+2)\dots\pi(n-1)}}{T_{\pi(i+1)}} \leq 1 $
We take the maximum of the lower limits and the minimum of the upper limits and get:
$$C_{\pi(i+2)\dots\pi(n-1)}+1 \leq \widetilde{J}_{\pi(i)}\leq T_{\pi(i+1)}- C_{\pi(i+1)\dots\pi(n-1)}-1$$
\item
This case is characterized by the fact that the ceiling term assumes the value 0 and the floor term the value 1, i.e. $ -1 < \frac{\widetilde{J}_{\pi(i)}-C_{\pi(i+2)\dots\pi(n-1)}}{T_{\pi(i+1)}} \leq 0 \wedge 1 \leq \frac{\widetilde{J}_{\pi(i)}+C_{\pi(i+1)\dots\pi(n-1)}}{T_{\pi(i+1)}} < 2$. The maximum of the lower limits and the minimum of the upper limits leads to the constraints: $T_{\pi(i+1)}- C_{\pi(i+1)\dots\pi(n-1)} \leq \widetilde{J}_{\pi(i)} \leq C_{\pi(i+2)\dots\pi(n-1)}$.
\end{enumerate}
\qed
\end{proof}
We can derive an interesting special case from Lemma \ref{lem:gamma}.
\begin{cor}\label{cor:special_case}
If $C_{\pi(i+1)\dots\pi(n-1)} +C_{\pi(i+2)\dots\pi(n-1)} < T_{\pi(i+1)}$ then $m_i$ is unique or the system is infeasible.
\end{cor}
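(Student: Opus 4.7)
The plan is to read the corollary as an immediate consequence of the case analysis in Lemma \ref{lem:gamma}: under the stated bound on the execution sums, the ambiguous case $\gamma_i = \{0,1\}$ cannot occur, so the interval for $m_{i+1}$ in \eqref{eq:unique} either contains exactly one integer or is empty.

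First I would isolate the non-uniqueness case from Lemma \ref{lem:gamma}. The only row of \eqref{eq:gamma} that allows two admissible values for $\gamma_i$ (and hence two admissible values of $m_{i+1}$ for a fixed $m_i$) is
\[
T_{\pi(i+1)} - C_{\pi(i+1)\dots\pi(n-1)} \;\leq\; \widetilde{J}_{\pi(i)} \;\leq\; C_{\pi(i+2)\dots\pi(n-1)} .
\]
This interval is nonempty in $\widetilde{J}_{\pi(i)}$ if and only if its lower endpoint does not exceed its upper endpoint, i.e.
\[
T_{\pi(i+1)} - C_{\pi(i+1)\dots\pi(n-1)} \;\leq\; C_{\pi(i+2)\dots\pi(n-1)},
\]
which is exactly the negation of the hypothesis $C_{\pi(i+1)\dots\pi(n-1)} + C_{\pi(i+2)\dots\pi(n-1)} < T_{\pi(i+1)}$ of the corollary. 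Hence under our hypothesis the fourth case of Lemma \ref{lem:gamma} is vacuous and only the first three rows can be realised.

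Finally I would read off the conclusion. In the rows $\gamma_i \in \{0\}$ and $\gamma_i \in \{1\}$ the upper and lower bounds in \eqref{eq:unique} coincide, so once $m_i$ is fixed $m_{i+1}$ is forced to a single integer; in the row $\gamma_i = \emptyset$ the upper bound lies strictly below the lower bound, so no integer $m_{i+1}$ satisfies \eqref{eq:m.i+1} and the constraint program is infeasible. Since the determination proceeds inductively from the fixed value $m_1=1$, unique determination at every step propagates to a unique value of each $m_i$, whereas an empty step certifies infeasibility of the whole system.

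The argument is essentially a bookkeeping corollary of Lemma \ref{lem:gamma}, so I expect no real obstacle beyond carefully matching the inequality $C_{\pi(i+1)\dots\pi(n-1)} + C_{\pi(i+2)\dots\pi(n-1)} < T_{\pi(i+1)}$ with the threshold that makes the $\{0,1\}$ row of \eqref{eq:gamma} degenerate; the only subtle point is to keep in mind that $\widetilde{J}_{\pi(i)}$ is an integer in $[0,T_{\pi(i+1)})$, so strict versus non-strict inequalities match up correctly when the two rows $\{0\}$ and $\{1\}$ become adjacent and exhaust all admissible values of $\widetilde{J}_{\pi(i)}$.
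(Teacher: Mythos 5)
Your proof is correct and follows the same route the paper takes: show the hypothesis makes the fourth row of Lemma \ref{lem:gamma} (where $\gamma_i\in\{0,1\}$) vacuous, then read off uniqueness or infeasibility from the remaining three cases. The paper's own argument is a one-line contrapositive of exactly the inequality you identify; you simply spell out the downstream bookkeeping (the forced integer, the empty interval, the induction from $m_1=1$) in more detail.
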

\begin{proof}
The specified restriction excludes case 4 i.e.
\begin{multline*}
T_{\pi(i+1)}-C_{\pi(i+1)\dots\pi(n-1)} \leq \widetilde{J}_{\pi(i)}\leq C_{\pi(i+2)\dots\pi(n-1)} \Rightarrow \\
T_{\pi(i+1)}-C_{\pi(i+1)\dots\pi(n-1)} \leq C_{\pi(i+2)\dots\pi(n-1)}  
\end{multline*}
\qed
\end{proof}

With Lemma \ref{lem:gamma} it is clear that with a fixed value of $m_i$, the value range of $m_{i+1}$ comprises at most 2 values. 

If we compare the original constraint set defined in \eqref{eq:ineq2} and \eqref{eq:ineq} with that of \eqref{eq:constraint_set1} and \eqref{eq:constraint_set2}, we realize that the restrictions of each virtual jitter by the virtual jitter of task $\tau_{\pi(n-1)}$ has been lost. This can cause the calculated $m_i$ values not to comply with these decisive constraints. To avoid this, we calculate the value range for $m_{n-1}$ after each determination of an $m_i$ value. If it is empty, then there is no feasible solution for the constraint problem.
Based on the definition of $m_1 =1$ we get by \eqref{eq:ineq2} and \eqref{eq:ineq}:
\begin{equation}\label{eq:task1}
{T_{\pi(1)} \over  T_{\pi(n-1)}}  +\left\lceil {J_{\pi(1)}{-}J_{\pi(n-1)} \over T_{\pi(n-1)}} \right\rceil \leq m_{n-1} \leq {T_{\pi(1)} \over  T_{\pi(n-1)}}  +\left\lfloor {J_{\pi(1)}{-}J_{\pi(n-1)} {+} C_{\pi(2)\dots\pi(n-1)} \over T_{\pi(n-1)}} \right\rfloor
\end{equation}
If a value $m_i$ for $i > 1$ is fixed, then the value range for $m_{n-1}$ can be defined as follows:
\begin{equation}
{m_iT_{\pi(i)} \over  T_{\pi(n-1)}}  +\left\lceil {J_{\pi(i)}{-}J_{\pi(n-1)} \over T_{\pi(n-1)}} \right\rceil {\leq} m_{n-1} {\leq} {m_iT_{\pi(i)} \over  T_{\pi(n-1)}}  +\left\lfloor {J_{\pi(i)}{-}J_{\pi(n-1)} {+} C_{\pi(i+1)\dots\pi(n-1)} \over T_{\pi(n-1)}} \right\rfloor
\end{equation} 
Since the lower and upper limits do not grow or sink monotonously with growing i, we compute the maximum of the lower limits and the minimum of the upper limits in order to see whether there is still an admissible value for $m_{n-1}$.
\begin{multline}\label{m.i+1.2}
m_{n-1.lb.i}=_{def}\max_{j=1\dotsi}\left( {m_j T_{\pi(j)} \over T_{\pi(n-1)}}+\left\lceil {J_{\pi(j)}-J_{\pi(n-1)}\over T_{\pi(n-1)}}\right\rceil\right)\leq m_{n-1} \leq \\
\min_{j=1\dotsi}\left( {m_j T_{\pi(j)} \over T_{\pi(n-1)}}+\left\lfloor {J_{\pi(j)}-J_{\pi(n-1)}+C_{\pi(j+1)\dots\pi(n-1)}\over T_{\pi(n-1)}}\right\rfloor\right)=_{def} m_{n-1.ub.i}
\end{multline}
If $m_{n-1.lb.i}>m_{n-1.ub.i}$, then the constraint system is infeasible.
For an iterative calculation of $m_{n-1.lb.i}$ we can also write
\begin{equation}\label{eq:lb.i+1}
m_{n-1.lb.i+1}=\max\left(m_{n-1.lb.i}, {m_{i+1} T_{\pi(i+1)} \over T_{\pi(n-1)}}+\left\lceil {J_{\pi(i+1)}-J_{\pi(n-1)}\over T_{\pi(n-1)}}\right\rceil\right)
\end{equation}
and
\begin{multline}\label{eq:ub.i+1}
m_{n-1.ub.i+1} = \\ \min\left(m_{n-1.lb.i}, {m_{i+1} T_{\pi(i+1)} \over T_{\pi(n-1)}}{+}\left\lfloor {J_{\pi(i+1)}{-}J_{\pi(n-1)}{+}C_{\pi(i+2)\dots\pi(n-1)}\over T_{\pi(n-1)}}\right\rfloor\right)
\end{multline}

We can use the limitations for $m_{n-1}$ in \eqref{m.i+1.2} to get new restrictions for $m_{i+1}$ which follow from merging \eqref{eq:ineq2} and \eqref{eq:ineq} into a lower than or equal to chain:
\begin{multline*}
\left\lceil {m_{n-1}T_{\pi(n-1)}+J_{\pi(n-1)}-J_{\pi(i+1)}-C_{\pi(i+2)\dots\pi(n-1)}\over T_{\pi(i+1)}}\right\rceil \leq m_{i+1}\leq \\ \left\lfloor {m_{n-1}T_{\pi(n-1)}+J_{\pi(n-1)}-J_{\pi(i+1)}\over T_{\pi(i+1)}}    \right\rfloor
\end{multline*}

We replace $m_{n-1}$ in the LHS of the inequality by $m_{n-1.lb.i}$ and in the RHS by $m_{n-1.ub.i}$.
\begin{equation*}
{\left\lceil{{\max\limits_{j{=}1\dotsi}}\left({m_j T_{\pi(j)}}{+}T_{\pi(n{-}1)}{\left\lceil {J_{\pi(j)}{-}J_{\pi(n{-}1)}\over T_{\pi(n{-}1)}}\right\rceil}\right){+}J_{\pi(n{-}1)}{-}J_{\pi(i{+}1)}{-}C_{\pi(i{+}2)\dots\pi(n{-}1)}\over T_{\pi(i{+}1)}}\right\rceil} 
\end{equation*}
\begin{equation}\label{eq:m.i+1.2}
 \leq m_{i{+}1}   \leq
\end{equation}
\begin{equation*}
  \left\lfloor {\min\limits_{j=1\dotsi}\left( m_j T_{\pi(j)} {+}T_{\pi(n-1)}\left\lfloor {J_{\pi(j)}-J_{\pi(n-1)}+C_{\pi(j+1)\dots\pi(n-1)}\over T_{\pi(n-1)}}\right\rfloor\right){+}J_{\pi(n-1)}{-}J_{\pi(i+1)}\over T_{\pi(i+1)}}\right\rfloor
\end{equation*}
We compare the limits for $m_{i+1}$ in \eqref{eq:m.i+1} and \eqref{eq:m.i+1.2}, which were determined in different ways. For the lower limits, we take the index $i$ in \eqref{eq:m.i+1.2}, which is contained in the index set over which the maximum is to be taken.  We also observe that $\left\lceil x\right\rceil \geq x$ and perform some simplifications. Finally, we use the definition introduced in \eqref{m.i+1.2} to make the presentation clearer.
\begin{multline}\label{eq:lower_bound}
m_{i+1} \geq m_{i+1.lb} \equals \left\lceil {m_{n-1.lb.i}T_{\pi(n-1)}{+}J_{\pi(n-1)}{-}J_{\pi(i+1)}{-}C_{\pi(i{+}2)\dots\pi(n{-}1)}\over T_{\pi(i{+}1)}}\right\rceil \\ \geq 
\left\lceil { {m_i T_{\pi(i)} } {+}J_{\pi(i)}{-}J_{\pi(i+1)}{-}C_{\pi(i{+}2)\dots\pi(n{-}1)}\over T_{\pi(i{+}1)}}\right\rceil 
\end{multline}
The lower limit of equation \eqref{eq:m.i+1} is therefore lower than or equal to  the lower limit of \eqref{eq:m.i+1.2}. In a similar way, we compare the upper limits in \eqref{eq:m.i+1} and \eqref{eq:m.i+1.2} and get:
\begin{multline}\label{eq:m.i+1.3}
m_{i+1}{\leq}m_{i+1.ub} \equals \left\lfloor {m_{n-1.ub.i}}T_{\pi(n-1)}{+}J_{\pi(n-1)}{-}J_{\pi(i+1)}\over T_{\pi(i+1)}    \right\rfloor \\ \leq \left\lfloor {m_i T_{\pi(i)} {+} J_{\pi(i)}{+}C_{\pi(i+1)\dots\pi(n-1)}{-}J_{\pi(i+1)}\over T_{\pi(i+1)}} \right\rfloor
\end{multline}
The upper and lower limits of equation \eqref{eq:m.i+1.2} are therefore stricter than those of equation \eqref{eq:m.i+1}. 

A special situation is for $i+1=n-1$. Then we have by definition with $C_{\pi(i+2)\dots\pi(n-1)}=C_{\pi(n)\dots\pi(n-1)}=0$ and $J_{\pi(i+1)}=J_{\pi(n-1)}$ the lower bound $m_{n-1.lb}=m_{n-1.lb.n-2}$ and the upper bound $m_{n-1.ub}=m_{n-1.ub.n-2}$. If $m_{n-1.lb.n-2} \leq m_{n-1.ub.n-2}$ we choose $m_{n-1}=m_{n-1.lb.n-2}$.

Both in \eqref{m.i+1.2} and in  \eqref{eq:lower_bound}, \eqref{eq:m.i+1.3}, it must be ensured that the respective lower limit is less than or equal to the corresponding upper limit, so that we obtain valid values for $m_{n-1}$ and $m_{i+1}$.
\begin{lem}
If for all $i$ the virtual jitters meet the constraints:
\begin{multline}\label{eq:uniqueness}
\widetilde{J}_{\pi(i)} \leq \min( C_{\pi(i+2)\dots\pi(n-1)},T_{\pi(i+1)}-C_{\pi(i+1)\dots\pi(n-1)}-1) \ \ \vee \\ \widetilde{J}_{\pi(i)}
\geq \max( C_{\pi(i+2)\dots\pi(n-1)}+1,T_{\pi(i+1)}-C_{\pi(i+1)\dots\pi(n-1)})  
\end{multline}
then the constraint system  has a unique solution or is infeasible. 
\end{lem}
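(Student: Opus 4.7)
The plan is to leverage Lemma \ref{lem:gamma} directly: the hypothesis \eqref{eq:uniqueness} is precisely the disjunction of the first two cases of \eqref{eq:gamma}, so under this hypothesis $\gamma_i$ is always a singleton, either $\{0\}$ or $\{1\}$, for every $i$. This rules out the ambiguous case $\{0,1\}$ and the infeasible local case $\emptyset$ at the level of the pairwise constraint \eqref{eq:unique}. Hence once a value of $m_i$ is fixed, the two-sided inequality \eqref{eq:unique} collapses to a single admissible integer for $m_{i+1}$.

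I would then set up the induction on $i$ starting from $i=1$, using the base case $m_1 = 1$ fixed by convention. For the inductive step, assuming $m_i$ has been uniquely determined, I would apply \eqref{eq:unique} together with the case distinction of Lemma \ref{lem:gamma} to read off the unique candidate value of $m_{i+1}$, namely
\[
m_{i+1} \;=\; \frac{T_{\pi(i)}}{T_{\pi(i+1)}}\,m_i \;+\; \left\lfloor \frac{J_{\pi(i)} - J_{\pi(i+1)}}{T_{\pi(i+1)}}\right\rfloor \;+\; \gamma_i,
\]
where $\gamma_i \in \{0,1\}$ is fixed by \eqref{eq:gamma} under hypothesis \eqref{eq:uniqueness}.

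The second ingredient is to check global feasibility against the constraints coupling $m_{i+1}$ with $m_{n-1}$. Here I would use the sharper bounds \eqref{eq:lower_bound} and \eqref{eq:m.i+1.3}, which take into account the running maximum/minimum $m_{n-1.lb.i}$ and $m_{n-1.ub.i}$ from \eqref{m.i+1.2}. If at some step the candidate $m_{i+1}$ dictated by the local uniqueness of $\gamma_i$ violates $m_{i+1.lb} \le m_{i+1} \le m_{i+1.ub}$, or equivalently the updated interval $[m_{n-1.lb.i+1}, m_{n-1.ub.i+1}]$ becomes empty via \eqref{eq:lb.i+1} and \eqref{eq:ub.i+1}, then no feasible solution exists and we declare the system infeasible. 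Otherwise the induction produces one and only one tuple $(m_1,\dots,m_{n-1})$, giving the claimed dichotomy.

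The main obstacle I anticipate is the bookkeeping between the two levels of constraints: the pairwise constraints \eqref{eq:constraint_set1}--\eqref{eq:constraint_set2} guarantee local uniqueness of $m_{i+1}$ given $m_i$, but the original system \eqref{eq:ineq2}--\eqref{eq:ineq} couples every $m_i$ with $m_{n-1}$, so one must verify that local uniqueness propagates correctly and that the feasibility check via $m_{n-1.lb.i}, m_{n-1.ub.i}$ is genuinely sufficient. This is precisely what is enabled by the comparisons \eqref{eq:lower_bound} and \eqref{eq:m.i+1.3}, showing that the $m_{n-1}$-mediated bounds are at least as tight as the pairwise ones, so no additional solutions can sneak in.
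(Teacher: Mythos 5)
Your proposal is correct and follows essentially the same route as the paper's proof: both identify Lemma \ref{lem:gamma} as the source of local uniqueness (hypothesis \eqref{eq:uniqueness} forces $\gamma_i$ into one of the first two singleton cases), and both use the comparisons \eqref{eq:lower_bound} and \eqref{eq:m.i+1.3} to show that the $m_{n-1}$-mediated bounds are sandwiched inside the pairwise bounds, so when the latter coincide the whole chain collapses to a single admissible value. The paper states this as a chain of inequalities whose equal endpoints force all intermediate terms to agree, whereas you frame it explicitly as an induction from $m_1 = 1$ with a running feasibility check on $[m_{n-1.lb.i}, m_{n-1.ub.i}]$; this is a presentational difference, not a different argument.
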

\begin{proof}
We assume $m_{n-1.lb.i} \leq m_{n-1.ub.i}$ and $m_{i+1.lb}\leq m_{i+1.ub}$, otherwise the system is infeasible.
By the first two cases of \eqref{eq:gamma} the constraints in \eqref{eq:uniqueness} for the virtual jitters  lead to a unique solution of $m_{i+1}$ for a unique $m_i$. This means that for all i
the upper limit and the lower limit for $m_{i+1}$ in \eqref{eq:m.i+1} are equal.
We join the two inequalities \eqref{eq:lower_bound} and \eqref{eq:m.i+1.3} together:
\begin{multline*}\label{eq:m.i+1.4}
\left\lceil { m_i T_{\pi(i)}  {+}J_{\pi(i)}{-}J_{\pi(i+1)}{-}C_{\pi(i{+}2)\dots\pi(n{-}1)}\over T_{\pi(i{+}1)}}\right\rceil \leq\\   \left\lceil {m_{n-1.lb.i}}T_{\pi(n-1)}{+}J_{\pi(n-1)}{-}J_{\pi(i+1)}{-}C_{\pi(i{+}2)\dots\pi(n{-}1)}\over T_{\pi(i{+}1)}\right\rceil \leq 
m_{i+1} \leq \\ \left\lfloor {m_{n-1.ub.i}}T_{\pi(n-1)}{+}J_{\pi(n-1)}{-}J_{\pi(i+1)}\over T_{\pi(i+1)}    \right\rfloor  \leq \\ \left\lfloor {m_i T_{\pi(i)} {+} J_{\pi(i)}{+}C_{\pi(i{+}1)\dots\pi(n{-}1)}{-}J_{\pi(i{+}1)}\over T_{\pi(i+1)}} \right\rfloor
\end{multline*}
The inner terms need valid values of $m_{n-1.lb.i}$ and $m_{n-1.ub.i}$ i.e.  $m_{n-1.lb.i}\leq m_{n-1.ub.i}$  and a valid value $m_{i+1}$ i.e., $m_{i+1.lb}\leq m_{i+1.ub}$. Otherwise the system is infeasible.
Since the two outer terms are equal if the virtual jitters meet the restrictions mentioned above, the values of all terms in the chain must be equal in case of a feasible system and define exactly one value of $m_i$ for all $1\leq i \leq n-2$.
\qed
\end{proof}

An interesting consequence for the further course of the calculations arises in the case that $m_{n-1.lb.i}=m_{n-1.ub.i}$ applies. We show this in the following Corollary.
\begin{cor}
If $m_{n-1.lb.i}=m_{n-1.ub.i}$  for some $i$, then  $m_{n-1.lb.j}=m_{n-1.ub.j}$ and $m_{j.lb} = m_{j.ub}$ for any $i<j\leq n-1$ or the system is infeasible. 
\end{cor}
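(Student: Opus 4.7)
The plan is to prove this by induction on $j$ starting from $j=i+1$, with each inductive step being nothing more than a relabelled instance of the base case (with $i$ replaced by $j-1$ and all invariants preserved). So the substance of the argument sits entirely in the base case.

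For the base case, I would substitute $m_{n-1.lb.i}=m_{n-1.ub.i}=m^*$ into \eqref{eq:lower_bound} and \eqref{eq:m.i+1.3}, which yields
\begin{equation*}
m_{i+1.lb}=\left\lceil \tfrac{m^{*} T_{\pi(n-1)}+J_{\pi(n-1)}-J_{\pi(i+1)}-C_{\pi(i+2)\dots\pi(n-1)}}{T_{\pi(i+1)}}\right\rceil,\quad m_{i+1.ub}=\left\lfloor \tfrac{m^{*} T_{\pi(n-1)}+J_{\pi(n-1)}-J_{\pi(i+1)}}{T_{\pi(i+1)}}\right\rfloor.
\end{equation*}
The two continuous arguments differ by exactly $C_{\pi(i+2)\dots\pi(n-1)}/T_{\pi(i+1)}$, and by the harmonic bound recorded immediately after \eqref{Wh} this ratio is strictly less than $1$. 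A closed real interval of width $<1$ contains at most one integer, so $\lfloor b\rfloor-\lceil a\rceil\in\{-1,0\}$. Hence either $m_{i+1.lb}>m_{i+1.ub}$, in which case the constraint program is infeasible, or $m_{i+1.lb}=m_{i+1.ub}$, which uniquely determines $m_{i+1}$.

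In the feasible subcase I still need to verify $m_{n-1.lb.i+1}=m_{n-1.ub.i+1}=m^{*}$. By \eqref{eq:lb.i+1} and \eqref{eq:ub.i+1} this reduces to showing that the newly appended lower term $L_{i+1}$ and upper term $U_{i+1}$ satisfy $L_{i+1}\leq m^{*}\leq U_{i+1}$. But $L_{i+1}$ and $U_{i+1}$ are exactly the bounds on $m_{n-1}$ produced by the pairwise constraints \eqref{eq:ineq2}, \eqref{eq:ineq} between tasks $\pi(i+1)$ and $\pi(n-1)$, and those same pairwise constraints produced \eqref{eq:lower_bound} and \eqref{eq:m.i+1.3} upon substituting $m_{n-1}=m^{*}$. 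So any $m_{i+1}\in[m_{i+1.lb},m_{i+1.ub}]$ is automatically compatible with $m_{n-1}=m^{*}$, giving $L_{i+1}\leq m^{*}\leq U_{i+1}$ and therefore $\max(m^{*},L_{i+1})=\min(m^{*},U_{i+1})=m^{*}$. The induction proceeds identically and terminates at $j=n-1$, where $m_{n-1.lb}=m_{n-1.ub}=m^{*}$ forces $m_{n-1}=m^{*}$.

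The main obstacle is the width-less-than-one observation: Lemma~\ref{lem:gamma} in general admits the genuinely two-valued branching $\gamma_i\in\{0,1\}$, and what eliminates that branching is precisely the collapse of the range of $m_{n-1}$ to a single integer. The harmonic-sum bound $C_{\pi(i+2)\dots\pi(n-1)}<T_{\pi(i+1)}$, together with the integrality of $m_{i+1}T_{\pi(i+1)}/T_{\pi(n-1)}$ granted by $T_{\pi(n-1)}\mid T_{\pi(i+1)}$, is what makes this collapse self-propagating across all remaining indices.
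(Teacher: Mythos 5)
Your proof is correct and takes essentially the same approach as the paper: substitute $m^* = m_{n-1.lb.i} = m_{n-1.ub.i}$ into \eqref{eq:lower_bound} and \eqref{eq:m.i+1.3}, observe that the ceiling and floor arguments now differ by $C_{\pi(i+2)\dots\pi(n-1)}/T_{\pi(i+1)}<1$ (the paper phrases this as the case split on $\delta_{i+1}$ versus $\epsilon_{i+1}$, you as the width-less-than-one remark for a closed interval — these are the same observation), conclude $m_{i+1.lb}\ge m_{i+1.ub}$ with equality the only feasible outcome, then show the collapsed $m_{n-1}$ range survives the $\max/\min$ update in \eqref{eq:lb.i+1}--\eqref{eq:ub.i+1}, and close by induction. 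Your step justifying that $m^*$ remains inside the updated bounds — by noting that the pairwise inequality between $\pi(i+1)$ and $\pi(n-1)$ is symmetric, so $(m_{i+1},m^*)$ satisfying it implies $m^*$ lies in $[L_{i+1},U_{i+1}]$ — is a little more explicit than the paper's, which merely records the alternative as infeasibility, but the substance is identical.
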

\begin{proof}
 In \eqref{eq:m.i+1} we have defined $m_{i+1.ub} = \left\lfloor {m_{n-1.ub.i}}T_{\pi(n-1)}{+}J_{\pi(n-1)}{-}J_{\pi(i+1)}\over T_{\pi(i+1)}    \right\rfloor$ and in \eqref{eq:lower_bound}  $m_{i+1.lb} = \left\lceil m_{n-1.lb.i}T_{\pi(n-1)}{+}J_{\pi(n-1)}{-}J_{\pi(i+1)}{-}C_{\pi(i{+}2)\dots\pi(n{-}1)}\over T_{\pi(i{+}1)}\right\rceil$. 

We denote the argument of the floor function by $\nu_{i+1}+\epsilon_{i+1}$ with $\nu_{i+1} \in \mathbb{N}$ and $0 \leq \epsilon_{i+1} < 1$. Then $m_{i+1.ub} = \nu_{i+1}$.
The lower bound is now
$$m_{i+1.lb}=\left\lceil \nu_{i+1} +\epsilon_{i+1} -\delta_{i+1}\right\rceil $$
where $\delta_{i+1} = C_{\pi(i+2)\dots\pi(n-1)} /T_{\pi(i+1)}$ which is lower than $U_{\pi(i+1)\dots\pi(n-1)}$ and therefore $< 1$. Evaluating the argument of the ceiling function we get

 \begin{equation}\label{cases} 
m_{i+1.lb} = \begin{cases} \nu_{i+1}\ \  \text{if}\ \  \delta_{i+1} \geq \epsilon_{i+1} \\ \nu_{i+1}+1 \ \  \text{if}\ \  \delta_{i+1} < \epsilon_{i+1}  \end{cases}
\end{equation}

Therefore we get $m_{i+1.lb} \geq m_{i+1.ub}$ and we only have a feasible system for equality.

Considering Eq. \eqref{eq:lb.i+1} and \eqref{eq:ub.i+1} and observing $m_{n-1.lb.i}=m_{n-1.ub.i}$ we have to analyze the second arguments of the $\min$- and the $\max$-function. In order to leave at least one valid value for $m_{n-1}$ it must be:
\begin{multline*}
 {m_{i+1} T_{\pi(i+1)} \over T_{\pi(n-1)}}+\left\lceil {J_{\pi(i+1)}-J_{\pi(n-1)}\over T_{\pi(n-1)}}\right\rceil \leq m_{n-1.lb.i} = m_{n-1.ub.i} \leq \\ {m_{i+1} T_{\pi(i+1)} \over T_{\pi(n-1)}}{+}\left\lfloor {J_{\pi(i+1)}{-}J_{\pi(n-1)}{+}C_{\pi(i+2)\dots\pi(n-1)}\over T_{\pi(n-1)}}\right\rfloor
\end{multline*}
The maximum over the term  $m_{n-1.lb.i}$  and the leftmost term of the inequalities above yields $m_{n-1.lb.i+1}$, whereas the minimum over the terms of the right inequality yields $m_{n-1.ub.i}$.
Therefore from $m_{n-1.lb.i} = m_{n-1.ub.i}$ follows $m_{n-1.lb.i+1}=m_{n-1.lb.i}= m_{n-1.ub.i} = m_{n-1.ub.i+1}$ or the system is infeasible.
\qed
\end{proof}
 
We now want to derive further conditions under which the constraint system always has a unique solution. Note that for $C_{\pi(i{+}1)\dots\pi(n{-}1)}+C_{\pi(i{+}2)\dots\pi(n{-}1)}< T_{\pi(i+1)}$ the 4th case cannot occur in \eqref{eq:gamma} and the constraint program either has a unique solution or is not feasible (see Corollary  \ref{cor:special_case}). It is 
\begin{equation*}
{C_{\pi(i{+}1)\dots\pi(n{-}1)}\over T_{\pi(i+1)}} = \sum_{j=i+1}^{n-1} U_{\pi(j)}{T_{\pi(j)}\over T_{\pi(i+1)}}\leq \sum_{j=i+1}^{n-1} U_{\pi(j)}\leq 1 - \sum_{j=1}^{i}U_{\pi(j)}
\end{equation*}
Accordingly we have
\begin{equation*}
{C_{\pi(i{+}2)\dots\pi(n{-}1)}\over T_{\pi(i+1)}} = \sum_{j=i+2}^{n-1} U_{\pi(j)}{T_{\pi(j)}\over T_{\pi(i+1)}}\leq \sum_{j=i+2}^{n-1} U_{\pi(j)}\leq 1 - \sum_{j=1}^{i+1}U_{\pi(j)}
\end{equation*}
From this follows  by $C_{\pi(i)}=U_{\pi(i)}T_{\pi(i)}$ and because of our task reordering we have $j \geq i, T_{\pi(j)}\leq T_{\pi(i)}$. Furthermore, we use $\sum_{j=1}^{n-1}U_{\pi(j)} < 1:$
\begin{multline}\label{eq:unique}
{C_{\pi(i{+}1)\dots\pi(n{-}1)}+C_{\pi(i{+}2)\dots\pi(n{-}1)}\over T_{\pi(i+1)}} = U_{\pi(i+1)}+2 \sum_{j=i+2}^{n-1} U_{\pi(j)}{T_{\pi(j)}\over T_{\pi(i+1)}}\\  \leq U_{\pi(i+1)}+2\sum_{j=i+2}^{n-1} U_{\pi(j)} <  2 -U_{\pi(i+1)} -2\sum_{j=1}^{i}U_{\pi(j)}
\end{multline}
From these formulas, situations can now be derived in which it can be guaranteed that the constraint system has an unique or no solution.
\begin{enumerate}
	\item For $T_{\pi(2)}>  T_{\pi(3)}>\dots.>  T_{\pi(n-1)}$ we have $T_{\pi(j)}/T_{\pi(i)} \leq 1/2$ for $j \geq i+2$. Hence 
	\begin{multline*}
	1\leq i \leq n-2,\ \ {C_{\pi(i{+}1)\dots\pi(n{-}1)}+C_{\pi(i{+}2)\dots\pi(n{-}1)}\over T_{\pi(i+1)}} =\\ U_{\pi(i+1)}+2 \sum_{j=i+2}^{n-1} U_{\pi(j)}{T_{\pi(j)}\over T_{\pi(i+1)}} < 1
	\end{multline*}
	\item By \eqref{eq:unique} we also get a unique solution if $1 < U_{\pi(i+1)}+ 2\sum_{j=1}^{i}U_{\pi(j)} $. This constraint is met if $\sum_{j=1}^{i}U_{\pi(j)} \geq 0.5$, i.e., if the tasks still to be processed contribute a total utilization $<0.5$. Therefore, if the \emph{worst-case response time} of a task $\tau_n$ is to be determined for which the  tasks $\tau_{\pi(1)\dots\pi(n-1)}$ have a utilization $< 0.5$, then the solution is unique. 
\end{enumerate}
We now assume that after determining a value $m_i$ two values for $m_{i+1}$ are possible. This corresponds to case 4 in \eqref{eq:gamma}. To keep the algorithm efficient, we select one of these values by determining the length of the interval $\left[m_{n.lb.i{+}1}, \right.\\ \left. m_{n.ub.i{+}1}\right]$ for the two values and then selecting the value with the larger interval length.

\subsubsection{Algorithm}
With the following algorithm we determine the values $\textbf{m}=\left[m_1, m_2, \dots., m_{n-1}\right]$ and  the maximum virtual jitter $ J'_{\pi(n-1)}$ according to \eqref{eq:vJ}.
 
 
\textbf{Input}:\  A task system with the parameters $\textbf{T}=\left[T_{\pi(1)}, T_{\pi(2)}, \dots.,T_{\pi(n-1)}\right]$, $\textbf{C}=\left[C_{\pi(1)}, C_{\pi(2)}, \dots., C_{\pi(n-1)}\right]$, $\textbf{J}=\left[J_{\pi(1)}, J_{\pi(2)}, \dots., J_{\pi(n-1)}\right]$. The tasks are ordered by non-increasing periods i.e. $T_{\pi(1)} \geq T_{\pi(2)} \geq \ldots \geq T_{\pi(n-1)}$. Tasks with equal periods are arbitrarily ordered.

\textbf{Output}:\ 'infeasible' or $ J'_{max}$ and $\textbf{m}$ 

\textbf{Variables}:

 $m.T_{n-1.lb.i} \gets_{def} T_{\pi(n-1)}m_{n-1.lb.i}$; \Comment{newly introduced variables} 

  $m.T_{n-1.ub.i} \gets_{def} T_{\pi(n-1)}m_{n-1.ub.i}$;
	
 $q_{lb.i}, q_{ub.i};$  \Comment{auxiliary variables}
\begin{algorithmic}
\State $C_{\pi(n)\dots\pi(n-1)} \gets 0$

\For{$i=n-1\dots1$} 
\State $C_{\pi(i)\dots\pi(n-1)} \gets C_{\pi(i+1)\dots\pi(n-1)}+C_{\pi(i)}$ \EndFor
\State $m_1\gets 1$
\State $m.T_{n-1.lb.1} \gets T_{\pi(1)} +T_{\pi(n-1)}\left\lceil \frac{J_{\pi(1)}-J_{\pi(n-1)}}{T_{\pi(n-1)}}\right\rceil$ \Comment{Eq. \eqref{eq:task1}}
\State $m.T_{n-1.ub.1} \gets T_{\pi(1)} +T_{\pi(n-1)} \left\lfloor \frac{J_{\pi(1)}-J_{\pi(n-1)}+C_{\pi(2)\dots\pi(n-1)}}{T_{\pi(n-1)}}\right\rfloor$ \Comment{Eq. \eqref{eq:task1}}
\If {$m.T_{n-1.lb.1}>m.T_{n-1.ub.1}$}
\Return 'infeasible'  
\EndIf
\For{$i=2 \dots. n{-}2$}
\State $m_{lb.i} \gets \left\lceil \frac{m.T_{n-1.lb.i-1}+J_{\pi(n-1)}-J_{\pi(i)}-C_{\pi(i+1)\dots\pi(n-1)}}{T_{\pi(i)}}\right\rceil;$ \Comment{Eq. \eqref{eq:lower_bound}}

\State $m_{ub.i} \gets \left\lfloor \frac{m.T_{n-1.ub.i-1}+J_{\pi(n-1)}-J_{\pi(i)}}{T_{\pi(i)}}\right\rfloor;$ \Comment{Eq. \eqref{eq:m.i+1.3}}

\State $q_{lb.i} \gets T_{\pi(n-1)}\left\lceil \frac{J_{\pi(i)}-J_{\pi(n-1)}}{T_{\pi(n-1)}}\right\rceil;$
\State $q_{ub.i} \gets T_{\pi(n-1)}\left\lfloor  \frac{J_{\pi(i)}-J_{\pi(n-1)}+C_{\pi(i+1)\dots\pi(n-1)}}{T_{\pi(n-1)}}\right\rfloor$
\If {$m_{lb.i}=m_{ub.i}$} 
\State $m_i \gets m_{lb.i}$

 \State $m.T_{n-1.lb.i} \gets \max\left(T_{\pi(i)} m_i+q_{lb.i},m.T_{n.lb.i-1} \right)$ \Comment{Eq. \eqref{eq:lb.i+1}}

 \State $m.T_{n-1.ub.i} \gets \min\left(T_{\pi(i)} m_i+q_{ub.i},m.T_{n.ub.i-1} \right)$  \Comment{Eq. \eqref{eq:ub.i+1}}
\If {$m.T_{n-1.lb.i}>m.T_{n-1.ub.i}$} \Return 'infeasible' 
\EndIf																										
\ElsIf {$m_{lb.i}>m_{ub.i}$ }
\Return 'infeasible'
\Else \Comment{Evaluating the 2 feasible values $m_{lb.i}, m_{ub.i}$}
\State $m.T_{n-1.lb.0.i} \gets \max\left(T_{\pi(i)} m_{lb.i}+q_{lb.i},m.T_{n.lb.i-1} \right)$ \Comment{Eq. \eqref{eq:lb.i+1}} 
\State $m.T_{n-1.lb.1.i} \gets \max\left(T_{\pi(i)} m_{ub.i}+q_{lb.i},m.T_{n.lb.i-1} \right);$ \Comment{Eq. \eqref{eq:lb.i+1}} 

\State $m.T_{n-1.ub.0.i} \gets \min\left(T_{\pi(i)} m_{lb.i}+q_{ub.i},m.T_{n.ub.i-1} \right)$ \Comment{Eq. \eqref{eq:ub.i+1}}

\State $m.T_{n-1.ub.1.i} \gets \min\left(T_{\pi(i)} m_{ub.i}+q_{ub.i},m.T_{n.ub.i-1} \right);$ \Comment{Eq. \eqref{eq:ub.i+1}}

 \State $\textit{diff}_0 \gets m.T_{n.ub.0.i}-m.T_{n.lb.0.i} $
\State $\textit{diff}_1 \gets m.T_{n.ub.1.i}-m.T_{n.lb.1.i}$
\If {$\textit{diff}_0 > \textit{diff}_1$}  \Comment{Comparing the remaining interval lengths}
\State $m_i \gets m_{lb.i}$

\State $m.T_{n-1.lb.i} \gets m.T_{n-1.lb.0.i}$
\State $m.T_{n-1.ub.i} \gets m.T_{n-1.ub.0.i}$ 

\Else 
\State $m_i \gets m_{ub.i}$
\State $ m.T_{n-1.lb.i} \gets m.T_{n-1.lb.1.i}$ 
\State $m.T_{n-1.ub.i} \gets m.T_{n-1.ub.1.i} $

\EndIf
\If {$m.T_{n-1.lb.i} > m.T_{n-1.ub.i}$} \Return 'infeasible'
\EndIf 
\EndIf
\EndFor

\State $ J'_{max} \gets J_{\pi(n-1)}+ m_{n-1} T_{\pi(n-1)} $ \Comment{for using in \eqref{eq:interf}}

\State \Return $\left(\textbf{m}, J'_{max}\right)  $

\end{algorithmic} 
$$ $$
\textbf{Example}
$$\textbf{T}=(240,120,120,20,10);\textbf{C}=(1,50,50,1,1);\textbf{J}=(167,119,0,0,0);$$
 Since the tasks are in the right order we have $\pi(1)=1, \pi(2)=2, \pi(3)=3, \pi(4)=4, \pi(n-1)=5$

$m_1=1; (C_{2\dots5}, C_{3\dots5}, C_{4\dots5}, C_{5\dots5},C_{6\dots5})=(102,52,2,1,0); $
\begin{equation*}
m.T_{n.lb.1}:= T_1+T_5\left\lceil (J_1-J_5)/T_5\right\rceil=240+10\left\lceil 167/10\right\rceil=410
\end{equation*}
\begin{equation*}
m.T_{n.ub.1}= T_1+T_5\left\lfloor (J_1-J_5+C_{2\dots5})/T_5\right\rfloor=240+10\left\lfloor (167+102)/10\right\rceil=500
\end{equation*}
In the example we have at the beginning an interval $\left[41,50\right]$ for $m_5$ (using $T_5=10$).

$$i=2$$
\begin{equation*}
m_{lb.2}=\left\lceil (m.T_{n.lb.1}+J_5-J_2-C_{3\dots5})/T_2\right\rceil= \left\lceil (410-119-52)/120\right\rceil=2
\end{equation*}
\begin{equation*}
m_{ub.2}=\left\lfloor (m.T_{n.ub.1}+J_5-J_2)/T_2\right\rfloor= \left\lfloor (500-119)/120\right\rfloor=3
\end{equation*}

So there are two possible values for which the $m_5$ limits are now determined.  The auxiliary variables $q_{lb.2}$ and $q_{ub.2} $ have the values:
$q_{lb.2}:= T_{5}\left\lceil (J_{2}-J_{5})/T_{5}\right\rceil=10 \left\lceil 119/10\right\rceil =120;$
$q_{ub.2}:=T_{5}\left\lfloor  (J_{2}-J_{5}+C_{3\dots5})/T_{5}\right\rfloor=10 \left\lfloor  (119+52)/10\right\rfloor=170; $

We determine the limits of ${m_5}$ for $m_{lb.2}=2$:
\begin{equation*}
m.T_{n.lb.0.2}=\max(T_2 m_{lb.2}+q_{lb.2}, m.T_{n.lb.1})= \max(120*2+120, 410)= 410
\end{equation*}
\begin{equation*}
m.T_{n.ub.0.2}=\min(T_2 m_{lb.2}{+}q_{ub.2}, m.T_{n.ub.1}){=} \min(120*2+170, 500) =410
\end{equation*}
We get $m_5=m.T_{n.ub.0.2}/T_5=41$ and $\textit{diff}_1 =0;$

Now we determine the limits of $m_5$  for  $m_{ub.2}=3.$
\begin{equation*}
m.T_{n.lb.1.2}=\max(T_2 m_{ub.2}{+}q_{lb.2}, m.T_{n.lb.1})= \max(120*3{+}120, 410)=480
\end{equation*}
\begin{equation*}
m.T_{n.ub.1.2}=\min(T_2 m_{ub.2}+q_{ub.2}, m.T_{n.ub.1})=(360+170, 500)=500
\end{equation*}
In this case we have $m_5 \in [48,50]$, i.e. $\textit{diff}_2=500-480=20$.
We therefore select $m_2=3$ and set $m.T_{n.lb.2}=m.T_{n.lb.1.2}=480;m.T_{n.ub.2}=m.T_{n.ub.1.2}=500;$

$$i:=3$$
\begin{equation*}
m_{lb.3}=\left\lceil (m.T_{n.lb.2}+J_5-J_3-C_{4\dots5})/T_3)\right\rceil=\left\lceil(480-2)/120 \right\rceil =4
\end{equation*}
\begin{equation*}
m_{ub.3}=\left\lfloor (m.T_{n.ub.2}+J_5-J_3)/T_3)\right\rfloor=\left\lfloor 500/120 \right\rfloor =4
\end{equation*}
The lower and  upper limits are equal, i.e. $m_3=4$. For the next iteration we determine
\begin{equation*}
m.T_{n.lb.3}= \max(T_3 m_3 + T_5 \lceil (J_3-J_5)/T_5 \rceil, m.T_{n.lb.2}) = 480
\end{equation*}
\begin{equation*}
m.T_{n.ub.3}= \min(T_3 m_3+T_5\left\lfloor (J_3-J_5+C_{4\dots5})/T_5\right\rfloor,m.T_{n.ub.2}) = 480
\end{equation*}
This means that we have only one value left for $m_5$ namely 48.
$$i=4$$
\begin{equation*}
m_{lb.4}=\left\lceil (m.T_{n.lb.3}+J_5-J_4-C_{5\dots 5})/T_4)\right\rceil=\left\lceil(480-1)/20 \right\rceil =24
\end{equation*}
\begin{equation*}
m_{ub.4}=\left\lfloor (m.T_{n.ub.3}+J_5-J_4)/T_5)\right\rfloor=\left\lfloor 480/20 \right\rfloor =24
\end{equation*}
Since $m_{lb.4}=m_{ub.4}$ we get $m_4 = 24$.
\begin{equation*}
m.T_{n.lb.4}= \max(T_4 m_4+T_5\left\lceil (J_4-J_5)/T_5\right\rceil, m.T_{n.lb.3}) = 480
\end{equation*}
\begin{equation*}
m.T_{n.ub.4}= \min(T_4 m_4+T_5\left\lfloor (J_4-J_5)/T_4\right\rfloor, m.T_{n.ub.3}) = 480
\end{equation*}
The remaining value from iteration 3 is still valid: $m_5=48$
$$i=5$$
$m_5 =480/10 = 48$ 

The final result is:
$J'_{max}:=480;$
$\textbf{m}:=(1,3,4,24,48);$

\section{Experiments for task systems with jitter}
Our algorithm is not suitable for arbitrary jitter values, because the necessary restrictions are too strict. For example, if we create task sets pseudo-randomly and allow all jitter $J_{\pi(i)}$ within the intervals $[0, \alpha T_{\pi(i)}]$ with $0<\alpha\leq 1$, we will only get allowed jitter values for a very small percentage of real-time systems (<2\%). This is true even if we only consider a few tasks (e.g. 5) and high total utilization (e.g. 0.95) as shown in \ref{fig:4}.  The usefulness of our algorithm must therefore be proven by practical examples, for which we refer to future work. 

\begin{figure*}[!htp]
\centering
\includegraphics[width=0.70\textwidth]{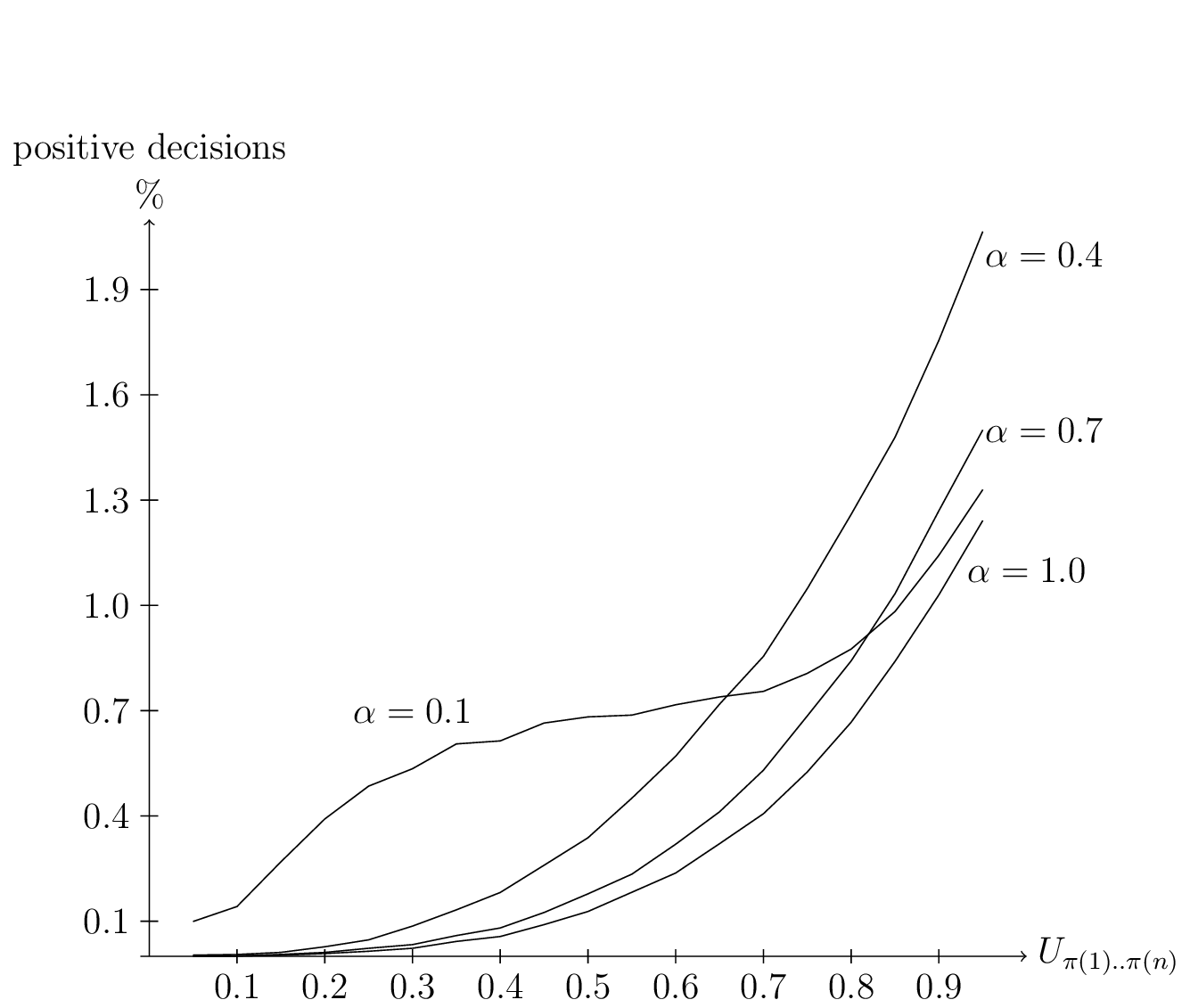}
\caption{Percentage number of feasible task sets  with jitters $J_{\pi(i)} \in [0,\alpha T_{\pi(i-1)}]$ based on 1000000 pseudo-randomly produced task sets for each $\alpha$}	
\label{fig:4}	
\end{figure*}

Of greater interest is an answer to the question of how good the quality of the heuristic component is in our algorithm. In our experiments we use the following rule to compute the periods. The first period $T_1=10$ is chosen arbitrarily. This has no effect on the meaningfulness of the simulation as all response time bounds should be independent of a scaling factor applied to all parameters. The periods of the other tasks are produced iteratively by pseudo-randomly selecting a factor from [1..4]. In order to get the utilization values we use the algorithm UUniFast, as described in \cite{Butta2005}.
The  values $C_{\pi(i)}=T_{\pi(i)} U_{\pi(i)}$ are not of type integer which is not relevant in this case.

The jitter values are produced observing  the constraints  \eqref{eq:viJup},...,\eqref{eq:ineq}. First, we determine $J_{\pi(1)}$ by pseudo-randomly selecting a value from $[0,T_{\pi(1)}-1]$. From this we get $J'_{\pi(1)}=T_{\pi(1)}+ J_{\pi(1)}$.

We combine \eqref{eq:viJlo} and \eqref{eq:viJup} for $i=1$ and determine $J'_{\pi(n-1)}$ by selecting pseudo-randomly a value from $[J'_{\pi(1)}, J'_{\pi(1)}+C_{\pi(2)..\pi(n-1)}]$. Then we get $J_{\pi(n-1)}=J'_{\pi(n-1)} \mod T_{\pi(n-1)} $.

Since we now know the value $J'_{\pi(n-1)}$ we can determine by \eqref{eq:viJlo} and \eqref{eq:viJup}  the other values $J'_{\pi(i)}$ selecting pseudo-randomly a value from $[J'_{\pi(n-1)}-C_{\pi(i+1)..\pi(n-1)}, J'_{\pi(n-1)}]$. It follows $J_{\pi(i)} = J'_{\pi(i)} \mod T_{\pi(i)} $.

Such a task set fulfills the constraints \eqref{eq:viJlo},..., \eqref{eq:ineq} and our algorithm should be able to characterize it as feasible and should determine $m_i$ values and $J_{max}$. If it is not successful, this is due to the heuristic part of the algorithm which selects in these cases the wrong value $m[i]$ .

In our experiment we let the total utilization grow in steps of 0.5 and created 2000000 task sets with $n-1=14$ tasks for each of these values. We found that up to a total utilization of $U_{\pi(1),..\pi(n-1)}$$  = 0.75$ all task set are correctly classified. For larger values of the utilization we have few task sets that are incorrectly classified as infeasible. Note that a larger total utilization means  larger execution times and therefore larger ranges for the jitter values. Table \ref{tab3} shows the concrete number of incorrectly classified task sets.
\begin{table}[htbp]
\caption{Effectivity of the Jitter-Heuristic }
\begin{center}
\begin{tabular}{|c|c|c|c|c|c|c|}

\hline

\rule[-1mm]{0mm}{4.5mm}\textbf{$U_{\pi(1)..\pi(13)}$} & 0.05-0.75 &0.8& 0.85&0.9&0.95  \\
\hline
\rule[-0.5mm]{0mm}{4.1mm}falsely classified tasksets& 0 &6 & 10 &17&33 \\
\hline

\end{tabular} 
\label{tab3}
\end{center}
\end{table}



\section{Conclusions}
Because of the manifold practical applications of task systems with harmonic tasks it is important to take advantage of the special features resulting from the divisibility of periods by all smaller periods. For example, response time analysis is possible in polynomial time, while in the general case it has pseudo-polynomial complexity. We have introduced a new algorithm that calculates the exact \emph{worst-case response time} of a task in linear time when the higher-priority tasks are ordered by non-increasing periods. Our algorithm has another advantage, which is that the task model can be extended to practical requirements. We have made this more concrete using the example of release jitters, which previous special algorithms for harmonic tasks could not handle. However, we cannot process all jitter-aware task systems with harmonic periods with it and  we have therefore proposed a linear algorithm to check the jitter values for feasibility.

\bibliography{RealTimePaper} 

\begin{thebibliography}{10}

\bibitem{Davare:2007:POH:1278480.1278553}
D.~Abhijit, Z.~Qi, M.~Di Natale, C.~Pinello, S.~Kanajan, and
  A.~Sangiovanni-Vincentelli.
\newblock Period optimization for hard real-time distributed automotive
  systems.
\newblock In {\em Proceedings of the 44th Annual Design Automation Conference},
  DAC '07, pages 278--283, New York, NY, USA, 2007. ACM.

\bibitem{anssi:cea-01779313}
S.~Anssi, S.~Kuntz, F.~Terrier, and S.~G{\'e}rard.
\newblock {On the gap between schedulability tests and an automotive task
  model}.
\newblock {\em {Journal of Systems Architecture}}, 59:341 -- 350, June 2013.

\bibitem{Audsley1993}
N.~Audsley, A.~Burns, M.~Richardson, K.W. Tindell, and A.~J. Wellings.
\newblock Applying new scheduling theory to static priority pre-emptive
  scheduling.
\newblock {\em Software Engineering Journal}, 8:284--292, 1993.

\bibitem{Bar-Yehuda2001}
R.~Bar-Yehuda and D.~Rawitz.
\newblock Efficient algorithms for integer programs with two variables per
  constraint.
\newblock {\em Algorithmica}, 29(4):595--609, Apr 2001.

\bibitem{Baruah90preemptivelyscheduling}
Sanjoy~K. Baruah, Aloysius~K. Mok, and Louis~E. Rosier.
\newblock Preemptively scheduling hard-real-time sporadic tasks on one
  processor.
\newblock In {\em In Proceedings of the 11th Real-Time Systems Symposium},
  pages 182--190. IEEE Computer Society Press, 1990.

\bibitem{Butta2005}
E.~Bini and G.C.Buttazzo.
\newblock Measuring the performance of schedulability tests.
\newblock {\em Real-Time Systems (RTSJ'05)}, 30(1-2):129--154, 2005.

\bibitem{Bini2015}
E.~Bini, A.~Parri, and G.~Dossena.
\newblock A quadratic-time response time upper bound with a tightness property.
\newblock In {\em Proc. IEEE Int. Real-Time Systems Symposium (RTSS'15)}, San
  Antonio, TX, USA, December 2015.

\bibitem{BoMa2013}
V.~Bonifaci, A.~Marchetti-Spaccamela, N.~Megow, and A.~Wiese.
\newblock Polynomial-time exact schedulability tests for harmonic real-time
  tasks.
\newblock In {\em RTSS'13}, pages 236--245, 2013.

\bibitem{DBLP:conf/rtcsa/ChenBH017}
J.{-}J. Chen, G.~von~der Bruggen, W.{-}H. Huang, and C.~Liu.
\newblock State of the art for scheduling and analyzing self-suspending
  sporadic real-time tasks.
\newblock In {\em {RTCSA}}, pages 1--10. {IEEE} Computer Society, 2017.

\bibitem{Davis2008}
R.I. Davis and A.~Burns.
\newblock Response time upper bounds for fixed priority real-time systems.
\newblock In {\em Proc. IEEE Int. Real-Time Systems Symposium, (RTSS'08)},
  pages 407--418, 2008.

\bibitem{Eisenbrand2010SolvingAA}
F.~Eisenbrand, K.~Kesavan, R.~S. Mattikalli, M.~Niemeier, A.~W. Nordsieck,
  M.~Skutella, J.~Verschae, and A.~Wiese.
\newblock Solving an avionics real-time scheduling problem by advanced
  ip-methods.
\newblock In {\em ESA}, 2010.

\bibitem{Fin67}
M.S. Fineberg and O.~Serlin.
\newblock Multiprogramming for hybrid computation.
\newblock {\em Proc. AFIPS Fall Joint Computing Conference}, pages 1--13, 1967.

\bibitem{GrNg}
W.~Grass and T.H.C. Nguyen.
\newblock Improved response-time bounds in fixed priority scheduling with
  arbitrary deadlines.
\newblock {\em Real-Time Systems (RTSJ'17)}, 2017.

\bibitem{Hochbaum:1994:SFA:194602.194606}
D.~S. Hochbaum.
\newblock Simple and fast algorithms for linear and integer programs with two
  variables per inequality.
\newblock {\em SIAM J. Comput.}, 23(6):1179--1192, December 1994.

\bibitem{Joseph1986}
M.~Joseph and P.~Pandya.
\newblock Finding response times in a real-time system.
\newblock {\em The Computer Journal}, 29(5):390--395, 1986.

\bibitem{KuoMok1991}
T.W. Kuo and A.K. Mok.
\newblock Load adjustment in adaptive real-time systems.
\newblock In {\em Proc. IEEE Int. Real-Time Systems Symposium (RTSS'91)}, pages
  160--170, Dec 1991.

\bibitem{Lehoczky1990}
J.P. Lehoczky.
\newblock Fixed priority scheduling of periodic task sets with arbitrary
  deadlines.
\newblock In {\em Proc. IEEE Int. Real-Time Systems Symposium (RTSS'90)}, pages
  201--209, 1990.

\bibitem{Liu1973}
C.L. Liu and J.W. Layland.
\newblock Scheduling algorithms for multiprogramming in a hard-real-time
  environment.
\newblock {\em J. ACM}, 20(1):46--61, January 1973.

\bibitem{Lu2006}
W.~C. Lu, J.~W. Hsieh, and W.{-}K.Shih.
\newblock A precise schedulability test algorithm for scheduling periodic tasks
  in real-time systems.
\newblock In {\em Proceedings of the 2006 ACM Symposium on Applied Computing},
  SAC '06, pages 1451--1455, New York, NY, USA, 2006. ACM.

\bibitem{Mohaqeqi:2016:PFO:2997465.2997490}
M.~Mohaqeqi, M.~Nasri, Y.Xu, A.~Cervin, and K.E. Arz{\'e}n.
\newblock On the problem of finding optimal harmonic periods.
\newblock In {\em Proceedings of the 24th International Conference on Real-Time
  Networks and Systems}, RTNS '16, pages 171--180, New York, NY, USA, 2016.
  ACM.

\bibitem{7176034}
M.~Nasri and G.~Fohler.
\newblock An efficient method for assigning harmonic periods to hard real-time
  tasks with period ranges.
\newblock In {\em 2015 27th Euromicro Conference on Real-Time Systems (ECRTS)},
  volume~00, pages 149--159, July 2015.

\bibitem{DBLP:conf/rtas/ShihGGCS03}
C.{-}S. Shih, S.~Gopalakrishnan, P.~Ganti, M.~Caccamo, and L.~Sha.
\newblock Template-based real-time dwell scheduling with energy constraint.
\newblock In {\em Proceedings of the 9th {IEEE} Real-Time and Embedded
  Technology and Applications Symposium {(RTAS} 2003), May 27-30, 2003,
  Toronto, Canada}, page~19, 2003.

\bibitem{Sjodin1998}
M.~Sjodin and H.~Hansson.
\newblock Improved response-time analysis calculations.
\newblock In {\em Proc. IEEE Int. Real-Time Systems Symposium (RTSS'98)}, pages
  399--408, 1998.

\bibitem{TINDELL19951163}
K.~W. Tindell, A.~Burns, and A.J. Wellings.
\newblock Calculating controller area network (can) message response times.
\newblock {\em Control Engineering Practice}, 3(8):1163 -- 1169, 1995.

\bibitem{Tin94b}
K.W. Tindell, A.~Burns, and A.J.Wellings.
\newblock An extendible approach for analysing fixed priority hard real-time
  tasks.
\newblock {\em Real-Time Systems (RTSJ'94)}, 6(2):133--151, March 1994.

\bibitem{d7c4f1a1dce54ef9a8883089d808cec5}
Y.~Xu, A.~Cervin, and K.~E. Arz{\'e}n.
\newblock {\em {LQG}-Based Scheduling and Control Co-Design Using Harmonic Task
  Periods}.
\newblock Technical Reports TFRT-7646. Department of Automatic Control, Lund
  Institute of Technology, Lund University.

\end{thebibliography}
\bibliographystyle{plain} 
\end{document}